\documentclass[11pt]{amsart}

\usepackage{amssymb,amsmath,amssymb}
\usepackage{abstract}
\usepackage{amsthm}
\usepackage{mathrsfs}
\usepackage{graphicx}
\usepackage{subfigure}
\usepackage{caption}
\usepackage{float}
\usepackage{epstopdf}
\usepackage{makecell,rotating,multirow,diagbox}
\usepackage{booktabs}
\usepackage{makecell}
\usepackage{color}
\usepackage{url}
\usepackage[colorlinks,linkcolor=blue,anchorcolor=blue,citecolor=red]{hyperref}
\usepackage{appendix}
\usepackage{enumerate}
\usepackage{cite}
\usepackage{graphicx}
\usepackage{appendix}


\def\bm{\textbf{m}}
\def\d{\mathrm{d}}

\def\be{\textbf{e}}
\def\bx{\boldsymbol{x}}

\newcommand{\abs}[1]{\lvert#1\rvert}

\newtheorem{theorem}{Theorem}[section]
\newtheorem{assumption}{Assumption}

\newtheorem{lemma}{Lemma}[section]
\newtheorem{remark}{Remark}[section]
\newtheorem{proposition}{Proposition}[section]
\newtheorem{definition}{Definition}[section]

\begin{document}

	\title[Continuum Limit of Hexagon-located Spin Dynamics]{Continuum Limit of Spin Dynamics on Two-dimensional Hexagonal Lattice}
	\author{Mengjia Bai, Jingrun Chen, Zhiwei Sun, Yun Wang}
	\maketitle
	\begin{abstract}
  This study investigates the atomistic spin system in $\rm CrCl_{3}$, which exhibits topologically nontrivial meron structures within its layered hexagonal lattice framework. We analyze the complete model of discrete spin dynamics on a two-dimensional hexagonal lattice and demonstrate its convergence to the continuum Landau-Lifshitz-Gilbert equation in the weak sense.
The primary challenge lies in defining appropriate difference quotient and interpolation operators for the hexagonal lattice since the loss of symmetry.
To address these, we utilized a one-step difference quotient for the 2nd nearest neighbors and introduced novel multi-step difference quotients for the 1st and 3rd nearest neighbors, enabling the integration by parts formula. 
Additionally, we generalized Ladysenskaya's interpolation operator for hexagonal lattices and provided an alternative strategy for the convergence procedure by applying an isometric mapping property. 
This work provides necessary tools for analyzing weak convergence in other atomistic nonlinear problems on hexagonal lattices towards the continuum limit.

	\noindent{$\mathbf{Key\ words:}$ Spin dynamics, hexagonal lattice, Landau-Lifshitz-Gilbert equation, interpolation operator, difference method.  }
	
		
	\end{abstract}
	\section{Introduction}
	Nature favors the hexagonal lattice, seen in honeycombs created by insects and graphene, a cutting-edge technology. Nanotechnology has led to the discovery of (quasi) two-dimensional materials with intriguing electronic and magnetic properties, such as superconducting states in twisted graphene and meron spin textures in $\rm CrCl_{3}$, both possessing a hexagonal lattice. To understand the dynamics of physical quantities in these materials, two models have been developed: (1) an atomistic model that uses ordinary differential equations (ODEs) with degrees of freedom defined on the hexagonal lattice; (2) a phenomenological model that employs partial differential equations (PDEs), treating time and space variables continuously. While it is convenient to solve the phenomenological model, the atomistic model provides an accurate description of the underlying material. Therefore, it is natural to ask about the connection between atomistic and continuum models.
	
	There is an extensive literature addressing material properties from various perspectives, including mechanical \cite{blanc_molecular_2002, e_cauchyborn_2007}, electronic \cite{e_continuum_2007,e_electronic_2011,kohn_sham_equation_2013}, and magnetic properties \cite{sulem_continuous_1986,desimone_reduced_2002}, among others. For mechanical property, the consistency between atomistic and continuum models is derived in \cite{e_continuum_2007} with stability proven in \cite{e_cauchyborn_2007}. The electronic property aims to understand solids at the quantum or molecular level with rigorous derivation of continuum models established in \cite{e_continuum_2007}. Magnetic property involves deriving the continuum limit of a discrete spin system on a cubic lattice as well as identifying a physically relevant thin-film limit starting from a three-dimensional micromagnetic model  \cite{sulem_continuous_1986, desimone_reduced_2002}. 
 
 Additionally, mathematical works have been done to pass the limit from ODE systems on hexagonal lattices to continuum PDE models \cite{buranay_hexagonal_2020,dosiyev_approximation_2014}. However, these studies assume sufficient smoothness of the continuum solutions. This raises the question: if the solution of the continuum model exists only in the weak sense, is it reliable for describing the behavior of an atomistic system on a hexagonal lattice from a macroscopic scale?
 In this study, we address this question by focusing on the dynamics of discrete spins in $\rm CrCl_{3}$.
 The discrete spin model is known for its excitation of various topologically nontrivial structures like domain walls, skyrmions, and merons, 
Recently, merons, along with antimerons, have been demonstrated to exist in the magnetic structure of hexagonal lattice $\rm CrCl_{3}$ \cite{augustin_properties_2021}. In this work, we analyze the complete model of discrete spin dynamics on a two-dimensional hexagonal lattice,
 and establish a rigorous passage from the discrete spin system to the well-known Landau-Lifshitz-Gilbert (LLG) equation in the weak sense.

To present the main finding, we begin by introducing the spin model for $\rm CrCl_{3}$. The two-dimensional hexagonal lattice point is denoted by $\bx_i$, where $i\in \mathbb{N}$, and each atomic spin is represented by $\bm(\bx_i)$ with a unit length. The equilibrium state of spin $\bm(\bx_i)$ is determined by minimizing its energy. In the case of $\rm CrCl_{3}$, this can be expressed as per \cite{augustin_properties_2021}:
\begin{equation}
    \label{energy from reference}
		\begin{split}
			\mathcal{H}[\bm]
			=
			\mathcal{H}_{\mathrm{e}}[\bm]
   +
   \mathcal{H}_{\mathrm{b}}[\bm]
   +
   \mathcal{H}_{\mathrm{a}}[\bm]
   +
   \mathcal{H}_{\mathrm{z}}[\bm].
		\end{split}
\end{equation}
The different contributions to the energy in \eqref{energy from reference} are:
\begin{enumerate}[1.]
    \item \textbf{Bilinear Exchange Energy}. The fundamental property of ferromagnetic materials is that the spins experience the presence of an exchange field, decribed by
    \begin{equation*}
    \begin{aligned}
        \mathcal{H}_{\mathrm{e}}[\bm]
        = &
        -\frac{1}{2}\sum_{i,j\in \mathbb{N}}J_{ij} \bm(\boldsymbol{x}_{i})\cdot\bm(\boldsymbol{x}_{j}).
   \end{aligned}
   \end{equation*}
   Here, $\bm(\boldsymbol{x}_{i})$ and $\bm(\boldsymbol{x}_{j})$ are localized magnetic moments on $\rm Cr$ atomic sites $i$ and $j$ coupled by pair-wise exchange interactions. It is usually considered up to the third nearest neighbour for $J_{ij}\;(J_{1}-J_{2}-J_{3})$, where $J_k$ is constant denote the magnetic quantities of the $k$th nearest neighbour; see Figure \ref{Fig4}. Here, we use the notation $J_{ij}\;(J_{1}-J_{2}-J_{3})$ to represent the following: suppose $\boldsymbol{x}_{i}$ is the $k$th nearest neighbour of $\boldsymbol{x}_{j}$, then
   \begin{equation*}\label{define J}
       J_{ij} = 
       \left\{\begin{aligned}
           J_{k},&\qquad \text{for $k=1,2,3$};\\
           0,&\qquad \text{for $k>3$}.
       \end{aligned}\right.
   \end{equation*}

      \item \textbf{Biquadratic (BQ) Exchange Energy}.
    BQ exchange interactions  involving the hopping of
two electrons are critical in the elucidation of the magnetic features of layered magnetic materials \cite{kartsev_biquadratic_2020}. This is described by
    \begin{equation*}
        \begin{aligned}
            \mathcal{H}_{\mathrm{b}}[\bm]
            =
            -\frac{1}{2}
			\sum_{i,j\in \mathbb{N}} K_{ij}\big(\bm(\boldsymbol{x}_{i})\cdot\bm(\boldsymbol{x}_{j})\big)^{2}.
        \end{aligned}
    \end{equation*}
    Its strength is given by the constant $K_{ij}\ (K)$ where  only the nearest neighbours taken into account, i.e., 
    suppose $\boldsymbol{x}_{i}$ is the $k$th nearest neighbour of $\boldsymbol{x}_{j}$, then
   \begin{equation*}
       K_{ij} = 
       \left\{\begin{aligned}
           K,&\qquad \text{for $k=1$};\\
           0,&\qquad \text{for $k>1$}.
       \end{aligned}\right.
   \end{equation*}
    This is the simplest and most natural form of non-Heisenberg coupling. \\

    \item \textbf{Anisotropy Energy}. The anisotropy energy can  be described by the terms of the form
    \begin{equation*}
        \begin{aligned}
            \mathcal{H}_{\mathrm{a}}[\bm]
            = &
            -
			\sum_{i\in \mathbb{N}}\lambda_{i}\big(\bm(\boldsymbol{x}_{i})\cdot \mathbf{e}^{x}\big)^{2}\\
   &- \frac{1}{2}
			\sum_{i,j\in \mathbb{N}}L_{ij}\big(\bm(\boldsymbol{x}_{i})\cdot\mathbf{e}^{z}\big)\big(\bm(\boldsymbol{x}_{j})\cdot\mathbf{e}^{z}\big).
        \end{aligned}
    \end{equation*}
    Here, the first term is induced by electronic structure of the underlying crystalline lattice with a preferred orientation $\mathbf{e}^{x}$. We assume $\lambda_{i}$ is the single-ion magnetic anisotropy coefficient. The second term is the symmetric anisotropy exchange energy that can be deduced when considering relativistic effect. It is considered up to the third nearest neighbour for $L_{ij}\;(L_{1}^{(ij)}-L_{2}^{(ij)}-L_{3}^{(ij)})$, where $L_k^{(ij)}\;(k=1,2,3)$ denote the magnetic quantities of the $k$th nearest neighbour. More precisely, suppose $\boldsymbol{x}_{i}$ is the $k$th nearest neighbour of $\boldsymbol{x}_{j}$, then we have:
    \begin{equation*}
       L_{ij} = 
       \left\{\begin{aligned}
           L_{k}^{(ij)},&\qquad \text{for $k=1,2,3$};\\
           0,&\qquad \text{for $k>3$}.
       \end{aligned}\right.
   \end{equation*}
    \\
    
    \item \textbf{Zeeman Energy}. In the presence of an external magnetic field $\mathbf{B}_i$, the magnetization tends to align with it. 
This translates into an energy term of the form
 \begin{equation*}
        \begin{aligned}
            \mathcal{H}_{\mathrm{z}}[\bm]
            =
            -\sum_{i\in\mathbb{N}}\mu\bm(\boldsymbol{x}_{i})\cdot\mathbf{B}_i,
        \end{aligned}
    \end{equation*}
    where $\mu$ is the magnetic permeability of vacuum.
\end{enumerate}
Furthermore, note that we consider the hexagonal lattice over $\mathbb{R}^2$ without the compactness, the following assumption on the coefficients are needed such that the anisotropy energy and Zeeman energy are finite.
\begin{assumption}\label{assume}
    We assume that there exists continuous functions $\lambda(\boldsymbol{x})$, $L_k(\boldsymbol{x})$, $k=1,2,3$ and $\mathbf{B}(\boldsymbol{x})$ which are $L^1(\mathbb{R}^2)$ bounded. Furthermore, it holds
    \begin{equation*}
        \lambda_i = \lambda (\boldsymbol{x}_i),
        \qquad 
        L_k^{(ij)} = \frac{1}{2}\Big(L_k(\boldsymbol{x}_i) + L_k(\boldsymbol{x}_j)\Big)
        \qquad 
        \mathbf{B}_i = \mathbf{B} (\boldsymbol{x}_i).
    \end{equation*}
\end{assumption}
	
	The time evolution of each atomic spin from any possible initial structure to its equilibrium state follows a discrete spin model that described by the Landau-Lifshitz-Gilbert (LLG) equation:
	\begin{equation}\label{llg equation on lattice}
		\frac{\d\bm}{\d t}
		=
		-\frac{\gamma}{1+\alpha^{2}}\Big\{\bm
		\times\mathbf{B}_{\rm{eff}}
		+
		\alpha\bm\times \big(\bm\times\mathbf{B_{\rm{eff}}}\big) \Big\}.
	\end{equation}
	Here, $\gamma>0$ represents the gyromagnetic constant, and $\alpha>0$ denotes the damping constant. The effective field, denoted as $\mathbf{B_{\rm{eff}}}$, is computed by
	\begin{equation}\label{discrete effect field on lattice}
		\mathbf{B_{\rm{eff}}}=-\frac{\delta \mathcal{H}}{\delta \bm}.
	\end{equation}
	
	\begin{figure}[htbp]
		\centering
		\includegraphics[scale=1.2]{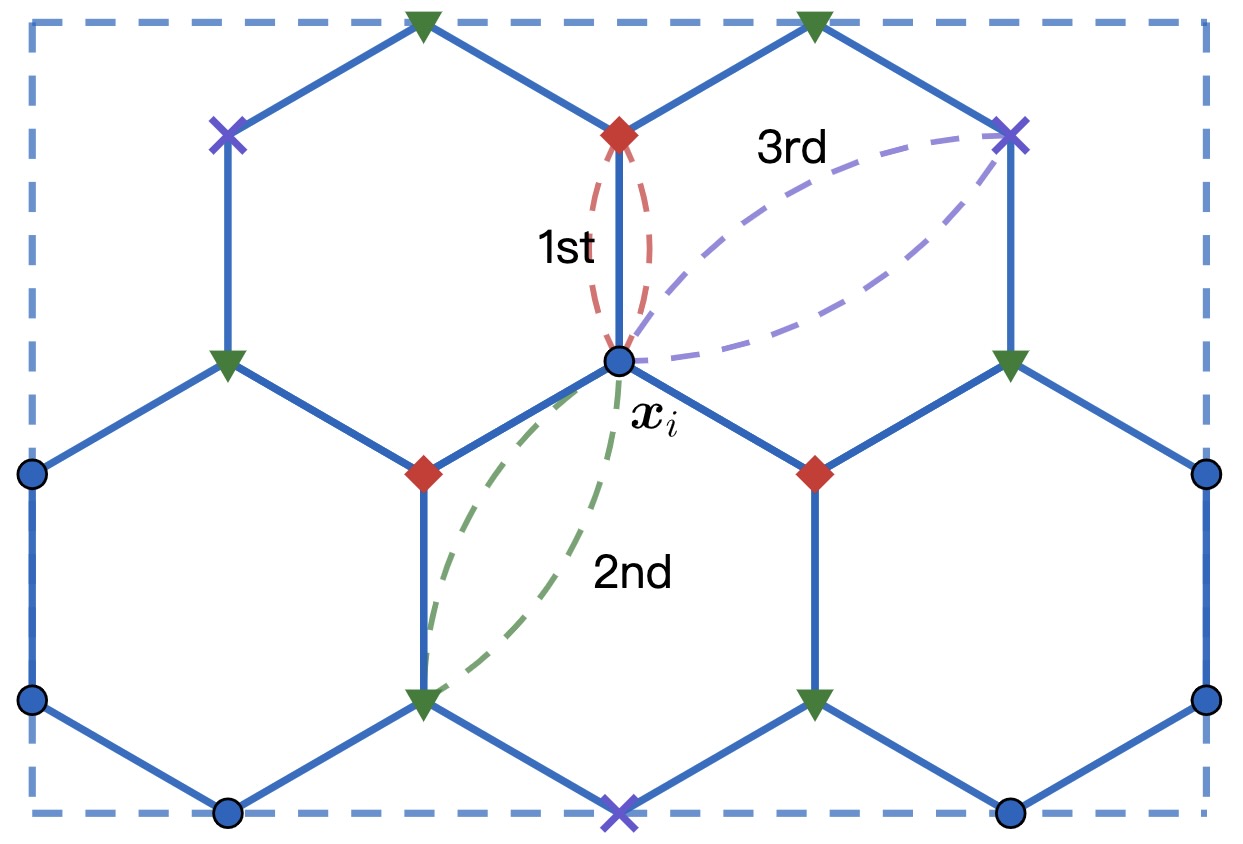}
		\caption{Illustration of the hexagonal lattice. Red squares, green triangles, and purple crosses, represent the 1st, 2nd, and 3rd-neighbor nodes of the reference node $\boldsymbol{x}_i$ with $i\in \mathbb{N}$.
		}
		\label{Fig4}
	\end{figure}

In an early work \cite{sulem_continuous_1986}, Sulem, Sulem, and Bardos analyzed the system \eqref{energy from reference}-\eqref{discrete effect field on lattice} on a square grid. They considered only the exchange energy in \eqref{energy from reference} and interactions up to the $1$st nearest neighbor, without including the damping term ($\alpha=0$). By utilizing Ladysenskaya's interpolation operator, they demonstrated that the solution to the discrete spin model can be regarded as an approximation of the weak solution of the continuum LLG equation. In this case, it was found that the exchange term corresponds to Laplacian. A similar outcome is obtained in \cite{alouges_global_1992} when accounting for damping. Furthermore, the results of 1D using the difference method are presented in \cite{zhou1991existence}. We refer to \cite{sulem_continuous_1986} which shows that the solution to the discrete spin model converges to the smooth solution of the continuum LLG equation if the initial data for \eqref{energy from reference}-\eqref{discrete effect field on lattice} is sufficiently smooth. However, while only a simple model on a square lattice was considered in \cite{sulem_continuous_1986}, we study a more ``physical" model with the full system \eqref{energy from reference}-\eqref{discrete effect field on lattice} on a two-dimensional hexagonal lattice. Our main objective is to consider the Cauchy problem for LLG system \eqref{energy from reference}-\eqref{discrete effect field on lattice}, and investigate its weak convergence as the rescaled lattice constant of the hexagonal lattice approaches $0$ towards  degenerated nonlinear equations given by  \eqref{Continuous equation}-\eqref{continuous eff field}.

The main concept is based on \cite{sulem_continuous_1986,ladyzhenskaya_boundary_1985}, but our focus is on a hexagonal lattice, which differs from the classical square lattice discussed in those works. Hence we need to introduce new difference quotients and interpolation operators specific to the hexagonal lattice and study their corresponding properties. 
The primary challenge lies in defining appropriate difference quotient and interpolation operators for the hexagonal lattice since the \textbf{loss of symmetry}, especially when considering the $1$st and $3$rd nearest neighborhoods.
Indeed, each node in square lattice is a 4-fold rotation center, which possesses the rotation symmetry operation as a cyclic group $C_4$. While each node in hexagonal lattice is only a 3-fold rotation center. This loss of symmetry is caused by the fact that the hexagonal lattice can be viewed as a union of two triangular lattices staggered from each other, denoted by $\mathcal{G} = \mathcal{G}_1 \cup \mathcal{G}_2$ (see definition in \eqref{define G1 G2}) where $\mathcal{G}$ represents the nodes of hexagonal lattice.
As a consequence, we find a one-step difference quotient can be used to 
conduct the convergence to the continuum operator for $2$nd nearest neighborhoods, since the nodes all appear in only one of the two sets, $\mathcal{G}_1$ and $\mathcal{G}_2$. However, when considering the $1$st and $3$rd nearest neighborhoods, the one-step difference quotient fails in integration by parts.
In order to overcome it, we construct several new multi-step difference quotients. Overall, we introduce the procedure as follows:
\begin{itemize}
    \item[-] Step 1. Defining the one-step difference quotients for $2$st nearest neighborhoods.
    \item[-] Step 2. Constructing new multi-step difference quotients for the $1$st and $3$rd nearest neighborhoods, which yield the formula of integration by parts.
    \item[-] Step 3. Introducing some auxiliary difference quotients to establish the connection between the discrete Laplace operator and our new multi-step difference quotient in Step 2.
\end{itemize}
The above argument of construction can be applicable to other parabolic problems on a hexagonal lattice. 
On the other hand, interpolation operators play an important role in connecting solutions of discrete and continuum models. As shown in \cite{ladyzhenskaya_boundary_1985}, in order to conduct the weak convergence argument, one needs three different interpolation operators which map the discrete function into step function, piece-wise linear function, and hybrid function that is stepped in one dimension and piece-wise linear in other dimensions. However, constructing the last one on hexagonal lattice is difficult. To address this problem, we provide an alternative strategy to prove the convergence by using only two interpolation operators. 
Parallel to the work in \cite{ladyzhenskaya_boundary_1985}, by utilizing the difference quotients and interpolation operators constructed in this work, we finally derive the convergence from the discrete spin model on hexagonal lattice to the continuum LLG equation.
As a by-product, our result provides a precise contribution on physical parameters in the atomistic model \eqref{energy from reference}
to the continuum LLG equation, as listed in Table \ref{Table 3}. 
Of particular interest, the bilinear exchange constant $J_k$ (for $k=1,2,3$) and the biquadratic exchange constant $K$ collectively form a continuum exchange constant given by $\frac{3}{4}J_{1} +\frac{9}{2}J_{2} +3J_{3} +\frac{3}{2}K$.

	\subsection{Parameterized Cauchy problem}
	To begin, we will establish the Cauchy problem for system \eqref{energy from reference}-\eqref{discrete effect field on lattice}, with the hexagonal lattice constant being parameterized. We denote $h^*$ as the dimensionless ratio between the lattice constant and characteristic length of the system, and $h$ as a small quantity satisfying $0\le h\le h^*$. Let $\bm_{h}$ be a discrete function defined on a hexagonal lattice with grid size $h$. Using these parameters, we can express the energy in a parameterized form:
   \begin{equation}\label{discrete energy}
		\begin{aligned}
			\mathcal{H}_h[\bm_h]
			=
			& \frac{(h^*)^2}{h^2}
   \Big( \mathcal{H}_{\mathrm{e}}[\bm]
   +
   \mathcal{H}_{\mathrm{b}}[\bm] \Big)
   +
   \mathcal{H}_{\mathrm{a}}[\bm]
   +
   \mathcal{H}_{\mathrm{z}}[\bm].
		\end{aligned}
	\end{equation}
	It follows that $\mathcal{H}_h[\bm_h] = \mathcal{H}[\bm_h]$ when $h=h^*$.
	The corresponding effective field $\mathbf{B}_{\rm{eff},h}(\boldsymbol{x}_{h})$ can be calculated by
	\begin{equation}\label{discrete effect field}
		\mathbf{B_{\rm{eff},h}}=-\frac{\delta \mathcal{H}_h}{\delta \bm_{h}}.
	\end{equation}
	Moreover, rescaling time as
	\begin{equation*}
		t^*=\frac{\gamma (h^*)^{2} }{1+\alpha^{2}} t,
	\end{equation*}
	and prescribing the initial data, then we set the Cauchy problem for LLG system \eqref{llg equation on lattice} with parameter $h$ by
	\begin{equation}\label{discrete LLG equation in tro}
		\left\{\begin{aligned}
			\frac{\d\bm_h}{\d t^*}(\boldsymbol{x}_{h})
			= &
			-\frac{1}{(h^*)^{2}}\Big\{\bm_{h}(\boldsymbol{x}_{h})
			\times\mathbf{B}_{\rm{eff},h}(\boldsymbol{x}_{h})\\
			& \qquad \qquad \quad +
			\alpha\bm_{h}(\boldsymbol{x}_{h})\times \big(\bm_{h}(\boldsymbol{x}_{h})\times\mathbf{B_{\rm{eff},h}}(\boldsymbol{x}_{h})\big) \Big\},\\
			\bm_{h}(\boldsymbol{x}_{h},0)
			=& \bm_{\rm{init},h}(\boldsymbol{x}_{h}),\quad
			\left|\bm_{\rm{init},h}(\boldsymbol{x}_{h}) \right| =1.
		\end{aligned}\right.
	\end{equation}
As we are interested in phenomena occurring at large scales, compared to the lattice mesh size, we take the limit of \eqref{discrete LLG equation in tro} as $h$ approaches zero.

	\subsection{Main result}
	We denote $H^s$ (and $L^p$) for the Sobolev space $H^s(\mathbb{R}^2;\mathbb{R}^3)$ with $s\geq 1$ (and the Lebesgue space $L^p(\mathbb{R}^2;\mathbb{R}^3)$ with $p\in [1,\infty]$, respectively). Additionally, we denote $\dot{H}^s$ as the homogeneous Sobolev space whose seminorm is determined through Fourier transform:
	\begin{equation*}
		\|u\|^2_{\dot{H}^s}
		:=
		\frac{1}{2\pi}\int_{\mathbb{R}^2}|\hat{u}(\boldsymbol{\xi})|^2|\boldsymbol{\xi}|^{2s}\mathrm{d}\boldsymbol{\xi},\quad
		\mbox{where}
		\quad
		\hat{u}(\boldsymbol{\xi})
		=
		\int_{\mathbb{R}^2}e^{-i\boldsymbol{x}\cdot\boldsymbol{\xi}}u(\boldsymbol{x})\mathrm{d}\boldsymbol{x}.
		\end{equation*}\
	In particular, $\|u\|_{\dot{H}^1}=\|\nabla u\|_{L^2}$.
	
	Formally, taking the limit as $h \rightarrow 0$ for the discrete system given by equations \eqref{discrete energy}-\eqref{discrete LLG equation in tro} yields the following continuum LLG equation:
	\begin{equation}\label{Continuous equation}
		\left\{
		\begin{aligned}
			&\frac{\partial\bm}{\partial t}
			=
			-\bm\times\mathbf{B_{\rm{eff}}}
			-\alpha\bm\times(\bm\times\mathbf{B_{\rm{eff}}}), \\
			&\bm({\boldsymbol{x}},0)
			=
			\bm_{\rm{init}}({\boldsymbol{x}}), \quad
			\abs{\bm_{\rm{init}}({\boldsymbol{x}})} = 1.
		\end{aligned}
		\right.
	\end{equation}
Here the continuum effective field $\mathbf{B}_{\rm{eff}}$ is given by
\begin{align}\label{continuous eff field}
	\begin{split}
		\mathbf{B}_{\rm{eff}}
		=
		J^*\Delta\bm
		+L^*(\bm\cdot\mathbf{e}^{z})\mathbf{e}^{z}
		+\lambda^*(\bm\cdot\mathbf{e}^{x})\mathbf{e}^{x}
		+\mu^*\mathbf{B}.
	\end{split}
\end{align}	
The coefficients $J^*$, $L^*$, $\lambda^*$, and $\mu^*$ are precisely linked to the parameters in the discrete model \eqref{discrete energy} as listed in Table \ref{Table 3}.
\begin{table}[H]
	\centering
	\caption{Relationship between parameters in the atomistic and continuum Landau-Lifshitz-Gilbert equations. $h^*$ is the dimensionless lattice constant.}
	\label{Table 3}
	\begin{tabular}{|c|c|}
		\hline
		Parameter &  Relationship \\ \hline
		Exchange & $J^* = \frac{3}{4}J_{1} +\frac{9}{2}J_{2} +3J_{3} +\frac{3}{2}K$\\\hline
		Anisotropy I & $L^* = (h^*)^{-2} (L_{1}+L_{2}+L_{3})$\\\hline
		Anisotropy II& $\lambda^* = (h^*)^{-2} \lambda$\\\hline
		External & $\mu^* = (h^*)^{-2} \mu$ \\\hline
	\end{tabular}
\end{table}
	The weak solution to \eqref{Continuous equation}-\eqref{continuous eff field} is defined in the following sense:
	\begin{definition}
		Suppose $\abs{\bm(\boldsymbol{x},t)} = 1$ a.e. in $\mathbb{R}^2\times [0,T]$. The function $\boldsymbol{m}$ is said to be a weak solution of \eqref{Continuous equation}-\eqref{continuous eff field}, if
		\begin{equation*}
			\begin{aligned}
				\nabla\bm \in L^\infty(0,T; L^2),\quad
				\frac{\partial \bm}{\partial t} \in L^2(0,T; L^2),\quad
				\nabla\cdot (\bm\times \nabla \bm) \in L^2(0,T; L^2),
			\end{aligned}
		\end{equation*}
		and there holds for any $\boldsymbol{\varphi}\in C_c^\infty (\mathbb{R}^{2}\times(0,T))$ that
			\begin{equation}\label{define weak solution}
			\begin{aligned}
			\int_{0}^{T}\int_{\mathbb{R}^{2}} \partial_{t}\bm\cdot \boldsymbol{\varphi}
			\,\mathrm{d}\boldsymbol{x}\, \mathrm{d} t
			= &
			- \int_{0}^{T}\int_{\mathbb{R}^{2}} \bm \times \mathbf{B}_{\rm{eff}}
			\cdot \boldsymbol{\varphi} \,\mathrm{d}\boldsymbol{x}\, \mathrm{d} t\\
			& -
			\alpha \int_{0}^{T}\int_{\mathbb{R}^{2}} \bm\times(\bm\times\mathbf{B_{\rm{eff}}})\cdot \boldsymbol{\varphi} \,\mathrm{d}\boldsymbol{x} \,\mathrm{d} t
			\end{aligned}
		\end{equation}
	with initial value $\bm(x,0) = \bm_{\rm{init}}(x)$.
	\end{definition}
Note that the definition above is reasonable because $\bm \times \Delta\bm =\nabla\cdot(\bm \times\nabla\bm)\in L^2(0,T; L^2)$.
	
	Now let us state the main results.
		\begin{theorem}(Convergence). \label{main theorem}
		Suppose the Assumption \eqref{assume} holds. Assume the initial data satisfies $\nabla\bm_{\rm{init}}(\boldsymbol{x})\in L^{2}(\mathbb{R}^{2})$ in equation \eqref{Continuous equation}.
  Given $\bm_{\rm{init},h}(\boldsymbol{x})\in \widetilde{H}^{1}_{loc,h} (\mathcal{G})$ and $p_{h}\bm_{\rm{init},h} \rightarrow\bm_{\rm{init}}$ strongly in $\dot{H}^{1} (\mathbb{R}^2)$, the discrete LLG equation \eqref{discrete energy}-\eqref{discrete LLG equation in tro} admits a unique solution $\bm_h$. The solution $\bm_h$ converges to a  weak solution $\bm^*$ to the continuum LLG equation \eqref{Continuous equation}-\eqref{continuous eff field} in the following way,
		\begin{equation*}
			\begin{gathered}
				p_h \bm_h,\ q_h\bm_h \rightarrow \bm^* \mbox{ strongly in $C([0,T]; L^2_{loc} (\mathbb{R}^2))$,} \\	
			p_h \bm_h \rightharpoonup \bm^* \mbox{ weakly in $L^\infty(0,T;\dot{H}^1 (\mathbb{R}^2))$.}
			\end{gathered}
		\end{equation*}
	
\end{theorem}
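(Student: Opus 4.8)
The plan is to follow the compactness framework of Sulem--Sulem--Bardos, but replacing every ingredient of the square-lattice calculus by the hexagonal-lattice tools assembled above: the one-step and multi-step difference quotients for the 1st/2nd/3rd neighbours and the two interpolation operators $p_h$ (piecewise linear) and $q_h$ (step). The scheme is the classical one: solve the discrete problem, obtain $h$-uniform a priori bounds from energy dissipation, extract a limit by compactness, and finally identify that limit as a weak solution via weak--strong convergence in the nonlinear terms.

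First I would settle existence and uniqueness of $\bm_h$. For fixed $h$, system \eqref{discrete LLG equation in tro} is an ODE whose right-hand side is polynomial in the nodal values $\bm_h(\bx_i)$ with coefficients bounded under Assumption \ref{assume}, hence locally Lipschitz; Picard--Lindel\"of gives a unique local solution. Dotting the equation with $\bm_h(\bx_i)$ shows $\frac{\d}{\d t}\abs{\bm_h(\bx_i)}^2 = 0$, so $\abs{\bm_h(\bx_i)}\equiv 1$; this uniform bound precludes blow-up and extends the solution to $[0,T]$. Next I would derive the $h$-uniform estimates. Testing the equation against the discrete effective field and using $\abs{\bm_h}=1$ yields the dissipation identity
\begin{equation*}
\frac{\d}{\d t}\,\mathcal{H}_h[\bm_h] = -c_\alpha\sum_{i}\bigl|\bm_h(\bx_i)\times\mathbf{B}_{\mathrm{eff},h}(\bx_i)\bigr|^2 \le 0,\qquad c_\alpha>0.
\end{equation*}
Thus $\mathcal{H}_h[\bm_h](t)\le\mathcal{H}_h[\bm_{\mathrm{init},h}]$, and the hypothesis $p_h\bm_{\mathrm{init},h}\to\bm_{\mathrm{init}}$ in $\dot H^1$ keeps the initial energy $h$-uniformly bounded. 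The exchange part of $\mathcal{H}_h$ controls the discrete Dirichlet energy, giving $\|p_h\bm_h\|_{L^\infty(0,T;\dot H^1)}\le C$; integrating the dissipation in time bounds $\sum_i\int_0^T\abs{\bm_h\times\mathbf{B}_{\mathrm{eff},h}}^2$, and since $\abs{\partial_t\bm_h}\le(1+\alpha)\abs{\bm_h\times\mathbf{B}_{\mathrm{eff},h}}$ (using $\abs{\bm_h}=1$), this yields $\|\partial_t\bm_h\|_{L^2(0,T;L^2)}\le C$, with $C$ independent of $h$.

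I would then pass to a subsequence by compactness. The gradient bound gives $p_h\bm_h\rightharpoonup\bm^*$ weakly-$*$ in $L^\infty(0,T;\dot H^1)$. Combining the spatial bound (local Rellich compactness of $\dot H^1\hookrightarrow L^2_{loc}$) with the uniform $L^2(0,T;L^2)$ bound on $\partial_t\bm_h$, an Aubin--Lions--Simon argument produces strong convergence $p_h\bm_h\to\bm^*$ in $C([0,T];L^2_{loc})$; the isometric-type comparison between $p_h$ and $q_h$ established earlier (their $L^2$ difference is $O(h)$ times the gradient bound) transfers the same limit to $q_h\bm_h$. Strong $L^2_{loc}$ convergence together with $\abs{\bm_h}=1$ forces $\abs{\bm^*}=1$ a.e.

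The final and hardest step is passing to the limit in the weak formulation to identify $\bm^*$ as a weak solution. The linear anisotropy and Zeeman contributions converge directly from Assumption \ref{assume} and the strong convergence of $q_h\bm_h$. The delicate term is the nonlinear exchange contribution, which must converge to $J^*\bm^*\times\Delta\bm^* = J^*\,\nabla\cdot(\bm^*\times\nabla\bm^*)$. Here I would invoke the integration-by-parts formula for the multi-step difference quotients (Step 2 of the construction) to move the discrete exchange operator off $\bm_h$ and onto the test function $\boldsymbol{\varphi}$, rewriting this contribution as minus an integral of $(\bm_h\times\nabla_h\bm_h)\cdot\nabla\boldsymbol{\varphi}$. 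The factor $\bm_h\times\nabla_h\bm_h$ is a product of a strongly convergent factor ($q_h\bm_h\to\bm^*$ in $L^2_{loc}$) and a weakly convergent factor ($\nabla_h\bm_h\rightharpoonup\nabla\bm^*$ in $L^2$), so weak--strong convergence gives $\bm_h\times\nabla_h\bm_h\rightharpoonup\bm^*\times\nabla\bm^*$; the auxiliary difference quotients (Step 3) are exactly what guarantee that the discrete exchange operator reproduces the Laplacian with the constant $J^*=\tfrac34 J_1+\tfrac92 J_2+3J_3+\tfrac32 K$ of Table \ref{Table 3}. The main obstacle is precisely this identification on the hexagonal lattice: because $\mathcal{G}=\mathcal{G}_1\cup\mathcal{G}_2$ loses the four-fold symmetry, the 1st- and 3rd-neighbour difference quotients do not integrate by parts in a single step, and I must verify that the multi-step reformulation leaves no residual cross terms that survive as $h\to0$, and that the strong and weak factors are evaluated on compatible stencils so the weak--strong product limit is legitimate. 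Once the nonlinear exchange term is identified, collecting all contributions yields \eqref{define weak solution}, while the regularity $\nabla\bm^*\in L^\infty(0,T;L^2)$, $\partial_t\bm^*\in L^2(0,T;L^2)$, and $\nabla\cdot(\bm^*\times\nabla\bm^*)\in L^2(0,T;L^2)$ follows from the uniform estimates and lower semicontinuity, completing the proof.
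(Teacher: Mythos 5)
Your skeleton --- Picard existence with $|\bm_h|\equiv 1$, energy dissipation giving $h$-uniform bounds on the discrete Dirichlet energy and on $\partial_t\bm_h$, Aubin--Lions compactness through $p_h$ and $q_h$, and identification of the limit by moving difference quotients onto the test function --- is the same route the paper takes. The genuine gap is in the exchange term, which you treat as if the whole effective field were a discrete Laplacian amenable to a one-strong-factor-times-one-weak-factor limit. It is not: by \eqref{new form of discrete effect field} the biquadratic part splits as
$\mathbf{B}_2^* = 2l_1K\widetilde{\Delta}_1\bm_h + 2K\sum_{j=1}^3\big(\bm_h\cdot\widetilde{D}^{1,j}\bm_h\big)\widetilde{D}^{1,j}\bm_h$,
and only the first summand is a Laplacian (it is what produces the $\tfrac32 K$ in $J^*$). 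After crossing with $\bm_h$ and testing, the second summand is \emph{quadratic} in $\widetilde{D}^{1,j}\bm_h$, i.e.\ a product of two factors that converge only weakly in $L^2$; your weak--strong argument does not apply to it, and a product of two weakly convergent sequences need not converge to the product of the limits. The paper disposes of this term by a cancellation specific to $|\bm_h|=1$: under the swap $\bx_h\leftrightarrow\bx_h^{1,j}$ the scalar $\bm_h\cdot\widetilde{D}^{1,j}\bm_h$ is symmetric while $\bm_h\times\widetilde{D}^{1,j}\bm_h$ changes sign, so the whole expression equals $\tfrac{h}{2}$ times its own difference quotient (equation \eqref{term 1 in K part}); integrating by parts onto $\boldsymbol{\varphi}_h$ and invoking \eqref{bound of gradient} then gives an $O(h)$ bound. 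Without this (or an equivalent) step, your claimed limit with $J^*=\tfrac34 J_1+\tfrac92 J_2+3J_3+\tfrac32 K$ is unjustified. The same issue, in milder form, affects the anisotropy: $\mathbf{B}_3^*$ contains the extra second-order field $\mathbf{B}_3^{(2)}$, weighted by $h^2/(h^*)^2$, which is not ``linear, converging directly'' but must be shown to vanish by a separate integration-by-parts and H\"older estimate, as the paper does.

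Two smaller points. First, you defer rather than prove exactly the lattice-specific steps the theorem turns on: that the multi-step reformulation leaves no surviving cross terms is the content of the decomposition \eqref{transform from one step to multi step} together with the recombination $\widetilde{\mathbb{D}}^{k}_{y}=\widetilde{\mathfrak{D}}^{k}_{1}+\widetilde{\mathfrak{D}}^{k}_{1,B}$ in \eqref{formula of three step difference quotion} (usable only after the test-function factors converge strongly, Lemma \ref{lemma: convergence of discrete difference}); and the identification of the weak limits of the discrete gradients, which you use implicitly, is proved in the paper by duality against $C_c^\infty$ test functions through the integration-by-parts formulas \eqref{integration by parts for k=2} and Proposition \ref{integration by parts} --- this duality argument is precisely the announced substitute for Ladyzhenskaya's third interpolation operator $r_h$, which is unavailable on the hexagonal lattice. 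Second, your dissipation identity is written for $\mathcal{H}_h$ from \eqref{discrete energy}, but on the infinite lattice this energy diverges for unit-length spin fields (each bond contributes roughly $-J_{ij}$); the dissipation must be run on the renormalized difference-quotient energy $\mathcal{H}_h^*$ of \eqref{new form of energy}, whose exchange part is the discrete Dirichlet energy and whose lower-order parts are finite thanks to Assumption \ref{assume}.
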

Here we note that $q_h,\ p_h$ are linear interpolation operators on a hexagonal lattice,  whose definitions and properties will be introduced in Section \ref{sec:interplolation to Hexagonal Lattice} and Propositions \ref{define qh} and \ref{define ph}. For a vector-valued function $\mathbf{m}_h$, $q_h \mathbf{m}_h,\ p_h \mathbf{m}_h$ are defined to be $ (q_h m_h^1,\, q_h m_h^2,\, q_h m_h^3)$ and $ (p_h m_h^1,\, p_h m_h^2,\, p_h m_h^3)$ respectively.  


	\begin{remark}
		The uniqueness of the weak solution obtained in Theorem \ref{main theorem} is unclear, as discussed in \cite{alouges_global_1992}.
		However, if a unique strong solution $\bm\in C^3(\mathbb{R}^2\times[0,T])$ exists, then it can be shown that $\bm^*=\bm$, which is referred to as ``weak-strong uniqueness" \cite{DIFRATTA2020103122}.
	\end{remark}

This article is organized as follows. Section \ref{sec: Notation on hexagonal} introduces the notation of difference quotients on a hexagonal lattice and derives the discrete formula for integration by parts. In Section \ref{sec: proof of convergence}, we deduce the weak convergence procedure and prove Theorem \ref{main theorem} using a priori energy estimate, leaving the calculation for later. Section \ref{sec:interplolation to Hexagonal Lattice} presents the construction and proof of properties of proper interpolation operators on a hexagonal lattice. Finally, in the Appendix, we transform the LLG system \eqref{discrete LLG equation in tro} into a discrete PDE type equation and derive uniform estimates for discrete solutions.

	\section{Notations on hexagonal lattice}\label{sec: Notation on hexagonal}
	Let $\mathcal{G}$ denote the collection of all the nodes of hexagonal lattice, with the parameterized side length $h>0$.
	Note that it takes account up to three layers of nearest neighbors of point-to-point interactions in the energy \eqref{energy from reference}. For any point $\bx_h\in \mathcal{G}$, we use $N_{k}$, $k=1,2,3$, to represent the number of nearest neighbors of $\bx_h$ in the $k$th layer, with
	\begin{align*}
		N_{1}=3,\ \ N_{2}=6,\ \ N_{3}=3,
	\end{align*}
	and write the $j$th neighbor in the $k$th layer of $\boldsymbol{x}_{h}$ by $\boldsymbol{x}_{h}^{k,j}$ where $j=1,\dots, N_{k}$. Figure \ref{Fig1} shows the arrangement mode of $\boldsymbol{x}_{h}^{k,j}$ in relation to $\bx_h$.
	It is worth noting that the points in $\mathcal{G}$ can be classified into two categories based on the distribution of their 1st-layer nearest neighbors: regular triangle type (Figure \ref{Fig1}(a)) and inverted triangle type (Figure \ref{Fig1}(b)). Let us write $\mathcal{G} = \mathcal{G}_1 \cup \mathcal{G}_2$, where
 \begin{equation}\label{define G1 G2}
	\begin{aligned}
		\mathcal{G}_{1}&=\big\{\boldsymbol{x}_{h}\in\mathcal{G}\ \big|\ \mbox{the nearest neighbors of $\boldsymbol{x}_{h}$ form a regular triangle }\big\},\\
		\mathcal{G}_{2}&=\big\{\boldsymbol{x}_{h}\in\mathcal{G}\ \big|\ \mbox{the nearest neighbors of $\boldsymbol{x}_{h}$ form an inverted triangle }\big\}.
	\end{aligned}
 \end{equation}
	
	\begin{figure}[htbp]
		\centering
		\includegraphics[scale=0.4]{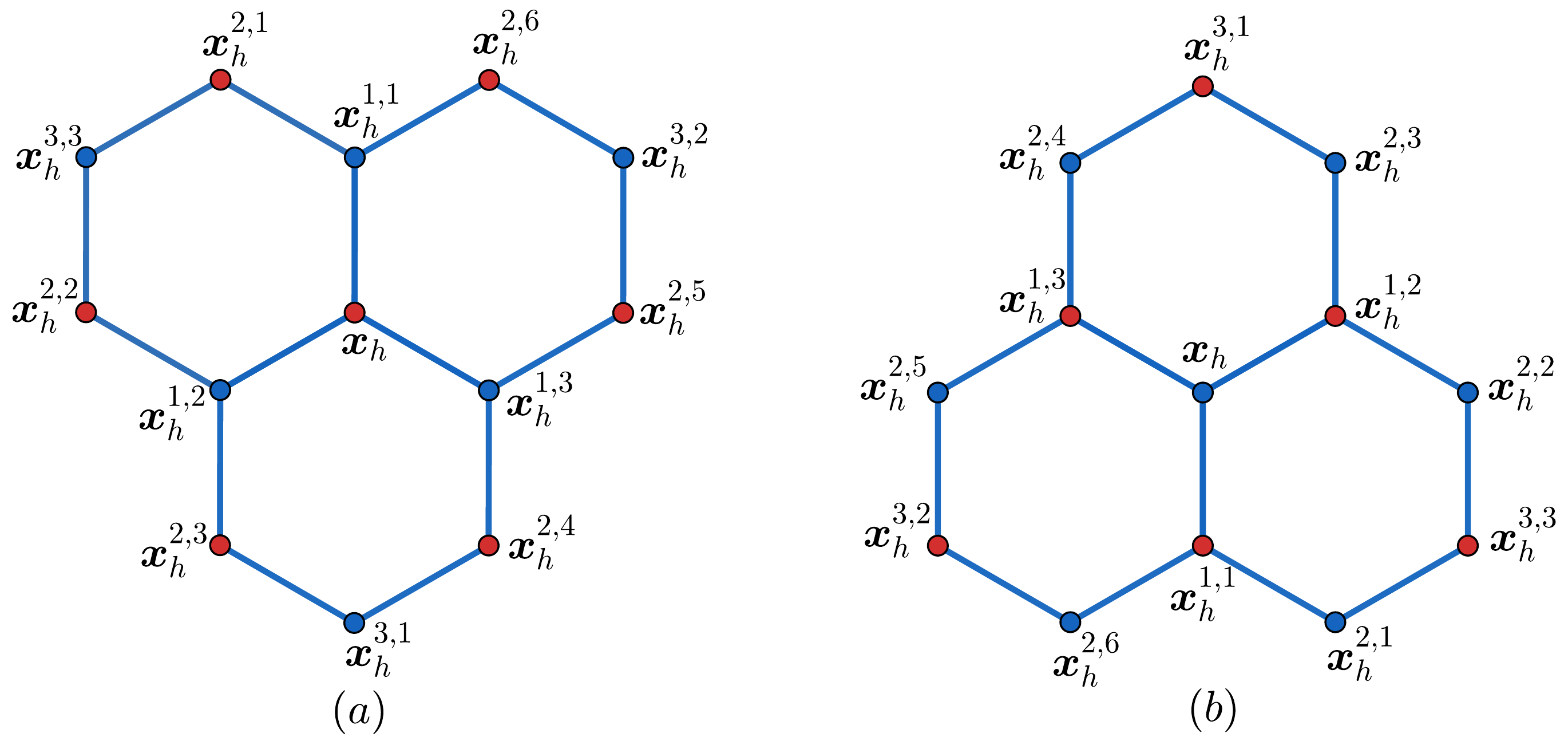}
		\caption{The $1$st, $2$nd, and $3$rd nearest neighbors of $\boldsymbol{x}_h\in \mathcal{G}$ with two categories: (a), $\boldsymbol{x}_h\in\mathcal{G}_1$; (b), $\boldsymbol{x}_h\in\mathcal{G}_2$}.
		\label{Fig1}
	\end{figure}

	\subsection{One-step difference quotients and discrete space}\label{sec: diff operator}
	Let us introduce the difference quotients and Laplace operator on a hexagonal lattice. We can define the one-step difference quotient for the $k$th nearest neighbors and $j$th direction as follows:
	\begin{equation}\label{approximations of derivatives1}
		\begin{aligned}
			\widetilde{D}^{k,j}u_{h}(\boldsymbol{x}_{h})
			=&
			\frac{u_{h}(\boldsymbol{x}_{h}^{k,j})-u_{h}(\boldsymbol{x}_{h})}{d_k h}\quad j=1,\dots, N_{k},\quad k =1,2,3.
		\end{aligned}
	\end{equation}
	Here the normalized constants $d_1 = 1$, $d_2=\sqrt{3}$, and $d_3 = 2$. The notation of backward difference quotient is necessary for the formula of integration by parts. Define the backward difference quotient
	\begin{equation}\label{calculate of backword}
		\widetilde{D}_{B}^{k,j}
		=
		\left\{ \begin{aligned}
			- \widetilde{D}^{k,j} , &\quad \mbox{if $k=1,3$}\\
			- \widetilde{D}^{k,j^*}, &\quad \mbox{if $k=2$}
		\end{aligned} \right.
	\end{equation}
	with $j^* = (j+3)\mod 6$. It follows by the definition that
	\begin{align}\label{exchange express of difference}
		\widetilde{D}^{k,j} u_{h}(\boldsymbol{x}_{h})
		=
		\widetilde{D}_{B}^{k,j} u_{h}(\boldsymbol{x}_{h}^{k,j}).
	\end{align}
	The discrete Laplace operator in the $k$th nearest neighbors is given by
	\begin{equation}\label{approximations of laplace}
		\begin{aligned}
			\widetilde{\Delta}_{k}u_{h}(\boldsymbol{x}_{h})
			=&
			\sum_{j=1}^{N_{k}}\frac{u_{h}(\boldsymbol{x}_{h}^{k,j}) -u_{h}(\boldsymbol{x}_{h})}{l_k h^{2}}
		\end{aligned}
	\end{equation}
	with $l_1 = 3/4$, $l_2 = 9/2$,  and $l_3 = 3$.
	The gradient operator in the $k$th nearest neighbors on hexagonal lattice is denoted by
	\begin{equation}\label{define nabla}
		\widetilde{\nabla}_{k}=g_k
		(\widetilde{D}^{k,1}, \widetilde{D}^{k,2},\cdots, \widetilde{D}^{k,N_k})
	\end{equation}
	with $g_1 = g_3 =\sqrt{6}/3$, $g_2 = \sqrt{3}/3$.
	
	We introduce some function spaces on a hexagonal lattice. The Hilbert space with an inner product given by
	\begin{equation}\label{inner product}
		(u_{h},v_{h})_{h}
		=
		\frac{3\sqrt{3}}{4}h^{2}\sum_{\boldsymbol{x}_{h}\in\mathcal{G}}u_{h}(\boldsymbol{x}_{h})\cdot
		v_{h}(\boldsymbol{x}_{h}).
	\end{equation}
	And $\widetilde{H}_{h}^1$ denotes the Hilbert space with inner product given by
	\begin{equation*}
		(u_{h},v_{h})_{\widetilde{H}_{h}^{1}}
		=
		(u_{h},v_{h})_{h}
		+
		(\widetilde{\nabla}_{1}u_{h},\widetilde{\nabla}_{1}v_{h})_{h}.
	\end{equation*}

	\subsection{Integration by parts for one-step difference quotients}\label{sec: Properties of hexagonal lattice}
The integration by parts formula is crucial in energy estimation. The following theorems indicate the one-step difference quotients satisfy the formula of integration by parts and Green's formula on hexagonal lattice.
\begin{lemma}\label{integration by parts for one step}
	For every function $u_{h}(\boldsymbol{x}_{h})$, $v_{h}(\boldsymbol{x}_{h})$ defined on $\mathcal{G}$, it holds that
		\begin{align}\label{inte by parts}
		u_{h}(\boldsymbol{x}_{h}) \cdot \widetilde{D}^{k,j} v_{h}(\boldsymbol{x}_{h})
		+
		\widetilde{D}_{B}^{k,j} u_{h}(\boldsymbol{x}_{h}^{k,j}) \cdot v_{h}(\boldsymbol{x}_{h}^{k,j})
		=
		\widetilde{D}^{k,j} \big\{ u_{h}(\boldsymbol{x}_{h}) \cdot v_{h}(\boldsymbol{x}_{h}) \big\}
	\end{align}
	and
	\begin{equation}\label{Green fomular}
		\widetilde{\Delta}_{k}u_{h}(\boldsymbol{x}_{h})
		=
		\frac{d_k^2}{2l_k}
		\sum_{j=1}^{N_k} \widetilde{D}^{k,j} \big\{ \widetilde{D}_{B}^{k,j} u_{h}(\boldsymbol{x}_{h}) \big\},
	\end{equation}
where $k=1,2,3$ and $j=1,\dots, N_{k}$.
\end{lemma}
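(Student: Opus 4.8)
The plan is to prove Lemma \ref{integration by parts for one step} by direct computation, treating the two assertions \eqref{inte by parts} and \eqref{Green fomular} separately. Both are purely algebraic identities about the one-step difference quotients \eqref{approximations of derivatives1}, so no analytic machinery is needed; the only subtlety is correctly tracking the geometry of the hexagonal lattice encoded in the relation \eqref{exchange express of difference} between forward and backward quotients.

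For the discrete integration-by-parts formula \eqref{inte by parts}, the plan is to expand the right-hand side from the definition \eqref{approximations of derivatives1} applied to the product $u_h \cdot v_h$, namely
\begin{equation*}
	\widetilde{D}^{k,j}\bigl\{u_h(\boldsymbol{x}_h)\cdot v_h(\boldsymbol{x}_h)\bigr\}
	=
	\frac{u_h(\boldsymbol{x}_h^{k,j})\cdot v_h(\boldsymbol{x}_h^{k,j}) - u_h(\boldsymbol{x}_h)\cdot v_h(\boldsymbol{x}_h)}{d_k h}.
\end{equation*}
The standard discrete Leibniz trick is to add and subtract the mixed term $u_h(\boldsymbol{x}_h)\cdot v_h(\boldsymbol{x}_h^{k,j})$ in the numerator, which splits the difference into $u_h(\boldsymbol{x}_h)\cdot\widetilde{D}^{k,j}v_h(\boldsymbol{x}_h)$ plus a term carrying the increment of $u_h$ evaluated against $v_h(\boldsymbol{x}_h^{k,j})$. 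The remaining task is to recognize that this second term equals $\widetilde{D}_B^{k,j}u_h(\boldsymbol{x}_h^{k,j})\cdot v_h(\boldsymbol{x}_h^{k,j})$, which is exactly the content of \eqref{exchange express of difference} together with the definition \eqref{calculate of backword} of the backward quotient. Here one must check the case split on $k$: for $k=1,3$ the backward quotient is simply $-\widetilde{D}^{k,j}$, while for $k=2$ the index shift $j^*=(j+3)\bmod 6$ reflects that the antipodal second-neighbor direction returns from $\boldsymbol{x}_h^{k,j}$ back to $\boldsymbol{x}_h$; verifying that this shift makes \eqref{exchange express of difference} hold is the one place where the lattice geometry genuinely enters.

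For the discrete Green's formula \eqref{Green fomular}, the plan is to start from the right-hand side and substitute $\widetilde{D}_B^{k,j}u_h(\boldsymbol{x}_h)$ using \eqref{calculate of backword}, then apply the outer operator $\widetilde{D}^{k,j}$ via \eqref{approximations of derivatives1}. Each summand becomes a second difference of $u_h$ along the $j$th direction; after multiplying by the prefactor $d_k^2/(2l_k)$ and summing over $j=1,\dots,N_k$, the cross terms should telescope or cancel in pairs (again using the antipodal pairing for $k=2$), leaving precisely $\sum_j \bigl(u_h(\boldsymbol{x}_h^{k,j}) - u_h(\boldsymbol{x}_h)\bigr)/(l_k h^2)$, which is the definition \eqref{approximations of laplace} of $\widetilde{\Delta}_k$. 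The constant bookkeeping is routine: one substitutes $d_1=1,\,d_2=\sqrt{3},\,d_3=2$ and checks that $d_k^2/(2l_k)$ combined with the $1/(d_k h)^2$ factors from the two difference quotients reproduces $1/(l_k h^2)$.

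I expect the main obstacle to be the $k=2$ case, where the second-neighbor directions are not symmetric under a single sign flip but require the antipodal index map $j\mapsto j^*$. The forward and backward quotients for distinct $j$ must be paired correctly so that the second-difference expansion in \eqref{Green fomular} collapses to the Laplacian without leftover first-order terms; getting the index arithmetic right, and confirming that the six second-layer neighbors split into three antipodal pairs consistent with \eqref{exchange express of difference}, is the crux. Once the identity \eqref{exchange express of difference} is established for all three values of $k$, both formulas follow by elementary rearrangement.
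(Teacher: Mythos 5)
Your proposal is correct and follows essentially the same route as the paper: both prove \eqref{inte by parts} by direct expansion from the definition \eqref{approximations of derivatives1} combined with the forward/backward relation \eqref{exchange express of difference}, and both obtain \eqref{Green fomular} by expanding $\widetilde{D}^{k,j}\big\{\widetilde{D}_B^{k,j}u_h\big\}$, summing over $j$ (with the antipodal reindexing $j\mapsto j^*$ handling $k=2$), and matching constants against the definition of $\widetilde{\Delta}_k$. The only cosmetic difference is that the paper stays at the level of first differences and compares with the identity $\widetilde{\Delta}_k u_h = \frac{d_k}{l_k h}\sum_j \widetilde{D}^{k,j}u_h$, whereas you expand fully into function values, which is an equivalent bookkeeping choice.
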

\begin{proof}
	The formula \eqref{inte by parts} follows directly from the definition \eqref{approximations of derivatives1} and the equation \eqref{exchange express of difference}.
	Now let us prove the formula \eqref{Green fomular}. It follows from \eqref{approximations of derivatives1} and \eqref{exchange express of difference} that
	\begin{equation*}
		\begin{aligned}
			\widetilde{D}^{k,j} \big\{ D_{B}^{k,j} u_{h}(\boldsymbol{x}_{h}) \big\}
			=&
			\frac{1}{d_k h} \big\{ \widetilde{D}_{B}^{k,j} u_{h}(\boldsymbol{x}_{h}^{k,j})
			-
			\widetilde{D}_{B}^{k,j} u_{h}(\boldsymbol{x}_{h})  \big\}\\
			=&
			\frac{1}{d_k h} \big\{ \widetilde{D}^{k,j} u_{h}(\boldsymbol{x}_{h})
			-
			\widetilde{D}_{B}^{k,j} u_{h}(\boldsymbol{x}_{h})  \big\}.
		\end{aligned}
	\end{equation*}
	Summing up for $j = 1,\dots, N_k$ yields
	\begin{equation}\label{D DB m}
		\begin{aligned}
			\sum_{j=1}^{N_{k}} \widetilde{D}^{k,j} \big\{ \widetilde{D}_{B}^{k,j} u_{h}(\boldsymbol{x}_{h}) \big\}
			=&
			\frac{2}{d_k h} \sum_{j=1}^{N_{k}} \widetilde{D}^{k,j} u_{h}(\boldsymbol{x}_{h}) .
		\end{aligned}
	\end{equation}
	On the other hand, it follows from \eqref{approximations of laplace}
	\begin{equation}\label{another form of delta}
		\begin{aligned}
			\widetilde{\Delta}_{k}u_{h}(\boldsymbol{x}_{h})
			=&
			\frac{d_k }{l_k h} \sum_{j=1}^{N_{k}} \widetilde{D}^{k,j} u_{h}(\boldsymbol{x}_{h}).
		\end{aligned}
	\end{equation}
	Hence the formula \eqref{Green fomular} is proved.
\end{proof}
Since $\sum_{\bx_h \in \mathcal{G}} \widetilde{D}^{k,j} f_h(\bx_h) = 0$, the equation \eqref{inte by parts} indicates the formula of integration by parts
\begin{equation*}
	\big(u_{h},\, \widetilde{D}^{k,j} v_{h} \big)_h
	+
	\big( \widetilde{D}^{k,j}_B u_{h},\,  v_{h} \big)_h
	= 0.
\end{equation*}
However, it is important to point out that the one-step difference quotient $\widetilde{D}^{k, j}$  has opposite directions on adjacent lattice nodes, which makes the weak convergence process fail. To fix this problem, we introduce multiplier
	\begin{equation*}
	\kappa(\bx_h) =
	\left\{ \begin{aligned}
		1, &\quad \mbox{if $\bx_h \in \mathcal{G}_1$},\\
		-1, &\quad \mbox{if $\bx_h \in \mathcal{G}_2$},
	\end{aligned} \right.
\end{equation*}
such that $\kappa\widetilde{D}^{k,j} u_h $ retains the uniform direction on the lattice.

When $k=2$, note that $\kappa(\boldsymbol{x}_{h}) = \kappa(\boldsymbol{x}_{h}^{k,j})$, we can deduce directly the following formula of integrating by parts. 
\begin{proposition} For every function $u_{h}(\boldsymbol{x}_{h})$, $v_{h}(\boldsymbol{x}_{h})$ defined on $\mathcal{G}$, it holds that 
\begin{equation}\label{integration by parts for k=2}
	\big(u_{h},\, \kappa\widetilde{D}^{2,j} v_{h} \big)_h
	+
	\big( \kappa\widetilde{D}^{2,j}_B u_{h},\,  v_{h} \big)_h
	= 0.
\end{equation}
\end{proposition}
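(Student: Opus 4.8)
The plan is to obtain \eqref{integration by parts for k=2} directly from the one-step integration by parts identity $\big(u_{h},\, \widetilde{D}^{k,j} v_{h} \big)_h + \big( \widetilde{D}^{k,j}_B u_{h},\,  v_{h} \big)_h = 0$ already recorded above, by feeding it the multiplier $\kappa$. Concretely, I would apply that identity at $k=2$ with $u_h$ replaced by $\kappa u_h$, obtaining $\big(\kappa u_{h},\, \widetilde{D}^{2,j} v_{h} \big)_h + \big( \widetilde{D}^{2,j}_B (\kappa u_{h}),\,  v_{h} \big)_h = 0$. Since $\kappa$ is a real-valued scalar field taking values $\pm 1$, it passes freely through the Euclidean dot product in \eqref{inner product}, so the first term is immediately $\big(u_{h},\, \kappa\widetilde{D}^{2,j} v_{h} \big)_h$. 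The whole reduction then hinges on moving $\kappa$ out of the backward quotient in the second term.

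The key algebraic step is therefore the identity $\widetilde{D}^{2,j}_B (\kappa u_{h}) = \kappa\,\widetilde{D}^{2,j}_B u_{h}$. Unwinding \eqref{calculate of backword}, for $k=2$ the backward quotient $\widetilde{D}^{2,j}_B u_h(\bx_h) = -\widetilde{D}^{2,j^*}u_h(\bx_h)$ involves only the values of $u_h$ at $\bx_h$ and at its opposite second neighbor $\bx_h^{2,j^*}$. The geometric input I would invoke is the stated fact $\kappa(\bx_h) = \kappa(\bx_h^{2,j})$, which originates from \eqref{define G1 G2}: the second-neighbor shift maps $\mathcal{G}_1$ into $\mathcal{G}_1$ and $\mathcal{G}_2$ into $\mathcal{G}_2$, so that likewise $\kappa(\bx_h) = \kappa(\bx_h^{2,j^*})$. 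Hence $\kappa$ is constant across the two nodes entering $\widetilde{D}^{2,j}_B$, factors out, and the desired identity follows, completing the argument.

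Alternatively, I would give a fully self-contained derivation by multiplying the pointwise identity \eqref{inte by parts} (taken at $k=2$) by $\kappa(\bx_h)$ and summing over $\bx_h \in \mathcal{G}$. The first summand reproduces $\big(u_{h},\, \kappa\widetilde{D}^{2,j} v_{h} \big)_h$; for the second I would reindex through the bijection $\bx_h \mapsto \bx_h^{2,j}$ of $\mathcal{G}$, whose inverse sends a node to its opposite second neighbor, and use $\kappa(\bx_h^{2,j^*}) = \kappa(\bx_h)$ to recover $\big( \kappa\widetilde{D}^{2,j}_B u_{h},\,  v_{h} \big)_h$; the right-hand side $\sum_{\bx_h}\kappa(\bx_h)\widetilde{D}^{2,j}\{u_h\cdot v_h\}(\bx_h)$ vanishes by the same reindexing, since $\kappa$ is preserved under the second-neighbor shift and the two resulting sums cancel. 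The only point requiring care --- and the precise reason $k=2$ is the easy case --- is this invariance of $\kappa$ under the second-neighbor map; for $k=1,3$ the corresponding shift swaps $\mathcal{G}_1$ and $\mathcal{G}_2$, $\kappa$ flips sign, and the clean factorization breaks down, which is exactly what forces the multi-step quotients introduced later. There is thus no genuine obstacle here beyond this bookkeeping observation.
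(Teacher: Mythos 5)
Your proposal is correct and is essentially the paper's own argument: the paper proves this proposition "directly" from the one-step integration by parts formula precisely via the observation $\kappa(\boldsymbol{x}_{h}) = \kappa(\boldsymbol{x}_{h}^{2,j})$ (the second-neighbor shift preserves the sublattices $\mathcal{G}_1$, $\mathcal{G}_2$), which is the same invariance your two equivalent derivations hinge on. Both your factorization $\widetilde{D}^{2,j}_B(\kappa u_h) = \kappa\,\widetilde{D}^{2,j}_B u_h$ and your reindexing argument are valid fillings-in of the paper's one-line deduction, including the correct explanation of why the argument fails for $k=1,3$.
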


	\subsection{Multi-step difference quotients}
For the case $k=1$ or $3$, the operator $\kappa\widetilde{D}^{k,j}$ does not satisfy the integration by parts property \eqref{integration by parts for k=2}.
To overcome this problem, we introduce multi-step difference quotients for the $1$st and $3$rd nearest neighbors. We denote the two-step difference quotient $\widetilde{\mathbb{D}}^{k}_{x}$ for $k=1,3$, which is given by
{\begin{equation}\label{define two step difference}
	\widetilde{\mathbb{D}}^{k}_{x}u_h(\bx_h)
	= \frac{\sqrt{3}}{3}
	\big\{\kappa\widetilde{D}^{k,3} - \kappa\widetilde{D}^{k,2}\big\} u_h(\bx_h),
\end{equation}
and $\widetilde{\mathbb{D}}^{k}_{y}$ denotes the three-step difference quotient in the $k$th nearest neighbors that is constructed as
\begin{equation}\label{three step difference quotion}
	\widetilde{\mathbb{D}}^{k}_{y} u_h(\bx_h)
	=
	\frac{1}{3} \big\{ 2\kappa \widetilde{D}^{k,1} - \kappa \widetilde{D}^{k,2} - \kappa \widetilde{D}^{k,3} \big\}
	u_h(\bx_h).
\end{equation}
We will use the operators $\widetilde{\mathbb{D}}^{k}_{x}$ and $\widetilde{\mathbb{D}}^{k}_{y}$ to conduct a weak convergence procedure when $k$ equals 1 or 3. In the following we prove that operators $\widetilde{\mathbb{D}}^{k}_{x}$ and $\widetilde{\mathbb{D}}^{k}_{y}$
yield the formula of integration by parts. 
	\begin{proposition}\label{integration by parts} When $k = 1, 3$, for every $u_{h}(\boldsymbol{x}_{h})$, $v_{h}(\boldsymbol{x}_{h})$ defined on $\mathcal{G}$, it  holds that
		\begin{equation}\label{inte by parts for two steps}
			\begin{gathered}
			\big(u_{h},\, \widetilde{\mathbb{D}}^{k}_{x} v_{h} \big)_h
			+
			\big( \widetilde{\mathbb{D}}^{k}_{x} u_{h},\,  v_{h} \big)_h
			= 0,
			\end{gathered}
		\end{equation}
	and
	\begin{equation}\label{inte by parts for three steps}
		\begin{gathered}
			\big(u_{h},\, \widetilde{\mathbb{D}}^{k}_{y} v_{h} \big)_h
			+
			\big( \widetilde{\mathbb{D}}^{k}_{y} u_{h},\,  v_{h} \big)_h = 0.
		\end{gathered}
	\end{equation}

	\end{proposition}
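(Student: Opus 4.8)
The plan is to prove both identities simultaneously by showing that each multi-step operator is skew-adjoint with respect to the discrete inner product $(\cdot,\cdot)_h$, since \eqref{inte by parts for two steps} and \eqref{inte by parts for three steps} are precisely the statements that $\widetilde{\mathbb{D}}^k_x$ and $\widetilde{\mathbb{D}}^k_y$ have vanishing adjoint defect. Because both operators are fixed linear combinations of the single-direction building blocks $\kappa\widetilde{D}^{k,j}$, I would first compute the defect of one block and then assemble the conclusion by linearity.

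For a single direction I would track the quantity
\[
	\mathcal{R}^{k,j}(u_h,v_h) := \big(u_h,\, \kappa\widetilde{D}^{k,j} v_h\big)_h + \big(\kappa\widetilde{D}^{k,j} u_h,\, v_h\big)_h .
\]
Since $\kappa^2 \equiv 1$, multiplication by $\kappa$ is self-adjoint for $(\cdot,\cdot)_h$, so $\big(u_h,\kappa\widetilde{D}^{k,j}v_h\big)_h = \big(\kappa u_h,\widetilde{D}^{k,j}v_h\big)_h$. Applying the one-step integration by parts of Lemma \ref{integration by parts for one step} in the form $(w_h,\widetilde{D}^{k,j}v_h)_h = -(\widetilde{D}^{k,j}_B w_h,v_h)_h$ with $w_h=\kappa u_h$, and then using $\widetilde{D}^{k,j}_B = -\widetilde{D}^{k,j}$ for $k=1,3$ from \eqref{calculate of backword}, I obtain $\big(u_h,\kappa\widetilde{D}^{k,j}v_h\big)_h = \big(\widetilde{D}^{k,j}(\kappa u_h),v_h\big)_h$. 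The essential geometric input enters at this point: for $k=1,3$ each $k$th-nearest neighbor of a node lies in the opposite sublattice, i.e. $\kappa(\bx_h^{k,j}) = -\kappa(\bx_h)$. Expanding $\widetilde{D}^{k,j}(\kappa u_h)$ with this sign flip turns the difference $u_h(\bx_h^{k,j})-u_h(\bx_h)$ into the sum $u_h(\bx_h^{k,j})+u_h(\bx_h)$, and a one-line rearrangement gives
\[
	\widetilde{D}^{k,j}(\kappa u_h) = -\kappa\widetilde{D}^{k,j} u_h - \frac{2}{d_k h}\,\kappa u_h .
\]
Substituting back yields $\mathcal{R}^{k,j}(u_h,v_h) = -\frac{2}{d_k h}(\kappa u_h, v_h)_h$, a purely local term that is independent of the direction index $j$.

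With this single-direction identity in hand the two claims follow at once by linearity: for any coefficients $c_j$,
\[
	\sum_j c_j\, \mathcal{R}^{k,j}(u_h,v_h) = -\frac{2}{d_k h}\Big(\sum_j c_j\Big)(\kappa u_h, v_h)_h .
\]
The operators are designed so that their coefficients sum to zero, namely $\frac{\sqrt{3}}{3}-\frac{\sqrt{3}}{3}=0$ for $\widetilde{\mathbb{D}}^k_x$ in \eqref{define two step difference} and $\frac{1}{3}(2-1-1)=0$ for $\widetilde{\mathbb{D}}^k_y$ in \eqref{three step difference quotion}. Hence the local defect cancels and both \eqref{inte by parts for two steps} and \eqref{inte by parts for three steps} hold.

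I expect the main obstacle to be the careful bookkeeping of the sublattice structure rather than the algebra: one must justify rigorously that $\kappa(\bx_h^{k,j}) = -\kappa(\bx_h)$ for all three first-neighbor and all three third-neighbor directions, equivalently that $\bx_h\mapsto\bx_h^{k,j}$ is an involution exchanging $\mathcal{G}_1$ and $\mathcal{G}_2$. This is exactly the \textbf{loss of symmetry} that obstructs the naive single-direction operator and forces the multi-step construction, and it is the property that makes the crucial sign flip valid. Once this structural fact is secured, the cancellation mechanism (defect proportional to $\sum_j c_j$) is robust and transparently explains why these particular combinations were chosen.
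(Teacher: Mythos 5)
Your proof is correct, and it rests on the same three structural facts as the paper's proof (the one-step integration by parts of Lemma \ref{integration by parts for one step}, the sublattice sign flip $\kappa(\bx_h^{k,j})=-\kappa(\bx_h)$ for $k=1,3$, and the zero sum of the coefficients in \eqref{define two step difference}--\eqref{three step difference quotion}), but it packages them differently. The paper multiplies the pointwise product-rule identity \eqref{inte by parts} by $\kappa(\bx_h)$, forms the linear combination defining $\widetilde{\mathbb{D}}^{k}_{x}$, and then sums over the lattice, invoking the telescoping fact $\sum_{\bx_h\in\mathcal{G}}\widetilde{\mathbb{D}}^{k}_{x}f_h(\bx_h)=0$; you instead work entirely at the level of inner products and compute the adjoint defect of each building block explicitly, $\big(u_h,\kappa\widetilde{D}^{k,j}v_h\big)_h+\big(\kappa\widetilde{D}^{k,j}u_h,v_h\big)_h=-\tfrac{2}{d_kh}\big(\kappa u_h,v_h\big)_h$, and let the zero-sum coefficients cancel it. What your route buys: the defect formula is quantitative, it explains transparently why the single-direction operator $\kappa\widetilde{D}^{k,j}$ fails integration by parts (the defect is of order $h^{-1}$, so it cannot even be discarded as an error term), it justifies the paper's telescoping assertion as a corollary (take $v_h$ constant on a large region) rather than assuming it, and it avoids the notational awkwardness in the paper's intermediate display, where the term $\widetilde{\mathbb{D}}^{k}_{x}u_h(\bx_h^{k,j})$ mixes evaluation points belonging to different directions $j$. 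What the paper's route buys is brevity once the telescoping fact is granted. One point you correctly flag but should nail down if you write this up: the sign flip and the relation \eqref{exchange express of difference} both encode that $\bx_h\mapsto\bx_h^{k,j}$ is an involution exchanging $\mathcal{G}_1$ and $\mathcal{G}_2$ for $k=1,3$; this bijectivity is also what legitimizes the reindexing hidden in the summed form of the one-step formula $\big(u_h,\widetilde{D}^{k,j}v_h\big)_h+\big(\widetilde{D}^{k,j}_Bu_h,v_h\big)_h=0$ that you apply with $w_h=\kappa u_h$, so it is doing double duty in your argument.
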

\begin{proof}
Multiplying  the equation  \eqref{inte by parts} by $\kappa(\boldsymbol{x}_{h})$ yields
\begin{equation*}
	\begin{aligned}
		u_{h}(\boldsymbol{x}_{h}) \cdot \kappa(\boldsymbol{x}_{h}) \widetilde{D}^{k,j} v_{h}(\boldsymbol{x}_{h})
		+
		\kappa(\boldsymbol{x}_{h}^{k,j}) \widetilde{D}^{k,j} u_{h}(\boldsymbol{x}_{h}^{k,j}) \cdot v_{h}(\boldsymbol{x}_{h}^{k,j})\\
		=
		\kappa(\boldsymbol{x}_{h}) \widetilde{D}^{k,j} \big\{ u_{h}(\boldsymbol{x}_{h}) \cdot v_{h}(\boldsymbol{x}_{h}) \big\}.
	\end{aligned}
\end{equation*}
Applying the definition in \eqref{define two step difference}, one can deduce from above equation
\begin{equation*}\label{inte by parts 2}
	\begin{aligned}
		u_{h}(\boldsymbol{x}_{h}) \cdot \widetilde{\mathbb{D}}^{k}_{x} v_{h}(\boldsymbol{x}_{h})
		+
		\widetilde{\mathbb{D}}^{k}_{x} u_{h}(\boldsymbol{x}_{h}^{k,j}) \cdot v_{h}(\boldsymbol{x}_{h}^{k,j})
		=
		\widetilde{\mathbb{D}}^{k}_{x} \big\{ u_{h}(\boldsymbol{x}_{h}) \cdot v_{h}(\boldsymbol{x}_{h}) \big\},
	\end{aligned}
\end{equation*}
By summing \eqref{inte by parts 2} over all $\bx_h\in \mathcal{G}$, we obtain the equation \eqref{inte by parts for two steps}, which follows from the fact that $\sum_{\bx_h \in \mathcal{G}} \widetilde{\mathbb{D}}^{k}_{x} f_h(\bx_h) = 0$.
The equation \eqref{inte by parts for three steps} can be proved in the same  way, for which we omit the details.
\end{proof}

	\subsection{Auxiliary difference quotients}\label{sec: Properties of hexagonal lattice}
 Finally, it remains to establish the connection between discrete Laplace operator \eqref{Green fomular} and the multi-step difference quotients \eqref{define two step difference}-\eqref{three step difference quotion}.
 To achieve this, we need  some auxiliary difference quotients. For $k=1,3$ and $r=0,1$,  define
\begin{equation}\label{define three step difference}
	\widetilde{\mathfrak{D}}^{k}_{r} u_h(\bx_h)
	=
	\left\{\begin{aligned}
		\big(\frac{2}{3}\big)^r \widetilde{D}^{k,1} u_h(\bx_h), &\quad \mbox{for $\bx_h\in \mathcal{G}_1$}\\
		\big(\frac{1}{3}\big)^r \big\{\widetilde{D}^{k,2} + \widetilde{D}^{k,3} \big\} u_h(\bx_h), &\quad \mbox{for $\bx_h\in \mathcal{G}_2$},
	\end{aligned}\right.
\end{equation}
and its backward difference quotient is given by
\begin{equation}\label{define three step backward difference}
	\widetilde{\mathfrak{D}}^{k}_{r,B} u_h(\bx_h)
	=
	\left\{\begin{aligned}
		- \big(\frac{1}{3}\big)^r \big\{\widetilde{D}^{k,2} + \widetilde{D}^{k,3} \big\} u_h(\bx_h), &\quad \mbox{for $\bx_h\in \mathcal{G}_1$},\\
		- \big(\frac{2}{3}\big)^r \widetilde{D}^{k,1} u_h(\bx_h), &\quad \mbox{for $\bx_h\in \mathcal{G}_2$}.
	\end{aligned}\right.
\end{equation}
By the definitions of \eqref{three step difference quotion}-\eqref{define three step backward difference}, one can directly deduce that
\begin{equation}\label{formula of three step difference quotion}
		\widetilde{\mathbb{D}}^{k}_{y} u_h(\bx_h)
		=
		\big\{ \widetilde{\mathfrak{D}}^{k}_{1} + \widetilde{\mathfrak{D}}^{k}_{1,B} \big\}
		u_h(\bx_h),
	\end{equation}
 Moreover, we have the following decomposition of the quadratic form of one-step operator $\widetilde{D}^{k,j}$:

\begin{lemma}
	For any $u_h$, $v_h$ defined on $\mathcal{G}$ and $k=1,3$, the following decomposition holds:
\begin{equation}\label{transform from one step to multi step}
	\begin{aligned}
		\sum_{j=1}^{3} & \widetilde{D}^{k,j} u_h \cdot \widetilde{D}^{k,j} v_h\\
		&= \frac{3}{2}
		\big\{\widetilde{\mathfrak{D}}^{k}_{1} u_h \cdot \widetilde{\mathfrak{D}}^{k}_{0} v_h
		+
		\widetilde{\mathfrak{D}}^{k}_{1,B} u_h \cdot \widetilde{\mathfrak{D}}^{k}_{0,B} v_h
		+
		\widetilde{\mathbb{D}}^{k}_{x} u_h \cdot \widetilde{\mathbb{D}}^{k}_{x} v_h\big\}.
	\end{aligned}
\end{equation}
	\end{lemma}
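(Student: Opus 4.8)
The plan is to establish \eqref{transform from one step to multi step} as a \emph{pointwise} identity at an arbitrary node $\bx_h\in\mathcal{G}$, since every operator appearing in the display depends only on the values of $u_h,v_h$ at $\bx_h$ and its $k$th-layer neighbors. Because all the operators $\widetilde{\mathfrak{D}}^{k}_{r}$, $\widetilde{\mathfrak{D}}^{k}_{r,B}$, and $\widetilde{\mathbb{D}}^{k}_{x}$ are defined by a case split between $\mathcal{G}_1$ and $\mathcal{G}_2$, I would fix $k\in\{1,3\}$ and treat these two cases separately. Abbreviating $a_j=\widetilde{D}^{k,j}u_h(\bx_h)$ and $b_j=\widetilde{D}^{k,j}v_h(\bx_h)$ for $j=1,2,3$, the left-hand side becomes the Euclidean inner product $\sum_{j=1}^{3}a_j\cdot b_j$ of the index-triples $(a_1,a_2,a_3)$ and $(b_1,b_2,b_3)$, the dot product acting in the $\mathbb{R}^3$ target space.

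The key structural observation, which renders all the cooked-up constants transparent, is that on $\mathcal{G}_1$ (where $\kappa=1$) the three combinations entering the right-hand side are, up to scalars, projections of the triple onto the three mutually orthogonal index-vectors $w_1=(1,0,0)$, $w_2=(0,1,1)$, $w_3=(0,-1,1)$. Indeed $\widetilde{\mathfrak{D}}^{k}_{0}v_h=b_1=b\cdot w_1$, $\widetilde{\mathfrak{D}}^{k}_{0,B}v_h=-(b_2+b_3)=-\,b\cdot w_2$, and $\widetilde{\mathbb{D}}^{k}_{x}v_h=\tfrac{\sqrt{3}}{3}(b_3-b_2)=\tfrac{\sqrt{3}}{3}\,b\cdot w_3$, with the $r=1$ versions of $u_h$ carrying the extra factors $\tfrac23$ and $\tfrac13$. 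Since $\{w_1,\,w_2/\sqrt{2},\,w_3/\sqrt{2}\}$ is an orthonormal basis of $\mathbb{R}^3$, the expansion of the inner product in this basis reads
\begin{equation*}
	\sum_{j=1}^{3}a_j\cdot b_j
	=(a\cdot w_1)(b\cdot w_1)+\tfrac12(a\cdot w_2)(b\cdot w_2)+\tfrac12(a\cdot w_3)(b\cdot w_3),
\end{equation*}
and substituting the identities above shows the prefactor $\tfrac32$ combined with $\tfrac23,\tfrac13,\tfrac{\sqrt{3}}{3}$ reproduces exactly the coefficients $1,\tfrac12,\tfrac12$; this is precisely the right-hand side of \eqref{transform from one step to multi step}.

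For $\bx_h\in\mathcal{G}_2$ I would repeat the computation, tracking two changes. First, \eqref{define three step difference}--\eqref{define three step backward difference} interchange the roles of the single neighbor $\widetilde{D}^{k,1}$ and the pair $\widetilde{D}^{k,2}+\widetilde{D}^{k,3}$, which merely swaps which of $w_1,w_2$ carries the weight $\tfrac23$ versus $\tfrac13$; the orthogonal expansion is unchanged. Second, $\kappa=-1$ flips the sign inside $\widetilde{\mathbb{D}}^{k}_{x}$, but that operator enters \eqref{transform from one step to multi step} only through the product $\widetilde{\mathbb{D}}^{k}_{x}u_h\cdot\widetilde{\mathbb{D}}^{k}_{x}v_h$, so the sign cancels. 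Combining the two cases yields the identity at every node.

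I do not expect a genuine obstacle: the content is a finite-dimensional linear-algebra identity verified locally. The one place demanding care is the bookkeeping in the $\mathcal{G}_2$ case, where one must confirm that the swap in \eqref{define three step difference}--\eqref{define three step backward difference} together with the $\kappa=-1$ sign conspire to cancel the cross terms $a_2\cdot b_3+a_3\cdot b_2$ exactly as on $\mathcal{G}_1$. Phrasing the argument through the orthogonal-basis viewpoint is what makes this cancellation automatic rather than a brute-force expansion of the squares.
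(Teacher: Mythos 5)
Your proposal is correct and takes essentially the same route as the paper: your orthonormal-basis expansion is precisely the paper's algebraic identity $\sum_{j=1}^{3}a_j\cdot b_j = a_1\cdot b_1 + \tfrac12(a_2+a_3)\cdot(b_2+b_3) + \tfrac12(a_2-a_3)\cdot(b_2-b_3)$, after which both arguments conclude by substituting the definitions of $\widetilde{\mathfrak{D}}^{k}_{r}$, $\widetilde{\mathfrak{D}}^{k}_{r,B}$, and $\widetilde{\mathbb{D}}^{k}_{x}$. Your explicit $\mathcal{G}_1$/$\mathcal{G}_2$ case split and the remark that $\kappa^2=1$ cancels the sign in the $\widetilde{\mathbb{D}}^{k}_{x}$ product are simply a more detailed write-up of that same substitution step.
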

\begin{proof}
Note that for any $u_h$, $v_h$ defined on $\mathcal{G}$, one has
\begin{equation*}\label{transform from one step to multi step 1}
	\begin{aligned}
		\sum_{j=1}^{3}  \widetilde{D}^{k,j} u_h \cdot \widetilde{D}^{k,j} v_h
		= &
		\widetilde{D}^{k,1} u_h \cdot \widetilde{D}^{k,1} v_h
		+
		\frac{1}{2}\big\{\widetilde{D}^{k,2} + \widetilde{D}^{k,3} \big\}u_h \cdot  \big\{\widetilde{D}^{k,2} + \widetilde{D}^{k,3} \big\}v_h\\
		&+
		\frac{1}{2} \big\{\widetilde{D}^{k,2} - \widetilde{D}^{k,3} \big\}u_h \cdot  \big\{\widetilde{D}^{k,2} - \widetilde{D}^{k,3} \big\}v_h, 
	\end{aligned}
\end{equation*}
then \eqref{transform from one step to multi step} follows by substituting the definitions \eqref{define two step difference}-\eqref{define three step backward difference} into the above equation.
	
	\end{proof}

The formula \eqref{transform from one step to multi step} gives us the connection between the discrete Laplace operator and the multi-step difference quotients $\widetilde{\mathbb{D}}^{k}_{x}$, $\widetilde{\mathbb{D}}^{k}_{y}$, by the help of \eqref{Green fomular} and \eqref{formula of three step difference quotion}.
We remark that,
the auxiliary operators $\widetilde{\mathfrak{D}}^{k}_{r}$ and $\widetilde{\mathfrak{D}}^{k}_{r,B}$ cannot be used to conduct weak convergence directly, since they neither have uniform direction on the lattice when $r=0$, nor yield the integration by parts when $r=1$.

With these notations, we conclude the following convergence result:
\begin{lemma}\label{lemma: convergence of discrete difference}
	Assume that  $q_h$ is the interpolation operator given by Definition \ref{def: q_h}. For $\mathbf{\varphi}\in C_c^2(\mathbb{R}^2)$, let $\mathbf{\varphi}_h(\bx_h) = \mathbf{\varphi}(\bx_h)$. When $k=1,3$, it holds that
	\begin{equation}\label{convergence of test function}
		\begin{gathered}
			q_{h} \big\{\widetilde{\mathbb{D}}^{k}_{x}\boldsymbol{\varphi}_h\big\}
			\rightarrow (-1)^i
			\frac{\partial}{\partial x} \boldsymbol{\varphi}, \ \ \ \ \mbox{in}\ L^2\\
			q_h \big\{\widetilde{\mathfrak{D}}^{k}_{0} \boldsymbol{\varphi}_h\big\},\,\,
			q_h \big\{\widetilde{\mathfrak{D}}^{k}_{0,B}\boldsymbol{\varphi}_h\big\},\,\,
			q_h \big\{\widetilde{\mathbb{D}}^{k}_{y} \boldsymbol{\varphi}_h\big\}
			\rightarrow (-1)^i
			\frac{\partial}{\partial y} \boldsymbol{\varphi} \ \ \ \ \mbox{in}\ L^2
		\end{gathered}
	\end{equation}
with $i=0$ for $k=1$, and $i=1$ for $k=3$. When $k=2$, it holds that
\begin{equation}\label{notation of e^k,j}
	q_h \big\{\kappa \widetilde{D}^{2,j}\boldsymbol{\varphi}_h\big\},\,\,
	q_h \big\{\kappa \widetilde{D}_B^{2,j}\boldsymbol{\varphi}_h\big\}
	\rightarrow
	\frac{\partial}{\partial \boldsymbol{\nu}^{j}} \boldsymbol{\varphi} \ \ \mbox{in}\ L^2,\quad j=1,\cdots,6,
\end{equation}
here $\boldsymbol{\nu}^{j}$ denotes the fixed unit vector independent of $\bx$, given by $\boldsymbol{\nu}^{1} = - \boldsymbol{\nu}^{4} = (-1/2,\,\sqrt{3}/2)$, $\boldsymbol{\nu}^{2} = - \boldsymbol{\nu}^{5} = (-1,\,0)$, $\boldsymbol{\nu}^{3} = - \boldsymbol{\nu}^{6} = (-1/2, \,-\sqrt{3}/2)$.
\end{lemma}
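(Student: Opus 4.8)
The plan is to reduce each convergence to a first-order Taylor expansion of $\varphi$ at the base node together with the approximation and stability properties of $q_h$ (Definition \ref{def: q_h}). Since $\varphi\in C_c^2(\mathbb{R}^2)$, for every base node $\bx_h$ and every one-step stencil one has
\[
\widetilde{D}^{k,j}\varphi_h(\bx_h)
= \nabla\varphi(\bx_h)\cdot \mathbf{w}^{k,j}(\bx_h) + \rho^{k,j}_h(\bx_h),
\qquad
\mathbf{w}^{k,j}(\bx_h):=\frac{\bx_h^{k,j}-\bx_h}{d_k h},
\]
where $\mathbf{w}^{k,j}(\bx_h)$ is the unit vector from $\bx_h$ to its $j$-th neighbor in the $k$-th layer (a unit vector precisely because of the normalization $d_k$), and the remainder satisfies $|\rho^{k,j}_h(\bx_h)|\le \tfrac12 d_k h\,\|\nabla^2\varphi\|_{L^\infty}$ and is supported within an $O(h)$-neighborhood of $\mathrm{supp}\,\varphi$.

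First I would tabulate the stencil directions. Because $\mathcal{G}=\mathcal{G}_1\cup\mathcal{G}_2$ is a union of two triangular sublattices related by a central inversion, under the labeling of Figure \ref{Fig1} the stencil vectors on the two sublattices are antipodal,
\[
\mathbf{w}^{k,j}\big|_{\mathcal{G}_2}=-\,\mathbf{w}^{k,j}\big|_{\mathcal{G}_1},
\qquad k=1,2,3,
\]
which is exactly the sign absorbed by $\kappa$, so that $\kappa\,\mathbf{w}^{k,j}$ equals the single fixed vector $\mathbf{w}^{k,j}|_{\mathcal{G}_1}$ on all of $\mathcal{G}$. Moreover the third-neighbor stencil is antipodal to the first-neighbor one, $\mathbf{w}^{3,j}=-\mathbf{w}^{1,j}$, which produces the global sign $(-1)^i$ with $i=0$ for $k=1$ and $i=1$ for $k=3$. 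Substituting the expansion into the definitions \eqref{define two step difference}, \eqref{three step difference quotion}, \eqref{define three step difference} and forming the prescribed combinations, the explicit geometry gives
\[
\tfrac{\sqrt{3}}{3}\big(\kappa\mathbf{w}^{k,3}-\kappa\mathbf{w}^{k,2}\big)=(-1)^i\mathbf{e}^x,
\qquad
\tfrac{1}{3}\big(2\kappa\mathbf{w}^{k,1}-\kappa\mathbf{w}^{k,2}-\kappa\mathbf{w}^{k,3}\big)=(-1)^i\mathbf{e}^y,
\]
and likewise the stencil vectors of $\widetilde{\mathfrak{D}}^{k}_{0}$ and $\widetilde{\mathfrak{D}}^{k}_{0,B}$ sum to $(-1)^i\mathbf{e}^y$ on both sublattices. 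For $k=2$ the same antipodal relation yields $\kappa\mathbf{w}^{2,j}\equiv\boldsymbol{\nu}^j$, the listed unit vectors (the backward quotient $\kappa\widetilde{D}^{2,j}_B$ giving the same limit via $\widetilde{D}^{2,j}_B=-\widetilde{D}^{2,j^*}$ and $\boldsymbol{\nu}^{j^*}=-\boldsymbol{\nu}^j$). Consequently each quotient in the statement equals $(-1)^i\partial_x\varphi(\bx_h)$, $(-1)^i\partial_y\varphi(\bx_h)$, or $\partial_{\boldsymbol{\nu}^j}\varphi(\bx_h)$, plus a fixed linear combination of the remainders $\rho^{k,j}_h$.

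It then remains to pass $q_h$ through this splitting. For the leading terms, the functions $\partial_x\varphi,\partial_y\varphi,\partial_{\boldsymbol{\nu}^j}\varphi$ are continuous and compactly supported, so the approximation property of $q_h$ gives $q_h\{\partial_\bullet\varphi(\bx_h)\}\to \partial_\bullet\varphi$ in $L^2$. For the remainders, the pointwise bound $|\rho^{k,j}_h|\le C h$ on the $O(h^{-2})$ active nodes gives, through the discrete norm $\|\cdot\|_h$ induced by \eqref{inner product}, $\|\rho^{k,j}_h\|_h^2\le C\,h^2\cdot(h^{-2}\cdot h^2)=O(h^2)$, and the $L^2$-stability of $q_h$ yields $\|q_h\{\rho^{k,j}_h\}\|_{L^2}\le C\|\rho^{k,j}_h\|_h\to0$. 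Adding the two contributions proves \eqref{convergence of test function} and \eqref{notation of e^k,j}.

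The main obstacle is the geometric bookkeeping of the middle step: listing the six stencil vectors $\mathbf{w}^{k,j}$ on each sublattice for $k=1,2,3$ and checking that, once the signs of $\kappa$ are inserted, the sublattice-dependent parts cancel exactly and leave the pure coordinate directions $\pm\mathbf{e}^x,\pm\mathbf{e}^y$ (and the uniform $\boldsymbol{\nu}^j$ for $k=2$). This is precisely where the loss of rotational symmetry enters and where the coefficients $\tfrac{\sqrt3}{3},\tfrac13,\tfrac23$ in the multi-step and auxiliary quotients are forced; the interpolation-theoretic estimates are then routine given the properties of $q_h$.
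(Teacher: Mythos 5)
Your proposal is correct and takes essentially the same route as the paper, whose entire proof is the remark that the lemma ``follows directly from Taylor expansion and the property \eqref{property 3} of $q_h$,'' with all details omitted. Your argument supplies exactly those details: the first-order Taylor expansion with an $O(h)$ remainder, the stencil bookkeeping showing that $\kappa$ makes the limiting directions uniform across $\mathcal{G}_1\cup\mathcal{G}_2$, and the $L^2$ control of the error terms via the isometry and approximation properties of $q_h$.
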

Lemma \ref{lemma: convergence of discrete difference} follows directly from Taylor expansion and the property \eqref{property 3} of $q_h$, so we omit the details here.

\section{Proof of Theorem \ref{main theorem}}\label{sec: proof of convergence}
In this section, we will provide an overview of the proof for Theorem \ref{main theorem}.  
We recall that the discrete LLG equation reads:
\begin{equation}\label{discrete LLG equation}
		\left\{\begin{aligned}
			\frac{\d\bm_h}{\d t^*}(\boldsymbol{x}_{h})
			= &
			-\frac{1}{(h^*)^{2}}\Big\{\bm_{h}(\boldsymbol{x}_{h})
			\times\mathbf{B}_{\rm{eff},h}(\boldsymbol{x}_{h})\\
			& \qquad \qquad \quad +
			\alpha\bm_{h}(\boldsymbol{x}_{h})\times \big(\bm_{h}(\boldsymbol{x}_{h})\times\mathbf{B_{\rm{eff},h}}(\boldsymbol{x}_{h})\big) \Big\},\\
			\bm_{h}(\boldsymbol{x}_{h},0)
			=& \bm_{\rm{init},h}(\boldsymbol{x}_{h}),\quad
			\left|\bm_{\rm{init},h}(\boldsymbol{x}_{h}) \right| =1.
		\end{aligned}\right.
	\end{equation}
By virtue  of Cauchy-Lipshitz-Picard Theorem\cite{brezis_functional_2011}, it is guaranteed that there exists a unique solution $\bm_h \in C^{1}\left([0,+\infty)\right)$ for the ODE system \eqref{discrete LLG equation}.
Let us revisit the discrete energy \eqref{discrete energy} and provide the expression for the total effective field in \eqref{discrete effect field}. We write $\mathbf{B_{\rm{eff},h}}(\boldsymbol{x}_{h}) = \mathbf{B}_1(\boldsymbol{x}_{h}) + \mathbf{B}_2(\boldsymbol{x}_{h}) + \mathbf{B}_3(\boldsymbol{x}_{h}) + \mathbf{B}_4(\boldsymbol{x}_{h})$, 
where
\begin{equation*}
    \begin{aligned}
    \mathbf{\mathbf{B}_1(\boldsymbol{x}_{h})}
    =-
    \frac{(h^*)^2}{h^2} \frac{\delta \mathcal{H}_{\mathrm{e}}}{\delta \bm_{h}},&\qquad&
     &\mathbf{\mathbf{B}_2(\boldsymbol{x}_{h})}
    =-
    \frac{(h^*)^2}{h^2} \frac{\delta \mathcal{H}_{\mathrm{b}}}{\delta \bm_{h}},\\
    \mathbf{\mathbf{B}_3(\boldsymbol{x}_{h})}
    =-
    \frac{\delta \mathcal{H}_{\mathrm{a}}}{\delta \bm_{h}},&\qquad&
     &\mathbf{\mathbf{B}_4(\boldsymbol{x}_{h})}
    =-
    \frac{\delta \mathcal{H}_{\mathrm{z}}}{\delta \bm_{h}}.
    \end{aligned}
\end{equation*}
Then one can calculate from \eqref{discrete energy} that
\begin{equation}\label{expression of discrete effect field}
	\left\{\begin{aligned}
		\mathbf{B}_1(\boldsymbol{x}_{h})
		= &
		(h^*)^2\sum_{k=1}^{3}\sum_{j=1}^{N_{k}}\frac{J_{k}}{h^2}\bm_{h}(\boldsymbol{x}_{h}^{k,j})\\
		\mathbf{B}_2(\boldsymbol{x}_{h})
  = &
		2(h^*)^2 \sum_{j=1}^{3}\frac{K}{h^2}
		\Big(\bm_{h}(\boldsymbol{x}_{h})\cdot\bm_{h}(\boldsymbol{x}_{h}^{1,j})\Big)\bm_{h}(\boldsymbol{x}_{h}^{1,j}),\\
		\mathbf{B}_3(\boldsymbol{x}_{h})
		= &
		\sum_{k=1}^{3}\sum_{j=1}^{N_{k}}\Big(L_{k}(\boldsymbol{x}_{h} ) + L_{k}(\boldsymbol{x}_{h}^{k,j})\Big)\Big(\bm_{h}(\boldsymbol{x}_{h}^{k,j})
		\cdot\mathbf{e}^{z}\Big)\mathbf{e}^{z}\\
        & +
        2\lambda(\boldsymbol{x}_{h})\Big(\bm_{h}(\boldsymbol{x}_{h})\cdot \mathbf{e}^{x}\Big) \mathbf{e}^{x},\\
		\mathbf{B}_4(\boldsymbol{x}_{h})
		= &
		\mu\mathbf{B}(\boldsymbol{x}_{h}).
	\end{aligned}\right.
\end{equation}

\subsection{Energy estimate}
In the following, we introduce the discrete PDE type equation:
\begin{equation}\label{discrete LLG}
	\frac{\d\bm_{h}}{\d t^*}
	=
	-\bm_{h}\times\mathbf{B}^{*}_{\rm{eff},h}
	-
	\alpha\bm_{h}\times(\bm_{h}\times\mathbf{B}^{*}_{\rm{eff},h}),
\end{equation}
where the new effective field, denoted as $\mathbf{B}^{*}_{\rm{eff},h}$, is expressed using the difference quotient notation. It is given by the equation $\mathbf{B}^{*}_{\rm{eff},h}(\boldsymbol{x}_{h}) = \mathbf{B}_1^*(\boldsymbol{x}_{h}) + \mathbf{B}_2^*(\boldsymbol{x}_{h}) + \mathbf{B}_3^*(\boldsymbol{x}_{h}) + \mathbf{B}_4^*(\boldsymbol{x}_{h})$ with $\mathbf{B}_3^* = \mathbf{B}_3^{(1)} + \mathbf{B}_3^{(2)}$, which can computed by:
\begin{equation}\label{new form of discrete effect field}
	\left\{\begin{aligned}
		&\mathbf{B}_1^*(\boldsymbol{x}_{h})
		= 
		\sum_{k=1}^{3} l_{k}J_k\widetilde{\Delta}_{k}\bm_{h}(\boldsymbol{x}_{h})\\
		&\mathbf{B}_2^*(\boldsymbol{x}_{h})
  = 
  2 l_1 K \widetilde{\Delta}_{1}\bm_{h}(\boldsymbol{x}_{h})
  +
		2K\sum_{j=1}^{3}\Big(\bm_{h}(\boldsymbol{x}_{h})\cdot \widetilde{D}^{1,j}\bm_{h}(\boldsymbol{x}_{h})\Big)\widetilde{D}^{1,j}\bm_{h}(\boldsymbol{x}_{h}),\\
		&\mathbf{B}^{(1)}_3(\boldsymbol{x}_{h})
		= 
		\frac{2}{(h^*)^2} \lambda (\boldsymbol{x}_{h}) \Big(\bm_{h}(\boldsymbol{x}_{h})\cdot \mathbf{e}^{x}\Big) \mathbf{e}^{x}
		+
		\frac{2}{(h^*)^2} \sum_{k=1}^{3} N_k L_k (\boldsymbol{x}_{h}) \Big(\bm_{h}(\boldsymbol{x}_{h})\cdot\mathbf{e}^{z}\Big)
		\mathbf{e}^{z},\\
  &\mathbf{B}^{(2)}_3(\boldsymbol{x}_{h})
  =
\frac{h^2}{(h^*)^2} \sum_{k=1}^{3}
l_{k} \Big\{
L_k (\boldsymbol{x}_{h}) \Big(\widetilde{\Delta}_{k}\bm_{h}(\boldsymbol{x}_{h})\cdot\mathbf{e}^{z}\Big)\mathbf{e}^{z}
+
\Big(\widetilde{\Delta}_{k} (L_k\bm_{h})(\boldsymbol{x}_{h})\cdot\mathbf{e}^{z}\Big)\mathbf{e}^{z}\Big\}\\
		&\mathbf{B}^*_4(\boldsymbol{x}_{h})
		= 
		\frac{\mu}{(h^*)^2}\mathbf{B}(\boldsymbol{x}_{h}).
	\end{aligned}\right.
\end{equation}
The relation between the equation above and the descrete LLG equation \eqref{discrete LLG equation}-\eqref{expression of discrete effect field} is concluded in the following proposition.
\begin{proposition}\label{thm: discrete pde equation}
	Let $\bm_{h}$  be the solution of equation \eqref{discrete LLG equation} with effective field $\mathbf{B_{\rm{eff},h}}$ given by \eqref{expression of discrete effect field}. Then $\bm_{h}$ also satisfies the discrete equation \eqref{discrete LLG},
	with a PDE type effective field $\mathbf{B}^{*}_{\rm{eff},h}(\boldsymbol{x}_{h})$ given by \eqref{new form of discrete effect field}.
\end{proposition}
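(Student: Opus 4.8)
The plan is to observe that the two evolution equations \eqref{discrete LLG equation} and \eqref{discrete LLG} share an identical algebraic structure: each right-hand side is built from the single cross product $\bm_h\times(\cdot)$ and the double cross product $\bm_h\times(\bm_h\times\cdot)$ applied to an effective field. Since both of these maps annihilate any vector parallel to $\bm_h(\bx_h)$, it suffices to prove the pointwise identity
\begin{equation*}
	\frac{1}{(h^*)^2}\mathbf{B}_{\rm{eff},h}(\bx_h)
	=
	\mathbf{B}^{*}_{\rm{eff},h}(\bx_h)
	+
	c(\bx_h)\,\bm_h(\bx_h)
\end{equation*}
for some scalar $c(\bx_h)$; once this holds, applying $\bm_h\times$ and $\bm_h\times(\bm_h\times\cdot)$ to both sides kills the $c\,\bm_h$ contribution and converts \eqref{discrete LLG equation} into \eqref{discrete LLG}. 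Before comparing fields I would record that the solution of \eqref{discrete LLG equation} satisfies $\abs{\bm_h(\bx_h)}=1$ for all $t^*$: dotting the equation with $\bm_h$ gives $\tfrac{\d}{\d t^*}\abs{\bm_h}^2=0$, and the initial data is normalized. This normalization is precisely what makes the biquadratic contribution collapse correctly.

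The comparison is then carried out term by term, relying on two elementary consequences of the definitions \eqref{approximations of laplace} and \eqref{another form of delta}: the summation identity $\sum_{j=1}^{N_k}u_h(\bx_h^{k,j}) = l_k h^2\,\widetilde{\Delta}_k u_h(\bx_h) + N_k u_h(\bx_h)$, and, for $k=1$, $\sum_{j=1}^{3}\widetilde{D}^{1,j}u_h(\bx_h)=\tfrac{3h}{4}\widetilde{\Delta}_1 u_h(\bx_h)$. For the exchange field, substituting the summation identity into $\mathbf{B}_1$ splits it into $\sum_k l_k J_k\widetilde{\Delta}_k\bm_h$, which is exactly $\mathbf{B}_1^*$, plus the multiple $\sum_k \tfrac{J_k N_k}{h^2}\bm_h$ of $\bm_h$, which is absorbed into $c\,\bm_h$. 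For the anisotropy field, applying the summation identity both to $\bm_h$ and to the function $\bx_h\mapsto L_k(\bx_h)\bm_h(\bx_h)$ reproduces the Laplacian terms of $\mathbf{B}_3^{(2)}$ together with the $N_k$-weighted terms of $\mathbf{B}_3^{(1)}$, while the single-ion $\lambda$ term matches verbatim, so here no parallel remainder arises at all. The Zeeman field $\mathbf{B}_4$ matches $\mathbf{B}_4^*$ identically after the scaling.

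The delicate term is the biquadratic field $\mathbf{B}_2$, which I expect to be the main obstacle. Writing $\bm_h(\bx_h^{1,j}) = \bm_h(\bx_h)+h\,\widetilde{D}^{1,j}\bm_h(\bx_h)$ and expanding $\big(\bm_h(\bx_h)\cdot\bm_h(\bx_h^{1,j})\big)\bm_h(\bx_h^{1,j})$, I would invoke $\abs{\bm_h(\bx_h)}^2=1$ to reduce $\bm_h(\bx_h)\cdot\bm_h(\bx_h^{1,j})$ to $1+h\,\bm_h(\bx_h)\cdot\widetilde{D}^{1,j}\bm_h(\bx_h)$. The resulting sum splits into four groups: two of them are pointwise multiples of $\bm_h(\bx_h)$ and fold into $c\,\bm_h$; the group $\tfrac{2K}{h}\sum_j\widetilde{D}^{1,j}\bm_h$ becomes $2l_1K\widetilde{\Delta}_1\bm_h$ after the gradient-sum identity (this is exactly where $\abs{\bm_h}=1$ is indispensable, since otherwise the coefficient would carry a spurious factor $\abs{\bm_h}^2$ and fail to match); and the group $2K\sum_j\big(\bm_h\cdot\widetilde{D}^{1,j}\bm_h\big)\widetilde{D}^{1,j}\bm_h$ is precisely the second term of $\mathbf{B}_2^*$. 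Collecting the four contributions confirms that $\tfrac{1}{(h^*)^2}\mathbf{B}_{\rm{eff},h}-\mathbf{B}^{*}_{\rm{eff},h}$ is a scalar multiple of $\bm_h$, which, by the opening reduction, finishes the proof.
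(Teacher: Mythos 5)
Your proposal is correct and follows essentially the same route as the paper's proof: both expand the neighbor values via $\bm_{h}(\boldsymbol{x}_{h}^{k,j}) = d_k h\,\widetilde{D}^{k,j}\bm_{h}(\boldsymbol{x}_{h}) + \bm_{h}(\boldsymbol{x}_{h})$, convert $\sum_j \widetilde{D}^{k,j}$ into $\widetilde{\Delta}_k$ via \eqref{another form of delta}, use $\abs{\bm_h}=1$ to handle the biquadratic term, and treat the mixed $L_k$ terms by applying the same expansion to $L_k\bm_h$. The only cosmetic difference is ordering: you isolate the remainder $c(\boldsymbol{x}_h)\bm_h$ and then apply the cross products, whereas the paper applies $\bm_h\times$ termwise from the start so that parallel contributions vanish immediately.
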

Moreover, the solution $\bm_h$ has the following energy estimate.
\begin{proposition}\label{thm: uniform estimate} 
Suppose the Assumption \eqref{assume} holds. If $\bm_{h}$ is the solution of the equations \eqref{discrete LLG}-\eqref{new form of discrete effect field},  it yields
\begin{equation}\label{bound of gradient}
	|\bm_{h}(\boldsymbol{x}_{h},t)|=1,
 \quad 
 \sup_{t\geq0}\|\widetilde{\nabla}_k\bm_{h}(\cdot,t)\|_{\widetilde{L}_{h}^{2}}^2\leq C,
\end{equation}
Furthermore, given $T>0$, it satisfies the following estimates 
	\begin{equation}\label{energy estimate}
	 \int_0^T \| \bm_{h}\times\mathbf{B}^*_{\rm{eff},h}\|_{\widetilde{L}_{h}^{2}}^{2} \d t \le C,
	 \quad
	 \int_{0}^{T} \|\frac{\partial\bm_{h}}{\partial t}\|_{\widetilde{L}_{h}^{2}}^2
	 \d t \leq C,
	\end{equation}
	where $C$ is a constant that depends on the initial data, $\Vert \mathbf{B} \Vert_{L^1}$, $\|L\|_{L^1}$, and $\|\lambda\|_{L^1}$, but is independent of $h$.
\end{proposition}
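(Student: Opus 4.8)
The plan is to prove the three assertions in the order they are listed: first the pointwise length conservation $|\bm_h|\equiv 1$, then an energy dissipation identity from which both the uniform gradient bound in \eqref{bound of gradient} and the integrated bound on $\bm_h\times\mathbf{B}^*_{\rm{eff},h}$ in \eqref{energy estimate} follow, and finally the $\partial_t\bm_h$ estimate as an immediate corollary.

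\emph{Step 1 (length preservation).} Taking the Euclidean dot product of \eqref{discrete LLG} with $\bm_h$ at each node, both terms on the right vanish since $\bm_h\times\mathbf{B}^*_{\rm{eff},h}$ and $\bm_h\times(\bm_h\times\mathbf{B}^*_{\rm{eff},h})$ are orthogonal to $\bm_h$; hence $\tfrac{d}{dt^*}|\bm_h(\bx_h,t)|^2=0$ and, with $|\bm_{\rm{init},h}|=1$, we get $|\bm_h|\equiv 1$. This step also records the two pointwise algebraic identities I will reuse: since $\bm_h\perp(\bm_h\times\mathbf{B}^*_{\rm{eff},h})$ one has $|\bm_h\times(\bm_h\times\mathbf{B}^*_{\rm{eff},h})|=|\bm_h\times\mathbf{B}^*_{\rm{eff},h}|$, the two vectors $\bm_h\times\mathbf{B}^*_{\rm{eff},h}$ and $\bm_h\times(\bm_h\times\mathbf{B}^*_{\rm{eff},h})$ are mutually orthogonal, and $\mathbf{B}^*_{\rm{eff},h}\cdot(\bm_h\times(\bm_h\times\mathbf{B}^*_{\rm{eff},h}))=-|\bm_h\times\mathbf{B}^*_{\rm{eff},h}|^2$.

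\emph{Step 2 (energy dissipation).} Using $\mathbf{B}_{\rm{eff},h}=-\delta\mathcal{H}_h/\delta\bm_h$ and the fact (Proposition \ref{thm: discrete pde equation}) that $\bm_h\times\mathbf{B}_{\rm{eff},h}=(h^*)^2\,\bm_h\times\mathbf{B}^*_{\rm{eff},h}$, so that $\mathbf{B}_{\rm{eff},h}-(h^*)^2\mathbf{B}^*_{\rm{eff},h}\parallel\bm_h$ and contributes nothing against $\partial_{t^*}\bm_h\perp\bm_h$, I differentiate the energy along the flow and invoke Step 1 to obtain
\[
\frac{d}{dt^*}\mathcal{H}_h[\bm_h]
=-\sum_{\bx_h}\mathbf{B}_{\rm{eff},h}\cdot\partial_{t^*}\bm_h
=-\frac{4(h^*)^2\alpha}{3\sqrt{3}\,h^2}\,\|\bm_h\times\mathbf{B}^*_{\rm{eff},h}\|_{\widetilde{L}_h^2}^2\le 0 .
\]
The delicate point is that on $\mathbb{R}^2$ the formal $\mathcal{H}_h$ carries an infinite additive constant and the exchange/biquadratic block carries the factor $(h^*)^2/h^2$. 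I therefore replace $\mathcal{H}_{\mathrm e},\mathcal{H}_{\mathrm b}$ by their renormalized difference forms $\mathcal{H}_{\mathrm e}^{\mathrm{diff}}=\tfrac14\sum_{i,j}J_{ij}|\bm(\bx_i)-\bm(\bx_j)|^2$ and $\mathcal{H}_{\mathrm b}^{\mathrm{diff}}=\tfrac12\sum_{i,j}K_{ij}\big(1-(\bm(\bx_i)\cdot\bm(\bx_j))^2\big)$, which differ from $\mathcal{H}_{\mathrm e},\mathcal{H}_{\mathrm b}$ only by $\bm$-independent constants and hence obey the same dissipation identity; by $\bm(\bx_i)\cdot\bm(\bx_j)=1-\tfrac12|\bm(\bx_i)-\bm(\bx_j)|^2$ these are nonnegative and, under the ferromagnetic positivity of $J_k,K$, form a positive combination of the Dirichlet seminorms $\|\widetilde\nabla_k\bm_h\|_{\widetilde{L}_h^2}^2$.

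\emph{Step 3 (taming the lower-order energy and assembling the bounds).} Multiplying the energy balance by $h^2/(h^*)^2$ turns the $1/h^2$ prefactor into an $O(1)$ weight and leaves the two genuinely $h$-dependent pieces $\mathcal{H}_{\mathrm a},\mathcal{H}_{\mathrm z}$, which scale like $1/h^2$: by $|\bm_h|=1$ and the Riemann-sum reading of the node sums under Assumption \ref{assume}, one has $|\mathcal{H}_{\mathrm a}|+|\mathcal{H}_{\mathrm z}|\le C h^{-2}\big(\|\lambda\|_{L^1}+\|L\|_{L^1}+\|\mathbf{B}\|_{L^1}\big)$, so their $h^2/(h^*)^2$-weighted contributions are uniformly bounded. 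Evaluating the weighted renormalized balance between $0$ and $t$, discarding the nonnegative terminal exchange/biquadratic energy, and using $p_h\bm_{\rm{init},h}\to\bm_{\rm{init}}$ in $\dot H^1$ together with the norm-equivalence between the discrete gradient and $\|\nabla p_h\bm_{\rm{init},h}\|_{L^2}$ (Section \ref{sec:interplolation to Hexagonal Lattice}) to bound the initial energy, the coercivity of $\mathcal{H}_{\mathrm e}^{\mathrm{diff}}$ yields $\sup_t\|\widetilde\nabla_k\bm_h(\cdot,t)\|_{\widetilde{L}_h^2}^2\le C$ for each $k$. Integrating the dissipation identity over $[0,T]$ gives $\int_0^T\|\bm_h\times\mathbf{B}^*_{\rm{eff},h}\|_{\widetilde{L}_h^2}^2\,dt\le C$, and finally, by the orthogonality and equal length recorded in Step 1, \eqref{discrete LLG} gives the pointwise identity $|\partial_{t^*}\bm_h|^2=(1+\alpha^2)|\bm_h\times\mathbf{B}^*_{\rm{eff},h}|^2$, so $\int_0^T\|\partial_{t^*}\bm_h\|_{\widetilde{L}_h^2}^2\,dt=(1+\alpha^2)\int_0^T\|\bm_h\times\mathbf{B}^*_{\rm{eff},h}\|_{\widetilde{L}_h^2}^2\,dt\le C$, completing \eqref{energy estimate}.

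The main obstacle is the scaling bookkeeping in Step 2–3: reconciling the $1/h^2$ prefactor on the exchange/biquadratic energy with the $1/h^2$ growth of the anisotropy and Zeeman energies on the unbounded lattice, and verifying that the $L^1$ bounds of Assumption \ref{assume} are exactly what make the weighted lower-order terms $h$-uniformly bounded; the secondary point requiring care is the nonnegativity and coercivity of the renormalized exchange-plus-biquadratic form, which is what converts the dissipation identity into a genuine bound on each seminorm $\|\widetilde\nabla_k\bm_h\|_{\widetilde{L}_h^2}$.
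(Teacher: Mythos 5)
Your proposal is correct and reaches the same energy inequality as the paper, but it gets there by a genuinely different route. The paper never differentiates the pairwise energy $\mathcal{H}_h$ at all: it \emph{defines} the difference-quotient energy $\mathcal{H}_h^*$ in \eqref{new form of energy} from the outset (this is exactly your ``renormalized'' $\tfrac{3\sqrt3}{4}\big(\mathcal{H}_{\mathrm e}^{\mathrm{diff}}+\mathcal{H}_{\mathrm b}^{\mathrm{diff}}\big)$ plus a summation-by-parts rewriting of the weighted $\mathcal{H}_{\mathrm a}+\mathcal{H}_{\mathrm z}$), and then proves the dissipation inequality \eqref{energy ineq} by taking the $\widetilde{L}_h^2$ inner product of \eqref{discrete LLG} with $\mathbf{B}^*_{\rm{eff},h}$ and verifying the identity \eqref{partial_t m times B}, i.e.\ $-\big(\partial_t\bm_h,\mathbf{B}^*_{\rm{eff},h}\big)_h=\tfrac{d}{dt}\mathcal{H}_h^*[\bm_h]$, term by term via Lemma \ref{integration by parts for one step} (Proposition \ref{thm: energy inequality}); the biquadratic term and the $\mathbf{B}_3^{(2)}$ term take up most of that computation. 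You instead differentiate the pairwise energy along the flow and use the variational structure together with $\partial_t\bm_h\perp\bm_h$, which lets you bypass the entire summation-by-parts verification; you also keep $\mathcal{H}_{\mathrm a},\mathcal{H}_{\mathrm z}$ undecomposed and bound them crudely by $Ch^{-2}$ via Assumption \ref{assume}, whereas the paper carries the decomposed anisotropy-exchange difference terms inside $\mathcal{H}_h^*$ (and, incidentally, is somewhat cavalier about the $(\widetilde{D}^{k,j}L_k)$ cross term when passing to coercivity — a step your treatment avoids entirely). The remaining steps (non-negativity of the biquadratic form, coercivity of the exchange part under $J_k,K>0$, and the pointwise identity $|\partial_{t}\bm_h|^2=(1+\alpha^2)|\bm_h\times\mathbf{B}^*_{\rm{eff},h}|^2$) coincide with the paper's.

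Two points in your write-up need tightening to be fully rigorous. First, on $\mathbb{R}^2$ with $|\bm_h|=1$ the series $\mathcal{H}_h[\bm_h]$ diverges, so the chain ``differentiate $\mathcal{H}_h$, then subtract a constant'' is only formal — the two energies differ by a \emph{divergent} constant (and, for the exchange part, by $\tfrac14\sum_{i,j}J_{ij}\big(|\bm_i|^2+|\bm_j|^2\big)$, which is constant only on the constraint manifold). The rigorous version is to differentiate the finite renormalized energy directly and observe that its variational derivative differs from $-\mathbf{B}_{\rm{eff},h}$ by a multiple of $\bm_h$ at each site, which is killed by $\partial_t\bm_h\perp\bm_h$; you state precisely this mechanism, so the fix is cosmetic, but the proof should be organized that way rather than as a subtraction of identities involving infinite quantities. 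Second, bounding the initial energy requires the \emph{reverse} of the inequality stated in Proposition \ref{define ph}, namely $\|\widetilde{\nabla}_1\bm_{\rm{init},h}\|_{\widetilde{L}_h^2}\le C\|\nabla p_h\bm_{\rm{init},h}\|_{L^2}$ (plus controlling $\widetilde{\nabla}_2,\widetilde{\nabla}_3$ by $\widetilde{\nabla}_1$ via the triangle inequality); this direction holds by standard finite-element scaling but is not stated in the paper either, so you are no worse off — just be aware you are invoking something the cited propositions do not literally provide.
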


The Proposition \ref{thm: discrete pde equation} and \ref{thm: uniform estimate} will be verified in the Appendix. 


	\subsection{Generalization of Ladysenskaya's interpolation operators}\label{sec: beyond Taylor}
	We now introduce interpolation operators, denoted as $q_{h}$ and $p_{h}$, which are defined in Definitions \ref{def: q_h} and \ref{def: p_h}. The construction of these operators is explained in Section \ref{sec:interplolation to Hexagonal Lattice}. In order to prove Theorem \ref{main theorem}, it is necessary for the operators $q_{h}$ and $p_{h}$ to satisfy certain properties.

\begin{proposition}\label{define qh}
	For any $u_{h}$, $v_h\in \widetilde{L}_{h}^{2}$, the linear interpolation $q_h$ defined in Definition \ref{def: q_h} satisfies
	\begin{enumerate}[(i)]
		\item\label{property 1} $($isometric mapping$)$ $\int_{\mathbb{R}^2} q_h u_h \cdot q_h v_h \d \bx = (u_h,\, v_h)_h$,
		\item\label{property 2} $q_h \{u_h \cdot v_h\} = q_h u_h \cdot q_h v_h$,
		\item\label{property 3} $\inf\limits_{\bx_h\in V_E} u_h(\bx_h) \le q_h u_h (\bx) \le \sup\limits_{\bx_h\in V_E} u_h(\bx_h)$, for any $\bx\in E$.
	\end{enumerate}
	Here $E$ represents a unit cell in the hexagonal lattice, and $V_E$ is its vertex point.
\end{proposition}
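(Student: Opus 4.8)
The plan is to prove Proposition \ref{define qh} by working directly with the explicit form of the interpolation operator $q_h$ on a single unit cell $E$ and then summing over all cells. Since $q_h$ is a \emph{linear} interpolation operator on the hexagonal lattice, on each cell $E$ the value $q_h u_h(\bx)$ is a convex combination of the nodal values $\{u_h(\bx_h) : \bx_h \in V_E\}$, with barycentric-type weights $\phi_{\bx_h}(\bx)$ that are nonnegative, sum to one pointwise, and depend only on the geometry of $E$ (not on $u_h$). Establishing this partition-of-unity structure of the shape functions $\phi_{\bx_h}$ is the first step, and it should follow immediately from the definition of $q_h$ in Definition \ref{def: q_h} once it is recorded in Section \ref{sec:interplolation to Hexagonal Lattice}.

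Given the shape-function representation, properties \eqref{property 2} and \eqref{property 3} are the easy consequences. For \eqref{property 3}, since $q_h u_h(\bx) = \sum_{\bx_h \in V_E} \phi_{\bx_h}(\bx)\, u_h(\bx_h)$ is a convex combination, it is automatically bounded below by $\inf_{V_E} u_h$ and above by $\sup_{V_E} u_h$; this is the standard sandwich estimate for convex combinations and requires only $\phi_{\bx_h} \ge 0$ and $\sum \phi_{\bx_h} = 1$. Property \eqref{property 2} is more subtle: in general the product of two linear interpolants is not the linear interpolant of the product, so the claimed identity $q_h\{u_h \cdot v_h\} = q_h u_h \cdot q_h v_h$ cannot hold if $q_h$ were a genuinely affine interpolation. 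I expect that $q_h$ is in fact defined \emph{cellwise as a piecewise-constant (step) assignment tied to a subdivision of} $E$ — each subregion taking a single nodal value — rather than a globally continuous affine interpolant, so that $\phi_{\bx_h}$ are indicator-type functions that are idempotent ($\phi_{\bx_h}^2 = \phi_{\bx_h}$) and mutually orthogonal ($\phi_{\bx_h}\phi_{\bx_h'} = 0$ for $\bx_h \neq \bx_h'$). Under that reading, \eqref{property 2} is immediate, and this orthogonality is exactly what drives \eqref{property 1}.

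For the isometry \eqref{property 1}, I would integrate $q_h u_h \cdot q_h v_h$ over each cell $E$, expand both factors in the shape functions, and use the orthogonality/idempotency relations to collapse the double sum to a diagonal sum $\sum_{\bx_h \in V_E} \left(\int_E \phi_{\bx_h}^2 \, \d\bx\right) u_h(\bx_h)\cdot v_h(\bx_h)$. The crux is then a \emph{geometric area computation}: one must verify that each node's shape function integrates (over the cells surrounding it, after summing contributions) to exactly the weight $\tfrac{3\sqrt3}{4}h^2$ appearing in the discrete inner product \eqref{inner product}. This area bookkeeping — showing that the total measure attributed to each lattice node equals the area per node of the hexagonal lattice — is the main obstacle, since it depends on the precise subdivision of the unit cell in Definition \ref{def: q_h} and must correctly account for how each node is shared among its adjacent cells. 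Once this area identity is confirmed, summing over all $E \in \mathcal{G}$ yields $\int_{\mathbb{R}^2} q_h u_h \cdot q_h v_h \, \d\bx = \tfrac{3\sqrt3}{4}h^2 \sum_{\bx_h \in \mathcal{G}} u_h(\bx_h)\cdot v_h(\bx_h) = (u_h, v_h)_h$, completing the proof. I would defer the detailed verification of the shape-function relations and the area computation to Section \ref{sec:interplolation to Hexagonal Lattice}, where $q_h$ is constructed explicitly.
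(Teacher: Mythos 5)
Your proposal is correct and takes essentially the same route as the paper: your inference that $q_h$ must be the indicator-based step interpolant (constant on disjoint trapezoids $\Omega_{\bx_h}$, one per node, each of area $\tfrac{3\sqrt{3}}{4}h^2$) matches Definition \ref{def: q_h} exactly, and your orthogonality/diagonal-collapse argument for \eqref{property 1} reduces to the paper's one-line computation
$\int_{\mathbb{R}^2} q_h u_h \cdot q_h v_h \,\d \bx = \sum_{\bx_h\in\mathcal{G}} \abs{\Omega_{\bx_h}}\, u_h(\bx_h)\cdot v_h(\bx_h) = (u_h,\,v_h)_h$.
Properties \eqref{property 2} and \eqref{property 3} then follow immediately from the step-function structure, just as the paper asserts.
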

\begin{proposition}\label{define ph}
	For any $u_h\in \widetilde{L}_{h}^{2}$, $v_{h}\in \widetilde{H}_{h}^{1}$, the linear interpolation $p_h$ defined in Definition \ref{def: p_h} satisfies
	\begin{enumerate}[(i)]
		\setcounter{enumi}{3}
		\item\label{property 4} $\|p_{h}u_{h}\|_{L^{2}}\leq C\|u_{h}\|_{\widetilde{L}_{h}^{2}}$,
		\quad
		$\|\nabla p_{h}v_{h}\|_{L^{2}}\leq C\|\widetilde{\nabla}_{1} v_{h}\|_{\widetilde{L}_{h}^{2}}$,
		\item\label{property 5} 
		$
			\|p_{h}v_{h} - q_{h}v_{h}\|_{L^2}
			\leq
			h \|\widetilde{\nabla}_{1} v_{h}\|_{\widetilde{L}_{h}^{2}},
		$\par
	\end{enumerate}
	where the constant $C$ is independent of $u_h$, $v_h$ and $h$.
\end{proposition}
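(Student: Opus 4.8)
The plan is to establish both estimates \emph{cell by cell} on the two geometric partitions that underlie the definitions of $q_{h}$ and $p_{h}$---the partition into node-cells of area $\tfrac{3\sqrt{3}}{4}h^{2}$ on which $q_{h}v_{h}$ is piecewise constant (this is what makes the isometry \eqref{property 1} hold), and the triangulation $\mathcal{T}_{h}$ with the lattice nodes as vertices on which $p_{h}v_{h}$ is the nodal piecewise-linear interpolant---and then to sum over cells, using that every node lies in only $O(1)$ cells. The three local ingredients are the mass matrix of a piecewise-linear function on a triangle, the representation of its (constant) gradient through edge differences, and a pointwise oscillation bound. All constants are $h$-independent because every triangle of $\mathcal{T}_{h}$ is congruent, with diameter comparable to $h$.

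For the first inequality in \eqref{property 4} I would compute, on a triangle $T$ with nodal values $a,b,c$,
\[
\int_{T}|p_{h}u_{h}|^{2}\,\d\bx
=\frac{|T|}{12}(a,b,c)
\begin{pmatrix}2&1&1\\1&2&1\\1&1&2\end{pmatrix}
(a,b,c)^{\mathsf T}
\le\frac{|T|}{3}\big(a^{2}+b^{2}+c^{2}\big),
\]
using $ab+bc+ca\le a^{2}+b^{2}+c^{2}$; summing over $T\in\mathcal{T}_{h}$ with $|T|\asymp h^{2}$ and bounded overlap gives $\|p_{h}u_{h}\|_{L^{2}}\le C\|u_{h}\|_{\widetilde{L}_{h}^{2}}$. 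For the gradient bound, observe that $\nabla p_{h}v_{h}$ is constant on each $T$ and is fixed by the two edge differences along the two first-nearest-neighbor bonds meeting at a vertex of $T$; each such difference equals $d_{1}h\,\widetilde{D}^{1,j}v_{h}$, and since the edge vectors scale like $h$ the linear map $(\widetilde{D}^{1,1}v_{h},\widetilde{D}^{1,2}v_{h})\mapsto\nabla p_{h}v_{h}$ has $h$-independent norm. Hence $|\nabla p_{h}v_{h}|^{2}\le C\sum_{j}|\widetilde{D}^{1,j}v_{h}|^{2}$ on $T$, and integrating, summing, and matching the normalization $g_{1}$ in $\widetilde{\nabla}_{1}$ yields the second inequality.

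For \eqref{property 5} the key observation is that on each unit cell $E$ both interpolants take values between $\min_{V_{E}}v_{h}$ and $\max_{V_{E}}v_{h}$: for $p_{h}v_{h}$ this is convexity of the piecewise-linear interpolant, and for $q_{h}v_{h}$ it is exactly property \eqref{property 3} of Proposition~\ref{define qh}. Thus $|p_{h}v_{h}-q_{h}v_{h}|\le\mathrm{osc}_{E}v_{h}$ pointwise on $E$, and bounding the oscillation by a chain of adjacent nearest-neighbor differences gives $\mathrm{osc}_{E}v_{h}\le Ch\big(\sum_{j}|\widetilde{D}^{1,j}v_{h}|^{2}\big)^{1/2}$. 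Then $\int_{E}|p_{h}v_{h}-q_{h}v_{h}|^{2}\le|E|\,(\mathrm{osc}_{E}v_{h})^{2}\le C|E|h^{2}\sum_{j}|\widetilde{D}^{1,j}v_{h}|^{2}$, and summation over $E$ produces the stated bound, with the geometric constants arranged so that the prefactor is exactly $h$.

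The step I expect to be the main obstacle is the geometric bookkeeping forced by the loss of symmetry. The two partitions---node-cells for $q_{h}$ and triangles for $p_{h}$---do not coincide, so for \eqref{property 5} one must select a common cell $E$ on which \emph{both} min--max bounds and the oscillation estimate hold simultaneously, and do so separately for the two node orientations $\mathcal{G}_{1}$ and $\mathcal{G}_{2}$, whose local stencils are reflected copies of one another. Carrying the constants through this comparison carefully enough to land the \emph{sharp} factor $h$ in \eqref{property 5}, rather than a generic $Ch$, is the delicate point; by contrast the gradient identity in \eqref{property 4}, once the first-nearest-neighbor edge structure of $\mathcal{T}_{h}$ is fixed, is essentially a congruence computation.
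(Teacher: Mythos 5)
Your proposal is built on a misreading of Definition \ref{def: p_h}, and this creates a genuine gap. You treat $p_h$ as the nodal piecewise-linear ($P_1$) interpolant on a triangulation $\mathcal{T}_h$ of the lattice, but the operator actually defined in the paper is a Lagrangian interpolant of tensor-product (bilinear) type on the \emph{same} trapezoidal cells $\Omega_{\boldsymbol{x}_h}$ that underlie $q_h$: its formula contains the cross term $\frac{y_1 y_2}{h(y_1+h)}\bigl(u_h(\boldsymbol{x}_h)-u_h(\boldsymbol{x}_h^{1,3})-u_h(\boldsymbol{x}_h^{1,1})+u_h(\boldsymbol{x}_h^{2,6})\bigr)$, so $p_hu_h$ is \emph{not} affine on any cell. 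Two of your three local ingredients therefore do not apply to the operator in the statement: the $P_1$ mass-matrix identity $\int_T |p_hu_h|^2 = \frac{|T|}{12}(a,b,c)M(a,b,c)^{\mathsf T}$ is an identity for a different function, and — more seriously — the claim that ``$\nabla p_h v_h$ is constant on each $T$ and is fixed by the two edge differences'' is simply false for the actual $p_h$: because of the cross term, the gradient varies inside each cell, and it involves the values at \emph{four} nodes ($\boldsymbol{x}_h$, $\boldsymbol{x}_h^{1,1}$, $\boldsymbol{x}_h^{1,3}$, $\boldsymbol{x}_h^{2,6}$), not two. The correct route (the paper's) is more elementary: differentiate the explicit formula, use that the weight functions have gradients of size $O(1/h)$ (concretely $\nabla y_1=(2/\sqrt{3},0)$, $\nabla y_2=(1/\sqrt{3},1)$), and obtain the pointwise bound $|\nabla p_h u_h(\bx)|\le |\widetilde{D}^{1,1}u_h(\boldsymbol{x}_h)|+|\widetilde{D}^{1,3}u_h(\boldsymbol{x}_h)|+|\widetilde{D}^{1,2}u_h(\boldsymbol{x}_h^{1,1})|$ on $\Omega_{\boldsymbol{x}_h}$; similarly $|p_hu_h|$ is bounded pointwise by the sum of the four corner values. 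Then both inequalities in \eqref{property 4} follow by integrating over each cell ($|\Omega_{\boldsymbol{x}_h}|=\frac{3\sqrt{3}}{4}h^2$) and summing, with bounded overlap of the stencils.

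A secondary consequence of the same misreading: the ``main obstacle'' you identify for \eqref{property 5} — reconciling two non-coinciding partitions (triangles for $p_h$ versus node-cells for $q_h$) — does not exist. Both interpolants live on the identical trapezoids $\Omega_{\boldsymbol{x}_h}$, so the comparison is cell-by-cell on a single partition: on $\Omega_{\boldsymbol{x}_h}$ one has $q_hu_h\equiv u_h(\boldsymbol{x}_h)$, and subtracting this constant from the interpolation formula bounds $|p_hu_h-q_hu_h|$ directly by $h\sum_{j=1}^3|\widetilde{D}^{1,j}u_h|$ (the cross-term difference is reached by chaining two nearest-neighbor differences). Your maximum-principle observation does survive the correction — the bilinear weights factor as $(1-s)(1-t)$, $s(1-t)$, $(1-s)t$, $st$ with $s=y_1/h$, $t=y_2/(y_1+h)\in[0,1]$, so they are non-negative and sum to one — but it is not needed once the explicit formula is used. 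In short: your global strategy (pointwise cell estimates plus summation) matches the paper's, but the local computations must be redone for the operator actually defined; as written, the proof establishes the proposition for a different interpolation operator.
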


For brevity, the proofs of Propositions \ref{define qh} and \ref{define ph} are left in Section \ref{subsec:Proof of Propositions of qh and ph}.

\begin{remark}
	Here, the required properties are provided to support the weak convergence argument. 
\end{remark}

\begin{remark}\label{remark: introduction of rh}
	Similar interpolation operators $p_h$ and $q_h$ were introduced in \cite{ladyzhenskaya_boundary_1985} for the square lattice case. For any  discrete function $u_h$, the function $q_hu_h$ is constructed as a step function in $\mathbb{R}^2$, and $p_hu_h$ is constructed as a piece-wise linear function in $\mathbb{R}^2$.
	
	Besides $q_h$ and $p_h$, one more interpolation operator $r_h$ was suggested in \cite{ladyzhenskaya_boundary_1985} to conduct the weak convergence of derivatives. More precisely, denote a differential operator $D$ and its discrete form $\widetilde{D}$, and given a convergent sequence
	\begin{equation*}\label{convergence sequence}
		q_hu_h,\, p_hu_h \rightarrow u \mbox{ strongly,}\quad\ \ \
		q_h \widetilde{D} u_h \rightarrow v \mbox{ weakly,}
	\end{equation*}
	in $L^2$. $r_h$ was introduced to  prove $v = D u$.
	
Defining an interpolation operator $r_h$ on a hexagonal lattice poses difficulties. Instead, we propose an alternative strategy to prove $v=Du$ in this work. This approach will be applied in the upcoming proof of \eqref{weak convergence of derivatives in space for k=2} and \eqref{weak convergence of derivatives in space}.
\end{remark}

\subsection{Convergence of $\bm_h$ and its derivatives}
\label{sec:existence}
  In this subsection, we will prove the convergence of the discrete solution $\bm_h$ and its derivatives, with the aid of the uniform estimates in Proposition \ref{thm: uniform estimate} and interpolation operator $q_h$ and $p_h$ which satisfy Propositions \ref{define qh} and \ref{define ph}. 
  
%
  
   Assume that $q_h$ and $p_h$ are interpolation operators satisfying   Propositions \ref{define qh} and \ref{define ph}.
Applying properties \eqref{property 1} and \eqref{property 4}, one can deduce from \eqref{bound of gradient}-\eqref{energy estimate}
\begin{equation}\label{bounded result}
	\begin{gathered}
		p_{h}\bm_{h}\  \mbox{remains in a bounded set of  $L^{\infty}(0,T;\dot{H}^{1})\cap L^\infty(0,T;L^\infty)$},\\
		\frac{d}{dt}p_{h}\bm_{h}, \,\frac{d}{dt}q_{h}\bm_{h}
		\mbox{ remain in a bounded set of $L^{2}(0,T;L^{2})$}.
	\end{gathered}
\end{equation}
As a consequence of Aubin-Lions Lemma, there exists a subsequence of $p_{h}\bm_{h}$ converging strongly to some  function $\bm^*$ in $C([0,T];L^{2}_{\mathrm{loc}})$. Due to the  property \eqref{property 5} and \eqref{bound of gradient},  it as well holds the convergence
\begin{gather}\label{strong convergence of m}
	q_{h}\bm_{h} \rightarrow \bm^*\quad  \mathrm{strongly\ in}\ C([0,T];L^{2}_{\mathrm{loc}}),
\end{gather}
and thus converges almost everywhere. Note that $\abs{q_{h}\bm_{h}(\bx)} = 1$ for all $\bx \in \mathbb{R}^2$, it follows from the properties \eqref{property 2}-\eqref{property 3} that 
\begin{equation}\label{|m|=1}
\abs{\bm^*} = 1 \quad \text{a.e in $\mathbb{R}^2$}.
\end{equation}
 Moreover, the statement \eqref{bounded result} tells that a subsequence of $q_h\bm_h$ (which we also denote by $q_h\bm_h$) satisfies
\begin{gather}\label{weak convergence of derivatives in time}
	\frac{\partial}{\partial t}q_{h}\bm_{h} \rightarrow \frac{\partial\bm^*}{\partial t}
	\quad
	\mathrm{weakly\ in}\ L^{2}(0,T;L^{2}).
\end{gather}

Now we consider the convergence of spatial derivatives. When $k=2$, according to the  estimate \eqref{bound of gradient},  $q_{h} \big\{\kappa \widetilde{D}^{k,j} \bm_{h}\big\}$ remains in a bounded set of $L^{\infty}(0,T,L^{2})$. Thus it has a subsequence which converges to some limit function weakly * in $L^{\infty}(0,T,L^{2})$. We assert that, this limit function is $\frac{\partial \bm^*}{\partial \nu^j}$. In fact, due to the formula of integration by parts and the property \eqref{property 1} for $q_h$, it holds that 
\begin{equation*}
	\begin{aligned}
		\int_{\mathbb{R}^{2}}q_{h} \big\{\kappa\widetilde{D}^{2,j} \bm_{h} \big\}
		\cdot q_h \boldsymbol{\mathbf{\varphi}_h} \mathrm{d}\boldsymbol{x}
		= -
		\int_{\mathbb{R}^{2}}q_{h}\bm_{h}
		\cdot q_h \big\{\kappa\widetilde{D}^{2,j}_B \boldsymbol{\mathbf{\varphi}}_h\big\} \mathrm{d}\boldsymbol{x}, 
	\end{aligned}
\end{equation*}
here, $\boldsymbol{\mathbf{\varphi}}_h$ is the discrete function satisfying $\boldsymbol{\mathbf{\varphi}}_h(\boldsymbol{x}_h) = \boldsymbol{\mathbf{\varphi}}(\boldsymbol{x}_h)$ for every
$\boldsymbol{\mathbf{\varphi}}(x) \in C_c^\infty(\mathbb{R}^2)$. By virtue of Lemma \ref{lemma: convergence of discrete difference} and  \eqref{strong convergence of m}, one has  
\begin{equation}\label{m_h weak convergence}
\int_{\mathbb{R}^2} q_h \bm_h \cdot q_h \left\{\kappa \widetilde{D}^{2,j}_B \boldsymbol{\mathbf{\varphi}}_h \right\}\mathrm{d}\boldsymbol{x} \rightarrow 
\int_{\mathbb{R}^2} \bm^* \cdot \frac{\partial \boldsymbol{\mathbf{\varphi}}}{\partial \nu_j}\mathrm{d}\boldsymbol{x}.
\end{equation}
On the other hand
\begin{equation*}
	\begin{aligned}
		\Big|\int_{\mathbb{R}^{2}}&q_{h} \big\{\kappa\widetilde{D}^{2,j} \bm_{h} \big\}
		\cdot \boldsymbol{\mathbf{\varphi}} \mathrm{d}\boldsymbol{x}
		-\int_{\mathbb{R}^{2}}q_{h} \big\{\kappa\widetilde{D}^{2,j} \bm_{h} \big\}
		\cdot q_h \boldsymbol{\mathbf{\varphi}_h} \mathrm{d}\boldsymbol{x} \Big|\\
		=&
		\Big|\int_{\mathbb{R}^{2}}q_{h} \big\{\kappa\widetilde{D}^{2,j} \bm_{h} \big\}
		\cdot (\boldsymbol{\mathbf{\varphi}}-q_h \boldsymbol{\mathbf{\varphi}_h}) \mathrm{d}\boldsymbol{x}
		\Big|
		\leq \big\|q_{h} \big\{\kappa\widetilde{D}^{2,j} \bm_{h} \big\}\big\|_{L^2}\|\boldsymbol{\mathbf{\varphi}}-q_h \boldsymbol{\mathbf{\varphi}_h}\|_{L^2},
		\end{aligned}
	\end{equation*}
which together with \eqref{m_h weak convergence} implies that
\begin{equation}\label{weak convergence of derivatives in space for k=2}
	\begin{gathered}
		q_{h} \big\{\kappa \widetilde{D}^{2,j}\bm_{h}\big\}
		\rightarrow
		\frac{\partial \bm^*}{\partial \nu^j}
		\quad
		\mbox{weakly in $L^{\infty}(0,T,L^{2})$},
	\end{gathered}
\end{equation}
for all $j=1,\cdots,6$. 

As for the case $k=1,3$,
note that the sequence $\widetilde{\mathbb{D}}^{k}_{x}\bm_{h}$ and
$ \widetilde{\mathbb{D}}^{k}_{y} \bm_{h} $
are also bounded in $L^{\infty}(0,T,\widetilde{L}^{2}_h)$ by \eqref{bound of gradient}, one can repeat the above step and use the result of integration by parts in Lemma \ref{integration by parts} and strong convergence in Lemma \ref{lemma: convergence of discrete difference}, to derive that
\begin{equation}\label{weak convergence of derivatives in space}
\begin{gathered}
	q_{h} \big\{\widetilde{\mathbb{D}}^{k}_{x}\bm_{h}\big\}
	\rightarrow
	\frac{\partial \bm^*}{\partial x} ,
	\quad
	q_{h} \big\{ \widetilde{\mathbb{D}}^{k}_{y} \bm_{h} \big\}
	\rightarrow
	\frac{\partial \bm^*}{\partial y} ,
\end{gathered}
\end{equation}
weakly in $L^{\infty}(0,T,L^{2})$.

\subsection{Convergence of degenerated Laplace term}
Let us consider the limit for the integral of degenerated Laplace term
\begin{equation*}
	T_h := \int_{\mathbb{R}^{2}}q_{h} \{ \bm_{h} \times \widetilde{\Delta}_{k}\bm_{h} \}
	\cdot q_h \boldsymbol{\mathbf{\varphi}}_h \mathrm{d}\boldsymbol{x}. 
\end{equation*}
According to Lemma \ref{integration by parts for one step} and the property \eqref{property 1} of $q_h$, 
\begin{align}\label{degenerated Laplace term}
	T_h
	= - \frac{d_k^2}{2l_k} \sum_{j=1}^{N_{k}}
	\int_{\mathbb{R}^{2}}q_{h} \big\{ \bm_{h} \times \widetilde{D}^{k,j} \bm_{h} \big\}
	\cdot q_h \big\{\widetilde{D}^{k,j} \boldsymbol{\mathbf{\varphi}}_h\big\} \mathrm{d}\boldsymbol{x}.
\end{align}
Now we assert that 
\begin{equation}\label{convergence of m times Delta m}
	\begin{aligned}
		&\lim\limits_{h\rightarrow 0}  T_h
		= -
		\int_{\mathbb{R}^{2}} \bm^* \times \nabla\bm^*
		\cdot \nabla \boldsymbol{\mathbf{\varphi}} \mathrm{d}\boldsymbol{x}
		=
		\int_{\mathbb{R}^{2}} \big\{\nabla \cdot (\bm^* \times \nabla\bm^*)\big\}
		\cdot \boldsymbol{\mathbf{\varphi}} \mathrm{d}\boldsymbol{x}.
	\end{aligned}
\end{equation}
\begin{proof}[\textbf{Proof of \eqref{convergence of m times Delta m}}]
The proof is divided  into  two cases: $k=2$ and $k=1, 3$. 
For the case $k=2$, due to the fact $\kappa^2(\bx_h) =1$, one has 
\begin{align*}
	T_h
	=& - \frac{d_k^2}{2l_k} \sum_{j=1}^{N_{k}}
	\int_{\mathbb{R}^{2}}q_{h} \big\{ \bm_{h} \times \kappa\widetilde{D}^{k,j} \bm_{h} \big\}
	\cdot q_h \big\{\kappa\widetilde{D}^{k,j} \boldsymbol{\mathbf{\varphi}}_h\big\} \mathrm{d}\boldsymbol{x}\\
	=&- \frac{1}{3} \sum_{j=1}^{6}
	\int_{\mathbb{R}^{2}}q_{h}\bm_{h} \times q_h\big\{  \kappa\widetilde{D}^{2,j} \bm_{h} \big\}
	\cdot q_h \big\{\kappa\widetilde{D}^{2,j} \boldsymbol{\mathbf{\varphi}}_h\big\} \mathrm{d}\boldsymbol{x}.
\end{align*}
 Let $h \rightarrow 0$, it follows from \eqref{strong convergence of m}, \eqref{weak convergence of derivatives in space for k=2} and Lemma \ref{lemma: convergence of discrete difference} that  
 \begin{equation*}
 \lim\limits_{h\rightarrow 0}
		T_h =
-\frac{1}{3} \sum_{j=1}^{6}
\int_{\mathbb{R}^2} \left(\bm^* \times \frac{\partial \bm^*}{\partial \nu^j} \right) \cdot \frac{\partial \boldsymbol{\mathbf{\varphi}}}{\partial \nu^j} \mathrm{d}\boldsymbol{x} = - \int_{\mathbb{R}^2} \bm \times  \nabla \bm \cdot \nabla \boldsymbol{\mathbf{\varphi}} \mathrm{d}\boldsymbol{x},
 \end{equation*}
 with $\nu^j, 1\le j \le 6, $ given in Lemma \ref{lemma: convergence of discrete difference}.

 For the case $k=1,3$, applying the formula \eqref{transform from one step to multi step} to the right-hand side of \eqref{degenerated Laplace term}, one gets 
\begin{equation*}
	\begin{aligned}
		T_h =
		&-
		\int_{\mathbb{R}^{2}} \Big(q_{h} \{ \bm_{h} \times \widetilde{\mathfrak{D}}^{k}_{1} \bm_{h} \}
		\cdot q_h \big\{\widetilde{\mathfrak{D}}^{k}_{0} \boldsymbol{\mathbf{\varphi}}_h\big\}\Big) \\
		&+
		\Big(q_{h} \big\{ \bm_{h} \times \widetilde{\mathfrak{D}}^{k}_{1,B }\bm_{h} \big\}
		\cdot q_h \big\{\widetilde{\mathfrak{D}}^{k}_{0,B} \boldsymbol{\mathbf{\varphi}}_h\big\}\Big)
		+
		\Big(q_{h} \big\{ \bm_{h} \times \widetilde{\mathbb{D}}^{k}_{x} \bm_{h} \big\}
		\cdot q_h \big\{\widetilde{\mathbb{D}}^{k}_{x} \boldsymbol{\mathbf{\varphi}}_h\big\}\Big) \mathrm{d}\boldsymbol{x}.
	\end{aligned}
\end{equation*}
Thus using the strong convergence for $\widetilde{\mathfrak{D}}^{k}_{0}$ $\widetilde{\mathfrak{D}}^{k}_{0,B}$ by \eqref{convergence of test function},
together with formula \eqref{formula of three step difference quotion}, one can write
\begin{equation*}
	\begin{aligned}
		\lim\limits_{h\rightarrow 0}
		T_h
		=  -
		\lim\limits_{h\rightarrow 0} \int_{\mathbb{R}^{2}}
		\Big(q_{h} \big\{ \bm_{h} \times \widetilde{\mathbb{D}}^{k}_{x} \bm_{h} \big\}
		\cdot \frac{\partial \boldsymbol{\mathbf{\varphi}}}{\partial x} \Big)
		+
		 \Big(q_{h} \big\{ \bm_{h} \times \widetilde{\mathbb{D}}^{k}_{y} \bm_{h} \big\}
		\cdot \frac{\partial \boldsymbol{\mathbf{\varphi}}}{\partial y} \Big) \mathrm{d}\boldsymbol{x}.
	\end{aligned}
\end{equation*}
Using convergence result \eqref{strong convergence of m}, \eqref{weak convergence of derivatives in space}, and property \eqref{property 2}  of $q_h$, it finally leads to \eqref{convergence of m times Delta m}, for $k=1,3$.
\end{proof}

\subsection{Convergence of equation}
For every $\boldsymbol{\mathbf{\varphi}}\in C_{c}^{2}$,  define $\boldsymbol{\mathbf{\varphi}}_h(\bx_h) = \boldsymbol{\varphi}(\bx_h)$. Applying interpolation operator $q_h$ to the equation \eqref{discrete LLG}, and taking $q_{h} \boldsymbol{\mathbf{\varphi}}_h$ as a test function, one gets
\begin{align}\label{integral form}
	\begin{split}
		\int_{\mathbb{R}^{2}} \frac{\partial}{\partial t}q_{h}\bm_{h} \cdot q_{h} \boldsymbol{\mathbf{\varphi}}_{h} \mathrm{d}\boldsymbol{x}
		=&
		-\int_{\mathbb{R}^{2}} q_{h}\big\{\bm_{h}\times\mathbf{B}_{\rm {eff},h}^{*}\big\} \cdot q_{h}\boldsymbol{\mathbf{\varphi}}_{h}\mathrm{d}\boldsymbol{x}\\
		& -
		\alpha \int_{\mathbb{R}^{2}} q_{h}\big\{\bm_{h}\times
		(\bm_{h}\times\mathbf{B}_{\rm {eff},h}^{*})\big\}
		\cdot q_{h}\boldsymbol{\mathbf{\varphi}}_{h}\mathrm{d}\boldsymbol{x}.\\
	\end{split}
\end{align}
    First, we consider the limit of the first term on right-hand side and prove that
\begin{align}\label{convergence of precession term}
	\int_{\mathbb{R}^{2}} q_{h}(\bm_{h}\times\mathbf{B}_{\rm {eff},h}^{*})\cdot q_{h}\boldsymbol{\mathbf{\varphi}}_{h}\mathrm{d}\boldsymbol{x}
	\rightarrow
	\int_{\mathbb{R}^{2}} \bm^* \times\mathbf{B}_{\rm {eff}}\cdot \boldsymbol{\mathbf{\varphi}} \mathrm{d}\boldsymbol{x}.
\end{align}
 Substitute the expression of $\mathbf{B}_{\rm {eff},h}^{*} = \mathbf{B}_1^{*} + \mathbf{B}_2^{*} + \mathbf{B}_3^{*} + \mathbf{B}_4^{*}$ in \eqref{new form of discrete effect field} to the left-hand side of equation \eqref{convergence of precession term}. Then one can easily pass to the limit in \eqref{convergence of precession term} by utilizing \eqref{strong convergence of m} and \eqref{convergence of m times Delta m} with two exemptions:  
the second term in $\mathbf{B}_2^*$, and the term $\mathbf{B}_3^{(2)}$. For the second term in $\mathbf{B}_2^*$, let us denote it by
\begin{equation*}
		\widetilde{\mathbf{B}}_2^* =
		2K \sum_{j=1}^{3} \int_{\mathbb{R}^{2}} q_{h}\big\{
		(\bm_{h}\cdot \widetilde{D}^{1,j} \bm_{h})
		\bm_{h}\times \widetilde{D}^{1,j}\bm_{h}\big\}
		\cdot q_{h}\mathbf{\boldsymbol{\varphi}}_{h}\mathrm{d}\boldsymbol{x},
\end{equation*}
Since $|\bm_h(\bx_h)| = 1$ for all $\bx_h$, it follows from the definition of $\widetilde{D}^{1, j}$ that
\begin{equation*}
	\begin{aligned}
	\bm_{h}(\boldsymbol{x}_h)\cdot\widetilde{D}^{1,j}\bm_{h}(\boldsymbol{x}_{h})
	=&
	\bm_{h}(\boldsymbol{x}_h^{1,j})\cdot\widetilde{D}^{1,j}\bm_{h}(\boldsymbol{x}_{h}^{1,j}), \\
	\bm_{h}(\boldsymbol{x}_h)\times\widetilde{D}^{1,j}\bm_{h}(\boldsymbol{x}_{h})
	=&
	-\bm_{h}(\boldsymbol{x}_h^{1,j})\times\widetilde{D}^{1,j}\bm_{h}(\boldsymbol{x}_{h}^{1,j}).
	\end{aligned}
	\end{equation*}
Consequently,
\begin{equation}\label{term 1 in K part}
	\begin{aligned}
		&\big(\bm_{h}(\boldsymbol{x}_{h})
		\cdot \widetilde{D}^{1,j}\bm_{h}(\boldsymbol{x}_{h})\big)
		\bm_{h}(\boldsymbol{x}_{h})\times \widetilde{D}^{1,j}\bm_{h}(\boldsymbol{x}_{h})\\
		=&\frac{h}{2}
		\widetilde{D}^{1,j} \Big\{\big(\bm_{h}(\boldsymbol{x}_{h})
		\cdot \widetilde{D}^{1,j}\bm_{h}(\boldsymbol{x}_{h})\big)
		\bm_{h}(\boldsymbol{x}_{h})\times \widetilde{D}^{1,j}\bm_{h}(\boldsymbol{x}_{h})\Big\}.
	\end{aligned}
\end{equation}
Substituting \eqref{term 1 in K part} into $F_1$, and applying integration by parts in Lemma \ref{integration by parts} and  H\"older's inequality implies
\begin{equation*}
	\begin{aligned}
		\abs{\widetilde{\mathbf{B}}_2^*}
		\le &
		h K_1 \sum_{j=1}^{3} \|  \widetilde{D}^{1,j} \boldsymbol{\varphi}_{h} \|_{\widetilde{L}^\infty_h}
		\Vert \widetilde{D}^{1,j} \bm_{h} \Vert_{\widetilde{L}_{h}^{2}}^2,
	\end{aligned}
\end{equation*}
which tends to $0$ as  $h\rightarrow 0$. Finally, for the term $\mathbf{B}_3^{(2)}$, utilizing integration by parts yields
\begin{equation*}
     \begin{aligned}
     &\int_{\mathbb{R}^{2}} q_{h}(\bm_{h}\times \mathbf{B}^{(2)}_3)\cdot q_{h}\boldsymbol{\mathbf{\varphi}}_{h}\mathrm{d}\boldsymbol{x}\\
	= &
		\frac{h^2}{(h^*)^2} \sum_{k=1}^{3} \sum_{j=1}^{N_k} \frac{d_k^2}{2} 
  \int_{\mathbb{R}^{2}} q_h \big\{\widetilde{D}^{k,j} (L_k\bm_{h})\cdot\mathbf{e}^{z}\big\} \cdot 
  q_{h} \big\{ \widetilde{D}^{k,j} (\bm_{h}\times \mathbf{e}^{z} 
  \cdot \mathbf{\boldsymbol{\varphi}}_{h}) \big\}
  \mathrm{d}\boldsymbol{x}\\
		&+
		\frac{h^2}{(h^*)^2} \sum_{k=1}^{3} \sum_{j=1}^{N_k}\frac{d_k^2}{2} 
  \int_{\mathbb{R}^{2}} q_h \big\{ \widetilde{D}^{k,j} \bm_{h}\cdot\mathbf{e}^{z}\big\} \cdot 
  q_h \big\{
  \widetilde{D}^{k,j} (L_k\bm_{h} \times \mathbf{e}^{z} 
  \cdot \mathbf{\boldsymbol{\varphi}}_{h} ) \big\}
  \mathrm{d}\boldsymbol{x}.
     \end{aligned}
 \end{equation*}
 Therefore by H\"older's inequality one can derive
\begin{equation*}
	\begin{aligned}
		&\big|\int_{\mathbb{R}^{2}} q_{h}(\bm_{h}\times \mathbf{B}^{(2)}_3)\cdot q_{h}\boldsymbol{\mathbf{\varphi}}_{h}\mathrm{d}\boldsymbol{x} \big|\\
		\le &
		\frac{h^2}{(h^*)^2} \sum_{k=1}^{3} \sum_{j=1}^{N_k} d_k^2
		\Big(\Vert \widetilde{D}^{k,j}\bm_{h} \Vert_{\widetilde{L}_{h}^{2}}
		\Vert \widetilde{D}^{k,j} \boldsymbol{\varphi}_{h}\Vert_{\widetilde{L}_{h}^{2}}
		+
	\| \mathbf{\boldsymbol{\varphi}}_{h} \|_{\widetilde{L}^\infty_h}
		\Vert \widetilde{D}^{1,j} \bm_{h} \Vert_{\widetilde{L}_{h}^{2}}^2\Big),
	\end{aligned}
\end{equation*}
thus it tends to $0$ as  $h\rightarrow 0$. Hence the assertion \eqref{convergence of precession term} is proved.

Furthermore, inequality \eqref{bound of gradient} demonstrates that the term $q_h\big\{\bm_{h}\times\mathbf{B}_{\rm {eff},h}^{*}\big\}$ is bounded in $L^{2}(0,T,L^{2})$. Therefore there is a subsequence which converges weakly in $L^{2}(0,T,L^{2})$. Considering \eqref{convergence of precession term}, it can be inferred that
\begin{gather}\label{weak convergence of precession term}
	q_h\big\{\bm_{h}\times\mathbf{B}_{\rm {eff},h}^{*}\big\}
	\rightarrow
	\bm^* \times\mathbf{B}_{\rm {eff}}
	\quad
	\mathrm{weakly\ in}\ L^{2}(0,T,L^{2}).
\end{gather}
Let us return to the equation \eqref{integral form}. One can take the limit as $h \rightarrow 0$ for the first term in \eqref{integral form} by using \eqref{convergence of precession term}. Moreover, the last term in \eqref{integral form} can pass to the limit by utilizing \eqref{strong convergence of m} and \eqref{weak convergence of precession term}. Consequently, by taking the limit in \eqref{integral form}, we can conclude that limiting function $\bm^*$ satisfies the continuity equation \eqref{define weak solution}.

\section{Interpolation to hexagonal Lattice}
\label{sec:interplolation to Hexagonal Lattice}
In this section, we provide the construction of interpolation operators $q_h$ and $p_h$, which fulfill Propositions \ref{define qh} and \ref{define ph}.
Following the approach used in \cite{ladyzhenskaya_boundary_1985} for the square lattice case, we define the interpolation function $q_h u_h$ as a step function and $p_h u_h$ as a piece-wise linear function for any discrete function $u_h$.

\subsection{Step interpolation function}
To introduce the definition of $q_h$, we first need to establish a mapping from lattice points in $\mathcal{G}$ to open sets in $\mathbb{R}^2$. We use open set $\Omega_{\boldsymbol{x}_{h}}\subset \mathbb{R}^2$ to denote an adjoint area of $\boldsymbol{x}_{h}\in \mathcal{G}$, such that
\begin{equation*}
	\begin{aligned}
		&\mbox{if $\boldsymbol{x}_{h}\in \mathcal{G}_{1}$, then $\Omega_{\boldsymbol{x}_{h}}$ is the trapezoid on the upper right side of $\boldsymbol{x}_{h}$},\\
		&\mbox{if $\boldsymbol{x}_{h}\in \mathcal{G}_{2}$, then $\Omega_{\boldsymbol{x}_{h}}$ is the trapezoid on the lower left side of $\boldsymbol{x}_{h}$},
	\end{aligned}
\end{equation*}
as shown in Figure \ref{Fig2}.
It is easy to verify that the sets $\Omega_{\boldsymbol{x}_{h}}$ are mutually disjoint for all $\boldsymbol{x}_{h}\in \mathcal{G}$. Furthermore, the entire space of $\mathbb{R}^2$ is covered by $\mathop{\cup}_{\boldsymbol{x}_{h}\in \mathcal{G}} \bar{\Omega}_{\boldsymbol{x}_{h}}$.
\begin{figure}[htbp]
	\centering
	\includegraphics[scale=0.60]{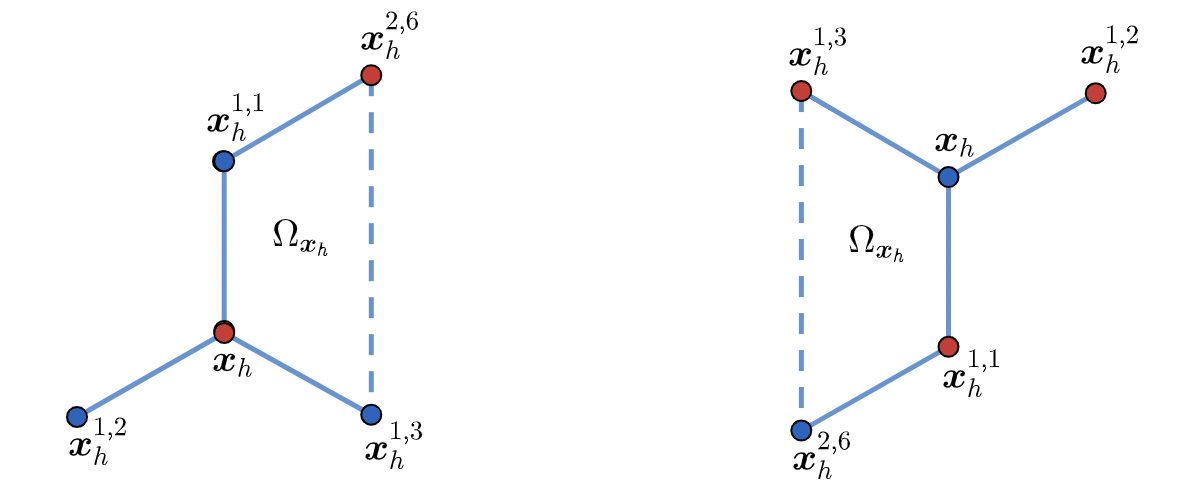}
	\caption{Lattice division 1. }
	\label{Fig2}
\end{figure}
\begin{definition}\label{def: q_h}
	Given any function $u_h(\bx_h)$ defined on $\mathcal{G}$,  define a step function $q_{h}u_{h}(\bx)$ as
	\begin{equation*}\label{definition of q_{h}}
		q_{h}u_{h}(\boldsymbol{x})=u_h(\boldsymbol{x}_{h}),\quad
		\mbox{for any $\boldsymbol{x}\in\Omega_{\boldsymbol{x}_{h}}$, $\boldsymbol{x}_{h}\in \mathcal{G}$}.
	\end{equation*}
\end{definition}

\subsection{Piece-wise linear interpolation function} The step function $q_h u_h$ is simple to evaluate, but it lacks continuity. Therefore, we require a piecewise linear function $p_h u_h$ that maintains higher regularity from $u_h$. Before defining $p_h$, let us first discuss the construction strategy.

Let $\boldsymbol{x}_{h}\in \mathcal{G}$ be fixed. We denote the lattice point $\boldsymbol{x}_{h}$ by the letter $A$, and use $B$, $C$, and $D$ to represent other vertex points of $\Omega_{\boldsymbol{x}_{h}}$ as shown in Figure \ref{Fig3}. For any function $u_h$ defined on $\mathcal{G}$, the value of $p_h u_h(\bx^*)$ for any point $\bx^*\in \Omega_{\boldsymbol{x}_{h}}$ is given by:
\begin{itemize}
	\item To begin with, we can determine the value of $u_h$ at lattice points $A$, $B$, $C$, and $D$ to evaluate $p_h u_h$. Additionally, we can use the linear Lagrangian interpolation formula to calculate $p_h u_h$ on lines $AB$ and $CD$.
	Next, we need to draw a plumb line through $\bx^*$ that intersects with $AB$ and $CD$ at points $E$ and $F$ respectively.
	\item Finally, we note that the values of $p_h u_h$ on $E$ and $F$ are given in the first step. Using linear Lagrangian interpolation formula again, we can obtain the value of $p_h u_h$ on line $EF$. Therefore, we can determine the value of $p_h u_h(\bx^*)$.
\end{itemize}

\begin{figure}[htbp]
	\centering
	\includegraphics[scale=0.5]{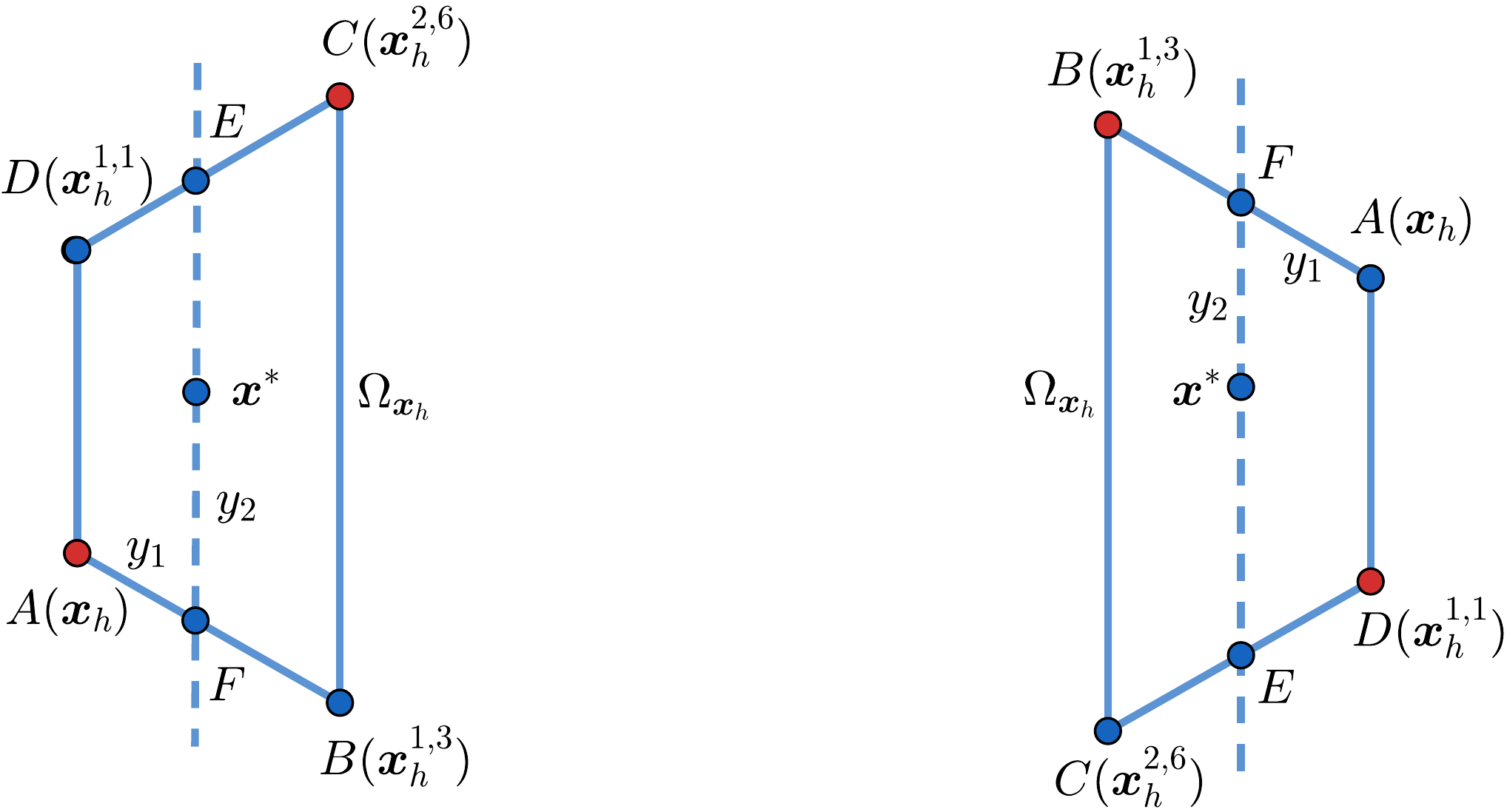}
	\caption{Lattice division 2. }
	\label{Fig3}
\end{figure}

Following the above steps, we can introduce the definition of operator $p_h$.
\begin{definition}\label{def: p_h}
	For any function $u_h$ defined on hexagonal lattice $\mathcal{G}$, define the  piece-wise linear function $p_{h}u_{h}(\boldsymbol{x}^*)$ as
	\begin{align*}\label{definition of q_{h}2}
		\begin{split}
			p_{h}u_{h}(\boldsymbol{x}^*)
			=
			&u_{h}(\boldsymbol{x}_{h})
			+ \frac{y_1}{h}
			\big(u_{h}(\boldsymbol{x}_{h}^{1,3}) - u_{h}(\boldsymbol{x}_{h})\big)
			+\frac{y_2}{y_1 + h}
			\big(u_{h}(\boldsymbol{x}_{h}^{1,1}) - u_{h}(\boldsymbol{x}_{h})\big)\\
			&+
			\frac{y_1 y_2}{h (y_1 + h)}
			\big(u_{h}(\boldsymbol{x}_{h}) - u_{h}(\boldsymbol{x}_{h}^{1,3}) - u_{h}(\boldsymbol{x}_{h}^{1,1}) + u_h (\boldsymbol{x}_{h}^{2,6})\big),
		\end{split}
	\end{align*}
for any $\boldsymbol{x}^* \in \Omega_{\boldsymbol{x}_{h}}$, $\boldsymbol{x}_{h}\in \mathcal{G}$. As shown in Figure \ref{Fig3}, the pair $(y_1, y_2)$ is determined by $\bx^*$,
\begin{align*}
	y_1 &= \frac{2}{\sqrt{3}} \abs{(\bx^* - \bx_h)\cdot \be_1},\quad
	y_2 = \frac{1}{\sqrt{3}} \abs{(\bx^* - \bx_h)\cdot \be_1}
	+
	\abs{(\bx^* - \bx_h)\cdot \be_2}.
\end{align*}

\end{definition}

\subsection{Proof of Propositions \ref{define qh} and \ref{define ph}}\label{subsec:Proof of Propositions of qh and ph}
Now let us prove the above interpolation operators $q_h$, $p_h$ constructed on hexagonal lattice  satisfy  Propositions \ref{define qh} and \ref{define ph}.

For any function $u_h$, $v_h \in \tilde{L}_h^2$, it follows from Definition \ref{def: q_h}  that
\begin{equation}\label{C5}
	\begin{split}
		\int_{R^2} q^*_h u_h \cdot q^*_h v_h \d \bx
		&=
		\sum_{\boldsymbol{x}_{h}\in \mathcal{G}} \int_{\Omega_{\boldsymbol{x}_{h}}} u_{h}(\boldsymbol{x}_{h}) \cdot v_{h}(\boldsymbol{x}_{h}) \mathrm{d}\boldsymbol{x}\\
		&=
		\frac{3\sqrt{3}}{4} h^{2}\sum_{\boldsymbol{x}_{h}\in \mathcal{G}} u_{h}(\boldsymbol{x}_{h}) \cdot v_{h}(\boldsymbol{x}_{h})
		=
		(u_h,\, v_h)_h.
	\end{split}
\end{equation}
Hence the map $q_h$ satisfies the isometry property \eqref{property 1} stated in Proposition \ref{define qh}, and it is a linear transformation from $\widetilde{L}^2_h$ to $L^2$. Additionally, properties \eqref{property 2}-\eqref{property 3} can be directly derived for $q_h$ based on its definition.

As for the operator $p_{h}$,  straightforward calculations show that
\begin{equation}\label{3}
\begin{split}
	|p_{h}u_{h}(\boldsymbol{x})|
	&\leq |u_{h}(\boldsymbol{x}_h)|
	+ |u_{h}(\boldsymbol{x}_h^{1,1})| + |u_{h}(\boldsymbol{x}_h^{1,3})|
	+ |u_{h}(\boldsymbol{x}_h^{2,6})|.
\end{split}
\end{equation}
Moreover, since $\nabla y_1 = (2/\sqrt{3},\, 0)$ and $\nabla y_2 = (1/\sqrt{3},\, 1)$, one can obtain
\begin{equation}
	|\nabla \big(p^*_{h}u_{h}(\boldsymbol{x})\big)|
	\leq
	|\widetilde{D}^{1,1}u_h(\boldsymbol{x}_{h})|
	+
	|\widetilde{D}^{1,3}u_h(\boldsymbol{x}_{h})|
	+
	|\widetilde{D}^{1,2}u_h(\boldsymbol{x}_{h}^{1,1})|.
\end{equation}
Therefore by substitution one can deduce
\begin{equation*}
\begin{split}
	\|p_{h}v_{h}\|_{L^{2}}^2
	&=
	\sum_{\boldsymbol{x}_{h}\in \mathcal{G}}
	\int_{\Omega_{\boldsymbol{x}_{h}}}|p_{h}v_{h}|^{2} \mathrm{d}\boldsymbol{x}
	\le
	\sum_{\boldsymbol{x}_{h}\in \mathcal{G}}
	\abs{\Omega_{\boldsymbol{x}_{h}}}
	\sup\limits_{\bx\in  \Omega_{\boldsymbol{x}_{h}}}
	|p_{h}u_{h}(\boldsymbol{x})|^2
	\le
	16 \|u_{h}\|_{\widetilde{L}_{h}^{2}}^2,
\end{split}
\end{equation*}
and
\begin{equation*}
\begin{split}
	\|\nabla \big(p_{h}v_{h}\big)\|_{L^{2}}^2
	&\le
	\sum_{\boldsymbol{x}_{h}\in \mathcal{G}}
	\abs{\Omega_{\boldsymbol{x}_{h}}}
	\sup\limits_{\bx\in  \Omega_{\boldsymbol{x}_{h}}}
	|\nabla \big(p_{h}u_{h}(\boldsymbol{x})\big)|
	\le
	6 \sum_{j=1}^3
	\|\widetilde{D}^{1,j} v_{h}\|_{\widetilde{L}_{h}^{2}}^2.
\end{split}
\end{equation*}
Hence $p_h$ satisfies the property \eqref{property 4} in Proposition \ref{define ph}.

Regarding property \eqref{property 5}, note that for any given $\boldsymbol{x}\in \Omega_{\boldsymbol{x}_{h}}$ where $\boldsymbol{x}_h\subset \mathcal{G}$, we can use the definitions of $p_{h}$ and $q_{h}$ to derive the following:
\begin{align*}
\abs{p_{h}u_{h}(\boldsymbol{x}) - q_{h}u_{h}(\boldsymbol{x})}
\le
h \sum_{j=1}^3 \abs{\widetilde{D}^{1,j} u_{h}}.
\end{align*}
Substituting it into the right-hand side of equation
\begin{align*}
\int_{\mathbb{R}^{2}}(p_{h}u_{h} - q_{h}u_{h})^{2}\mathrm{d}\boldsymbol{x}
=
\sum_{\boldsymbol{x}_{h}\in \mathcal{G}}
\int_{\Omega_{\boldsymbol{x}_{h}}}(p_{h}u_{h} - q_{h}u_{h})^{2}\mathrm{d}\boldsymbol{x},
\end{align*}
then one finally derives for $u_{h}\in \widetilde{H}_h^1$ that
\begin{align*}
\int_{\mathbb{R}^{2}}(p_{h}u_{h} - q_{h}u_{h})^{2}\mathrm{d}\boldsymbol{x}
\leq
h^{2}\sum_{j=1}^{3} \|\widetilde{D}^{1,j}u_{h}\|^{2}_{\widetilde{L}_{h}^{2}}.
\end{align*}
The proof of Propositon \ref{define ph} is completed.

\appendix

\section{Proof of Proposition \ref{thm: discrete pde equation}}\label{sec: A PDE type equation on lattice}

Firstly, it is easy to verify that $\bm_{h}(\boldsymbol{x}_{h},t)$ remains of unit length by multiplying equation (\ref{discrete LLG}) with $\bm_{h}$ and then deriving
\begin{equation}\label{m_h remain the unit length}
	\frac{\mathrm{d}}{\mathrm{d}t}|\bm_{h}(\boldsymbol{x}_{h},t)|^{2}=0.
\end{equation}

To prove the proposition, it suffices to prove that
\begin{equation}\label{new field}
	\frac{1}{(h^*)^{2}}\bm_{h}(\boldsymbol{x}_{h})\times\mathbf{B}_l(\boldsymbol{x}_{h})
	= \bm_{h}(\boldsymbol{x}_{h})\times\mathbf{B}^{*}_l(\boldsymbol{x}_{h}),
	\quad l = 1,2,3,4
\end{equation}
with $\mathbf{B}_l$ and $\mathbf{B}_l^*$ given in \eqref{expression of discrete effect field} and \eqref{new form of discrete effect field}, respectively. 
For the convenience of reading, we rewrite the equation \eqref{expression of discrete effect field} in the following:
\begin{equation}\label{expression of discrete effect field revisit}
	\left\{\begin{aligned}
			\mathbf{B}_1(\boldsymbol{x}_{h})
			= &
			(h^*)^2\sum_{k=1}^{3}\sum_{j=1}^{N_{k}}\frac{J_{k}}{h^2}\bm_{h}(\boldsymbol{x}_{h}^{k,j})\\
			\mathbf{B}_2(\boldsymbol{x}_{h})
			= &
			2(h^*)^2 \sum_{j=1}^{3}\frac{K}{h^2}
			\Big(\bm_{h}(\boldsymbol{x}_{h})\cdot\bm_{h}(\boldsymbol{x}_{h}^{1,j})\Big)\bm_{h}(\boldsymbol{x}_{h}^{1,j}),\\
			\mathbf{B}_3(\boldsymbol{x}_{h})
			= &
			\sum_{k=1}^{3}\sum_{j=1}^{N_{k}}\Big(L_{k}(\boldsymbol{x}_{h} ) + L_{k}(\boldsymbol{x}_{h}^{k,j})\Big)\Big(\bm_{h}(\boldsymbol{x}_{h}^{k,j})
			\cdot\mathbf{e}^{z}\Big)\mathbf{e}^{z}\\
			& +
			2\lambda(\boldsymbol{x}_{h})\Big(\bm_{h}(\boldsymbol{x}_{h})\cdot \mathbf{e}^{x}\Big) \mathbf{e}^{x},\\
			\mathbf{B}_4(\boldsymbol{x}_{h})
			= &
			\mu\mathbf{B}(\boldsymbol{x}_{h}).
	\end{aligned}\right.
\end{equation}
And the equation \eqref{new form of discrete effect field} is rewritten in \eqref{new form of discrete effect field revisit}.
Now let us calculate the left-hand side of \eqref{new field} by using \eqref{expression of discrete effect field revisit}. Note that
\begin{equation}\label{substituting x_i^kj}
	\bm_{h}(\boldsymbol{x}_{h}^{k,j}) = d_k h \widetilde{D}^{k,j} \bm_{h}(\boldsymbol{x}_{h}) + \bm_{h}(\boldsymbol{x}_{h}).
\end{equation}
Applying \eqref{substituting x_i^kj}, together with unit-length property of $\bm_{h}$, one has
\begin{align*}\label{B4}
	\frac{1}{(h^*)^{2}}\bm_{h}(\boldsymbol{x}_{h})\times\mathbf{B}_1(\boldsymbol{x}_{h})
	= &
	\sum_{k=1}^{3}\sum_{j=1}^{N_{k}}\frac{J_{k} d_k}{h}\bm_{h}(\boldsymbol{x}_{h}) \times \widetilde{D}^{k,j} \bm_{h}(\boldsymbol{x}_{h}),
	\end{align*}
and,
 \begin{align*}
	&\frac{1}{(h^*)^{2}}\bm_{h}(\boldsymbol{x}_{h})\times\mathbf{B}_2(\boldsymbol{x}_{h})\\
	= &
		2\sum_{j=1}^{3}\frac{K d_1}{h}
		\Big(1 + d_1 h\bm_{h}(\boldsymbol{x}_{h})\cdot \widetilde{D}^{1,j} \bm_{h}(\boldsymbol{x}_{h}) \Big)
		\bm_{h}(\boldsymbol{x}_{h})
		\times
		\widetilde{D}^{1,j} \bm_{h}(\boldsymbol{x}_{h}).
	\end{align*}
 Therefore combining them with \eqref{another form of delta} results in the equation \eqref{new field} when $l=1,2$. The second term in $\mathbf{B}_3$ and the term $\mathbf{B}_4$ is trivial. Now we consider the first term in $\mathbf{B}_3$ and denote it by
 \begin{equation*}
 	\mathbf{B}_3^*
 	=
 	\sum_{k=1}^{3}\sum_{j=1}^{N_{k}}\Big(L_{k}(\boldsymbol{x}_{h} ) + L_{k}(\boldsymbol{x}_{h}^{k,j})\Big)\Big(\bm_{h}(\boldsymbol{x}_{h}^{k,j})
 	\cdot\mathbf{e}^{z}\Big)\mathbf{e}^{z}.
 \end{equation*}
One can make the use of following transformation:
\begin{equation*}
    \begin{aligned}
        &\sum_{k=1}^{3}\sum_{j=1}^{N_{k}}L_{k}(\boldsymbol{x}_{h} ) \Big(\bm_{h}(\boldsymbol{x}_{h}^{k,j})
		\cdot\mathbf{e}^{z}\Big)\mathbf{e}^{z}\\
  = &
  \sum_{k=1}^{3} N_{k} L_{k}(\boldsymbol{x}_{h} ) \Big(\bm_{h}(\boldsymbol{x}_{h})
		\cdot\mathbf{e}^{z}\Big)\mathbf{e}^{z}
  + 
  \sum_{k=1}^{3}\sum_{j=1}^{N_{k}} d_k h L_{k}(\boldsymbol{x}_{h} ) \Big(\widetilde{D}^{k,j} \bm_{h}(\boldsymbol{x}_{h})
		\cdot\mathbf{e}^{z}\Big)\mathbf{e}^{z},
    \end{aligned}
\end{equation*}
and
\begin{equation*}
    \begin{aligned}
        &\sum_{k=1}^{3}\sum_{j=1}^{N_{k}}L_{k}(\boldsymbol{x}_{h}^{k,j} ) \Big(\bm_{h}(\boldsymbol{x}_{h}^{k,j})
		\cdot\mathbf{e}^{z}\Big)\mathbf{e}^{z}\\
  = &
  \sum_{k=1}^{3} N_{k} L_{k}(\boldsymbol{x}_{h} ) \Big(\bm_{h}(\boldsymbol{x}_{h})
		\cdot\mathbf{e}^{z}\Big)\mathbf{e}^{z}
  + 
  \sum_{k=1}^{3}\sum_{j=1}^{N_{k}} d_k h \Big(\widetilde{D}^{k,j} (L_{k}\bm_{h})(\boldsymbol{x}_{h})
		\cdot\mathbf{e}^{z}\Big)\mathbf{e}^{z}.
    \end{aligned}
\end{equation*}
Using the transformation above, together with \eqref{another form of delta}, it follows that
\begin{equation*}
    \begin{aligned}
        \mathbf{B}_3^*
  = &
  2 \sum_{k=1}^{3} N_{k} L_{k}(\boldsymbol{x}_{h} ) \Big(\bm_{h}(\boldsymbol{x}_{h})
		\cdot\mathbf{e}^{z}\Big)\mathbf{e}^{z}
  +
  \sum_{k=1}^{3} l_k h^2 L_{k}(\boldsymbol{x}_{h} ) \Big(\widetilde{\Delta}^{k} \bm_{h}(\boldsymbol{x}_{h})
		\cdot\mathbf{e}^{z}\Big)\mathbf{e}^{z}\\
  & + 
  \sum_{k=1}^{3} l_k h^2 \Big(\widetilde{\Delta}^{k} (L_{k}\bm_{h})(\boldsymbol{x}_{h})
		\cdot\mathbf{e}^{z}\Big)\mathbf{e}^{z}.
    \end{aligned}
\end{equation*}
Substitution of the expression above leads to equation \eqref{new field} for the case when $l=3$. 
Hence the proposition is proved.

\section{Discrete energy dissipation}
Before we prove Proposition \ref{thm: uniform estimate}, an energy dissipation property for the discrete equations \eqref{discrete LLG}-\eqref{new form of discrete effect field} is needed.
Instead of using the energy $\mathcal{H}_h[\bm_h]$ as given in \eqref{discrete energy}, we introduce another form of energy, denoted by $\mathcal{H}_h^*[\bm_{h}]$, which is obtained through a difference quotient
\begin{equation}\label{new form of energy}
	\begin{aligned}
		&\mathcal{H}_h^*[\bm_{h}]\\
		=&
		\sum_{k=1}^{3} J_k \frac{d_k^2}{4g_k^2} \Vert\widetilde{\nabla}_{k}\bm_{h}\Vert_{\widetilde{L}_{h}^{2}}^{2}
		+
		K \frac{d_1^2}{2g_1^2} \Vert\widetilde{\nabla}_{1}\bm_{h}\Vert_{\widetilde{L}_{h}^{2}}^{2}
		-
		K \frac{d_1^2}{2g_1^2} \Vert\bm_{h}\cdot\widetilde{\nabla}_{1}\bm_{h}\Vert_{\widetilde{L}_{h}^{2}}^{2}\\
		&+ \frac{h^2}{(h^*)^2} \sum_{k=1}^{3}
		\frac{d_k^2}{2g_k^2} \|  L_k^{1/2} \, 
  \widetilde{\nabla}^{k} \bm_{h}\cdot\mathbf{e}^{z}\|_{\widetilde{L}_{h}^{2}}^{2}\\
    &+ \frac{h^2}{(h^*)^2}
\sum_{k=1}^{3}\sum_{j=1}^{N_k} 
		\frac{d_k^2}{2}\Big(  
  (\widetilde{D}^{k,j} L_k) \big( \bm_{h}\cdot\mathbf{e}^{z}\big)
  \mathbf{e}^{z}
  ,\,
  \widetilde{D}^{k,j}
  \bm_{h} \Big)_h\\
		& -
		\frac{1}{(h^*)^2} \sum_{k=1}^{3} N_k \Vert L_k^{1/2}\,  \bm_{h}\cdot \mathbf{e}^{z}\Vert_{\widetilde{L}_{h}^{2}}^{2}
		- 
  \frac{1}{(h^*)^2} \Vert \lambda^{1/2}\, \bm_{h}\cdot \mathbf{e}^{x}\Vert_{\widetilde{L}_{h}^{2}}^{2}
		-
		\frac{\mu}{(h^*)^2} \big(\bm_{h},\, \mathbf{B}\big)_h.
	\end{aligned}
\end{equation}
Then we have:
\begin{proposition}\label{thm: energy inequality}
	Let $\bm_{h}$ be the solution to the equations \eqref{discrete LLG}-\eqref{new form of discrete effect field}. The following energy dissipation holds:
	\begin{equation}\label{energy ineq}
		\begin{split}
			\mathcal{H}_h^*[\bm_{h}(t)]
			+
			\alpha \int_0^t \| \bm_{h}(s)\times\mathbf{B}^*_{\rm{eff},h}(s)\|_{\widetilde{L}_{h}^{2}}^{2} \d s \le \mathcal{H}_h^*[\bm_{h}(0)],
		\end{split}
	\end{equation}
	holds for every $t\in [0, +\infty)$, where the energy $\mathcal{H}_h^*$ is given by \eqref{new form of energy}.
\end{proposition}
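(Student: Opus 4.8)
The plan is to exploit the gradient-flow structure that equation \eqref{discrete LLG} inherits from the functional $\mathcal{H}_h^*$. Concretely, I will establish the pointwise-in-time dissipation identity
\[
\frac{\d}{\d t}\mathcal{H}_h^*[\bm_h(t)]
= -\alpha\,\|\bm_h(t)\times\mathbf{B}^{*}_{\rm{eff},h}(t)\|_{\widetilde{L}_{h}^{2}}^{2},
\]
and then integrate it over $[0,t]$ to obtain \eqref{energy ineq} (in fact with equality). Throughout I use that the solution keeps unit length, namely \eqref{m_h remain the unit length}, so that $\partial_t\bm_h\perp\bm_h$ at every node.

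The heart of the argument is the \emph{discrete variational identity}
\[
\frac{\d}{\d t}\mathcal{H}_h^*[\bm_h]
= -\big(\mathbf{B}^{*}_{\rm{eff},h},\,\partial_t\bm_h\big)_h,
\]
i.e.\ that $\mathbf{B}^{*}_{\rm{eff},h}$ in \eqref{new form of discrete effect field} is exactly minus the first variation of $\mathcal{H}_h^*$ with respect to the inner product $(\cdot,\cdot)_h$. I would verify this term by term. For the exchange part $\sum_k J_k\frac{d_k^2}{4}\sum_j\|\widetilde{D}^{k,j}\bm_h\|_{\widetilde{L}_{h}^{2}}^2$, differentiation in $t$ gives $\sum_k J_k\frac{d_k^2}{2}\sum_j(\widetilde{D}^{k,j}\bm_h,\widetilde{D}^{k,j}\partial_t\bm_h)_h$; applying the one-step summation by parts from Lemma \ref{integration by parts for one step} together with the identity $\sum_j\widetilde{D}^{k,j}_B\{\widetilde{D}^{k,j}u_h\}=\frac{2l_k}{d_k^2}\widetilde{\Delta}_k u_h$ (proved exactly as in \eqref{D DB m}) collapses this to $-\sum_k l_kJ_k(\widetilde{\Delta}_k\bm_h,\partial_t\bm_h)_h=-(\mathbf{B}_1^*,\partial_t\bm_h)_h$. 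The Zeeman and single-ion anisotropy terms reproduce $\mathbf{B}_4^*$ and the first piece of $\mathbf{B}_3^{(1)}$ at once, while the biquadratic and symmetric-anisotropy terms follow the same scheme after invoking the discrete product rule. The telescoping (boundary) contributions in each summation by parts vanish because the difference quotients of $\bm_h$ lie in $\widetilde{L}_{h}^{2}$ and the coefficients are $L^1$-bounded by Assumption \ref{assume}, which also guarantees that $\mathcal{H}_h^*$ is finite and differentiable in $t$.

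With the variational identity in hand, I substitute \eqref{discrete LLG}:
\[
\frac{\d}{\d t}\mathcal{H}_h^*
= \big(\mathbf{B}^{*}_{\rm{eff},h},\,\bm_h\times\mathbf{B}^{*}_{\rm{eff},h}\big)_h
+ \alpha\big(\mathbf{B}^{*}_{\rm{eff},h},\,\bm_h\times(\bm_h\times\mathbf{B}^{*}_{\rm{eff},h})\big)_h.
\]
The first pairing vanishes nodewise since $\bm_h\times\mathbf{B}^{*}_{\rm{eff},h}\perp\mathbf{B}^{*}_{\rm{eff},h}$. For the second, the triple-product expansion $\bm_h\times(\bm_h\times\mathbf{B}^{*}_{\rm{eff},h})=(\bm_h\cdot\mathbf{B}^{*}_{\rm{eff},h})\bm_h-\mathbf{B}^{*}_{\rm{eff},h}$ together with $|\bm_h|=1$ gives $\mathbf{B}^{*}_{\rm{eff},h}\cdot\big(\bm_h\times(\bm_h\times\mathbf{B}^{*}_{\rm{eff},h})\big)=-|\bm_h\times\mathbf{B}^{*}_{\rm{eff},h}|^2$ at every node. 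Hence $\frac{\d}{\d t}\mathcal{H}_h^*=-\alpha\|\bm_h\times\mathbf{B}^{*}_{\rm{eff},h}\|_{\widetilde{L}_{h}^{2}}^2\le 0$, and integrating over $[0,t]$ yields \eqref{energy ineq}.

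The main obstacle is the term-by-term verification of the variational identity for the genuinely nonlinear and variable-coefficient contributions. For the biquadratic term $-K\frac{d_1^2}{2g_1^2}\|\bm_h\cdot\widetilde{\nabla}_1\bm_h\|_{\widetilde{L}_{h}^{2}}^2$, one must show that its first variation, after using $|\bm_h|=1$ and summation by parts, collapses precisely to the nonlinear second term $2K\sum_{j}(\bm_h\cdot\widetilde{D}^{1,j}\bm_h)\widetilde{D}^{1,j}\bm_h$ of $\mathbf{B}_2^*$. For the symmetric-anisotropy energy the coefficient $L_k$ depends on $\bx_h$, so the discrete Leibniz rule for $\widetilde{D}^{k,j}(L_k\bm_h)$ does not commute with $\widetilde{D}^{k,j}$; tracking the resulting coefficient-difference terms is what produces the mixed contribution $\sum_{k,j}\frac{d_k^2}{2}\big((\widetilde{D}^{k,j}L_k)(\bm_h\cdot\mathbf{e}^{z})\mathbf{e}^{z},\,\widetilde{D}^{k,j}\bm_h\big)_h$ in \eqref{new form of energy} and assembles into $\mathbf{B}_3^{(2)}$. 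Carefully bookkeeping the normalization constants $d_k,g_k,l_k,N_k$ so that every variation lands exactly on the corresponding piece of \eqref{new form of discrete effect field} is where essentially all the effort lies.
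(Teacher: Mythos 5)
Your proposal is correct and takes essentially the same route as the paper: both reduce \eqref{energy ineq} to the discrete variational identity $-\big(\partial_t\bm_h,\,\mathbf{B}^{*}_{\rm{eff},h}\big)_h=\frac{\d}{\d t}\mathcal{H}_h^*[\bm_h]$, verified term by term with the one-step summation by parts and Green's formula, then combine it with the pointwise cross-product identities and integrate in time. The only (minor) divergence is the biquadratic term: the paper collapses it by pairing the contributions at $\bx_h$ and $\bx_h^{1,j}$ and exploiting the unit-length symmetry $\bm_h(\bx_h)\cdot\widetilde{D}^{1,j}\bm_h(\bx_h)=\bm_h(\bx_h^{1,j})\cdot\widetilde{D}^{1,j}\bm_h(\bx_h^{1,j})$, while you propose summation by parts plus $|\bm_h|=1$, which amounts to the same computation.
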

\begin{proof}
	Since
	\begin{align*}
		&\Big(\bm_{h}\times(\bm_{h}\times\mathbf{B}^*_{\rm{eff},h})\Big) \cdot \mathbf{B}^*_{\rm{eff},h}
		=-\big| \bm_{h}\times\mathbf{B}^*_{\rm{eff},h}\big|^{2},
	\end{align*}
	taking the $\widetilde{L}_{h}^{2}$ inner-product of  \eqref{discrete LLG} and $\mathbf{B}^*_{\rm{eff},h}$, one has
	\begin{align}\label{m_t cdot effective field}
		- \Big(\frac{\d\bm_{h}}{\d t},\,\, \mathbf{B}^*_{\rm{eff},h}\Big)_h
		+
		\alpha \| \bm_{h}(t)\times\mathbf{B}^*_{\rm{eff},h}(t)\|_{\widetilde{L}_{h}^{2}}^{2} = 0.
	\end{align}
We assert that
\begin{equation}\label{partial_t m times B}
	- \Big(\frac{\d\bm_{h}}{\d t},\,\, \mathbf{B}^*_{\rm{eff},h}\Big)_h
	=
	\frac{\mathrm{d}}{\mathrm{d}t} \mathcal{H}_h^*[\bm_{h}].
\end{equation}
Then the inequality \eqref{energy ineq} follows by substituting \eqref{partial_t m times B} into \eqref{m_t cdot effective field} and integration over $[0,t]$.

Now let us compute left-hand side of \eqref{partial_t m times B} term by term by applying \eqref{new form of discrete effect field}. For the convenience of reading, we rewrite the equation \eqref{new form of discrete effect field} here:
\begin{equation}\label{new form of discrete effect field revisit}
	\left\{\begin{aligned}
		&\mathbf{B}_1^*(\boldsymbol{x}_{h})
		= 
		\sum_{k=1}^{3} l_{k}J_k\widetilde{\Delta}_{k}\bm_{h}(\boldsymbol{x}_{h})\\
		&\mathbf{B}_2^*(\boldsymbol{x}_{h})
		= 
		2 l_1 K \widetilde{\Delta}_{1}\bm_{h}(\boldsymbol{x}_{h})
		+
		2K\sum_{j=1}^{3}\Big(\bm_{h}(\boldsymbol{x}_{h})\cdot \widetilde{D}^{1,j}\bm_{h}(\boldsymbol{x}_{h})\Big)\widetilde{D}^{1,j}\bm_{h}(\boldsymbol{x}_{h}),\\
		&\mathbf{B}^{(1)}_3(\boldsymbol{x}_{h})
		= 
		\frac{2}{(h^*)^2} \lambda (\boldsymbol{x}_{h}) \Big(\bm_{h}(\boldsymbol{x}_{h})\cdot \mathbf{e}^{x}\Big) \mathbf{e}^{x}
		+
		\frac{2}{(h^*)^2} \sum_{k=1}^{3} N_k L_k (\boldsymbol{x}_{h}) \Big(\bm_{h}(\boldsymbol{x}_{h})\cdot\mathbf{e}^{z}\Big)
		\mathbf{e}^{z},\\
		&\mathbf{B}^{(2)}_3(\boldsymbol{x}_{h})
		=
		\frac{h^2}{(h^*)^2} \sum_{k=1}^{3}
		l_{k} \Big\{
		L_k (\boldsymbol{x}_{h}) \Big(\widetilde{\Delta}_{k}\bm_{h}(\boldsymbol{x}_{h})\cdot\mathbf{e}^{z}\Big)\mathbf{e}^{z}
		+
		\Big(\widetilde{\Delta}_{k} (L_k\bm_{h})(\boldsymbol{x}_{h})\cdot\mathbf{e}^{z}\Big)\mathbf{e}^{z}\Big\}\\
		&\mathbf{B}^*_4(\boldsymbol{x}_{h})
		= 
		\frac{\mu}{(h^*)^2}\mathbf{B}(\boldsymbol{x}_{h}).
	\end{aligned}\right.
\end{equation}
For the term $\mathbf{B}^*_{1}$ in \eqref{new form of discrete effect field revisit}, we apply \eqref{Green fomular} and deduce that
\begin{equation*}
	\begin{aligned}
	-\Big(\frac{\d\bm_{h}}{\d t},\ \mathbf{B}^*_{1} \Big)_h
	=&
	-\sum_{k=1}^{3} J_k\frac{d_k^2}{2}\Big(\frac{\d\bm_{h}}{\d t},\ \widetilde{D}^{k,j} \big\{ \widetilde{D}_{B}^{k,j} \bm_{h} \big\}  \Big)_h\\
 =&
	\sum_{k=1}^{3} J_k\frac{d_k^2}{2}\Big( \widetilde{D}^{k,j} 
 \frac{\d\bm_{h}}{\d t},\ \widetilde{D}^{k,j} \bm_{h} \Big)_h
	=
 \frac{1}{2}\frac{\mathrm{d}}{\mathrm{d}t}\sum_{k=1}^{3} J_k \frac{d_k^2}{2g_k^2} \Vert\widetilde{\nabla}_{k}\bm_{h}\Vert_{\widetilde{L}_{h}^{2}}^{2},
	\end{aligned}
	\end{equation*}
where we used \eqref{define nabla} and Lemma \ref{integration by parts for one step}. The first term in $\mathbf{B}^*_{2}$ can be checked in a similar way.
For the second term in $\mathbf{B}^*_{2}$, we assert that
\begin{equation}\label{difficult term in energy ineq}
	\begin{aligned}
	&- \sum_{j=1}^3\Big(\frac{\d\bm_{h}}{\d t},\,\, \big(\bm_{h}\cdot \widetilde{D}^{1,j}\bm_{h}\big)\widetilde{D}^{1,j}\bm_{h}\Big)_h\\
	= &
	-\frac{d_1^2}{4g_1^2} \frac{\d}{\d t} \big\Vert\bm_{h}(\boldsymbol{x}_{h})\cdot \widetilde{\nabla}_{1}\bm_{h}(\boldsymbol{x}_{h})\big\Vert_{\widetilde{L}_{h}^{2}}^2.
	\end{aligned}
\end{equation}
To clarify, we can calculate the inner product on the left-hand side of equation \eqref{difficult term in energy ineq} for the interior and gather all terms that involve points $\boldsymbol{x}_{h}$ and its neighbor $\boldsymbol{x}_{h}^{1,j}$. This yields two resulting terms:
\begin{equation}\label{term 1 in energy ineq}
	-\Big(\bm_{h}(\boldsymbol{x}_{h})\cdot \widetilde{D}^{1,j}\bm_{h}(\boldsymbol{x}_{h})\Big)
	\cdot
	\Big(\frac{\d\bm_{h}}{\d t}(\boldsymbol{x}_{h})
	\cdot
	\widetilde{D}^{1,j}\bm_{h}(\boldsymbol{x}_{h})\Big),
\end{equation}
and
\begin{equation}\label{term 2 in energy ineq}
	-\Big(\bm_{h}(\boldsymbol{x}_{h}^{1,j})\cdot \widetilde{D}^{1,j}\bm_{h}(\boldsymbol{x}_{h}^{1,j})\Big)
	\cdot
	\Big(\frac{\d\bm_{h}}{\d t}(\boldsymbol{x}_{h}^{1,j})
	\cdot
	\widetilde{D}^{1,j}\bm_{h}(\boldsymbol{x}_{h}^{1,j})\Big).
\end{equation}
By using exchange of notation in \eqref{exchange express of difference} and unit-length property, adding \eqref{term 1 in energy ineq} and \eqref{term 2 in energy ineq} together equals
\begin{equation*}
	-\frac{1}{2} \frac{\d}{\d t} \big\vert\bm_{h}(\boldsymbol{x}_{h})\cdot \widetilde{D}^{1,j}\bm_{h}(\boldsymbol{x}_{h})\big\vert^2.
\end{equation*}
Then equation \eqref{difficult term in energy ineq} follows by the notation in \eqref{define nabla}. 

Finally, one can check that the computation of terms $\mathbf{B}^{(1)}_{3}$ and $\mathbf{B}^*_{4}$ is trivial. As for the term $\mathbf{B}^{(2)}_{3}$, we can deduce from \eqref{define nabla} and Lemma \ref{integration by parts for one step}:
\begin{equation*}
    \begin{aligned}
		l_{k} \Big( 
  L_k(\boldsymbol{x}_{h}) &
  \big(\widetilde{\Delta}_{k}\bm_{h}(\boldsymbol{x}_{h})\cdot\mathbf{e}^{z}\big)
  \mathbf{e}^{z}
  ,\,
  \frac{\d}{\d t}\bm_{h}(\boldsymbol{x}_{h}) \Big)_h \\
  = & -
	\sum_{j=1}^{N_k} 
		\frac{d_k^2}{2}\Big(  
  \big(\widetilde{D}^{k,j} \bm_{h}(\boldsymbol{x}_{h})\cdot\mathbf{e}^{z}\big)
  \mathbf{e}^{z}
  ,\,
  \widetilde{D}^{k,j} \big\{L_k(\boldsymbol{x}_{h}) 
  \frac{\d}{\d t} \bm_{h}(\boldsymbol{x}_{h})\big\} \Big)_h,
    \end{aligned}
\end{equation*}
and,
\begin{equation*}
    \begin{aligned}
		l_{k} \Big( 
  \big( \widetilde{\Delta}_{k} (L_k&\bm_{h})(\boldsymbol{x}_{h}) \cdot\mathbf{e}^{z}\big) \mathbf{e}^{z}
  ,\,
  \frac{\d}{\d t}\bm_{h}(\boldsymbol{x}_{h}) \Big)_h\\
  = & -
	\sum_{j=1}^{N_k} 
		\frac{d_k^2}{2} \Big(  
  \big(\widetilde{D}^{k,j} (L_k\bm_{h})(\boldsymbol{x}_{h})\cdot\mathbf{e}^{z}\big)
  \mathbf{e}^{z}
  ,\,
  \widetilde{D}^{k,j}
  \frac{\d}{\d t} \bm_{h}(\boldsymbol{x}_{h}) \Big)_h.
    \end{aligned}
\end{equation*}
Therefore summing up one can derive
\begin{equation*}
    \begin{aligned}
-\Big(\frac{\d\bm_{h}}{\d t},\ \mathbf{B}^{(2)}_{3} \Big)_h
  = &  \frac{h^2}{(h^*)^2} \frac{d_k^2}{2}
	\frac{\d}{\d t}  \sum_{j=1}^{N_k} 
		 \Big(  
  \big(\widetilde{D}^{k,j} (L_k\bm_{h})(\boldsymbol{x}_{h})\cdot\mathbf{e}^{z}\big)
  \mathbf{e}^{z}
  ,\,
  \widetilde{D}^{k,j}
  \bm_{h}(\boldsymbol{x}_{h}) \Big)_h,
    \end{aligned}
\end{equation*}
where the right-hand side can be written as
\begin{equation*}
 \frac{h^2}{(h^*)^2} \frac{d_k^2}{2} \bigg\{
	\frac{\d}{\d t} 
		\frac{1}{g_k^2} \|  L_k^{1/2} \,
  \widetilde{\nabla}^{k} \bm_{h}(\boldsymbol{x}_{h})\cdot\mathbf{e}^{z}\|_{\widetilde{L}_{h}^{2}}^{2}
    +  
	\frac{\d}{\d t} \sum_{j=1}^{N_k} 
		\Big(  
  (\widetilde{D}^{k,j} L_k) \big( \bm_{h}\cdot\mathbf{e}^{z}\big)
  \mathbf{e}^{z}
  ,\,
  \widetilde{D}^{k,j}
  \bm_{h} \Big)_h \bigg\}.
\end{equation*}
This completes the proof.
\end{proof}

\section{Proof of Proposition \ref{thm: uniform estimate}}\label{sec:estimation}
The equation \eqref{m_h remain the unit length} implies that $|\bm_{h}|=1$, and the energy inequality \eqref{energy ineq} deduce
	\begin{align}\label{energy inequality}
		\mathcal{H}_h^*[\bm_{h}(t)] \le \mathcal{H}_h^*[\bm_{h}(0)].
	\end{align}
 Notice that using the assumption \ref{assume} and $|\bm_{h}|=1$, we have the estimate:
 \begin{equation}
     \begin{aligned}
		\frac{1}{(h^*)^2} \sum_{k=1}^{3} N_k \Vert L_k^{1/2} \bm_{h}\cdot \mathbf{e}^{z}\Vert_{\widetilde{L}_{h}^{2}}^{2}
		+
  \frac{1}{(h^*)^2} \Vert \lambda^{1/2} \bm_{h}\cdot \mathbf{e}^{x}\Vert_{\widetilde{L}_{h}^{2}}^{2}
		+
		\frac{\mu}{(h^*)^2} \big(\bm_{h},\, \mathbf{B}\big)_h
  \le C,
     \end{aligned}
 \end{equation}
 where constant $C$ depends only on  $\Vert \mathbf{B} \Vert_{L^1}$, $\|L\|_{L^1}$ and $\|\lambda\|_{L^1}$.
	Thus applying \eqref{new form of energy}-\eqref{energy ineq}, it leads to
 \begin{equation*}
     \begin{aligned}
         \sum_{k=1}^{3} J_k \frac{d_k^2}{4g_k^2} \Vert\widetilde{\nabla}_{k}\bm_{h}\Vert_{\widetilde{L}_{h}^{2}}^{2}
		+
		K \frac{d_1^2}{2g_1^2} \Big(\Vert\widetilde{\nabla}_{1}\bm_{h}\Vert_{\widetilde{L}_{h}^{2}}^{2}
		-\Vert\bm_{h}\cdot\widetilde{\nabla}_{1}\bm_{h}\Vert_{\widetilde{L}_{h}^{2}}^{2}\Big)& \\
		+ \frac{h^2}{(h^*)^2} \sum_{k=1}^{3}
		\frac{d_k^2}{2g_k^2} \|  L_k^{1/2} \, 
  \widetilde{\nabla}^{k} \bm_{h}\cdot\mathbf{e}^{z}\|_{\widetilde{L}_{h}^{2}}^{2}
  &\le
			\mathcal{H}_h^*[\bm_{h}(0)]
			+ C.
     \end{aligned}
 \end{equation*}
	Here, the second term on left-hand side is non-negative. Hence,
	\begin{equation*}
		 \sum_{k=1}^{3} J_k \frac{d_k^2}{4g_k^2} \Vert\widetilde{\nabla}_{k}\bm_{h}\Vert_{\widetilde{L}_{h}^{2}}^{2}
		\le
		\mathcal{H}_h^*[\bm_{h}(0)] + C,
	\end{equation*}
	which leads to the second inequality in \eqref{bound of gradient}.

On the other hand, the equation \eqref{discrete LLG} implies that
	\begin{equation*}
		\int_{0}^{T} \left\|\frac{\partial\bm_{h}}{\partial t} \right\|_{\widetilde{L}_{h}^{2}}^2 \d t
		= (1 + \alpha^2)
		\int_{0}^{T} \|\bm_{h}\times\mathbf{B}^*_{\rm{eff},h}\|_{\widetilde{L}_{h}^{2}}^2 \d t.
	\end{equation*}
Hence, the inequality \eqref{thm: uniform estimate} is also proved, by virtue of Proposition
 \ref{thm: energy inequality}. The proof of Proposition \ref{thm: uniform estimate} is completed.

\section*{Acknowledgments}
This work was supported by NSFC grant 11971021 (J. Chen). The work of Zhiwei Sun is supported by China Scholarship Council grant 202006920083.

\bibliographystyle{amsplain}
\bibliography{reference}

\end{document}